\newcounter{resultnum}[section]
\newtheorem{conclusion}{Conclusion}[section]
\newcounter{conclusionnum}[section]
\newcounter{conditionnum}[section]
\newcounter{conjecturenum}[section]
\newcounter{examplenum}[section]
\newcounter{exercisenum}[section]
\newtheorem{lemma}{Lemma}[section]
\newcounter{lemmanum}[section]
\newcounter{notationnum}[section]
\newtheorem{theorem}{Theorem}[section]
\newcounter{theoremnum}[section]
\newtheorem{definition}{Definition}[section]
\newcounter{definitionnum}[section]
\newtheorem{corollary}{Corollary}[section]
\newcounter{corollarynum}[section]
\newtheorem{remark}{Remark}[section]
\newcounter{remarknum}[section]
\newtheorem{proposition}{Proposition}[section]
\newcounter{propositionnum}[section]
\newcounter{acknowledgementnum}[section]
\newcounter{algorithmnum}[section]
\newcounter{axiomnum}[section]
\newtheorem{claim}{Claim}[section]
\newcounter{claimnum}[section]
\newcounter{summarynum}[section]
\newcounter{problemnum}[section]
\newenvironment{proof}[1][]{\textbf{Proof.} }{}
\begin{document}

\title{Decoupling of Field Equations in
Einstein and Modified Gravity}

\author{ {Sergiu I. Vacaru} }

\address{Alexandru Ioan Cuza University at Ia\c si (Yassy), UAIC,\\ Alexandru Lapu\c sneanu street, nr. 14, Corpus R, office 323;
Ia\c si, Romania, 700057}

\ead{sergiu.vacaru@uaic.ro}


\begin{abstract}
This paper is concerned with giving the proof that there is a general
decoupling property of vacuum and nonvacuum gravitational field equations in
Einstein gravity and $f(R,T)$--modifications. The constructions are possible
in terms of geometric and physical objects adapted to certain nonholonomic
2+2 splitting with local fibred structure. This allows us to generate exact
and/or parametric depending solutions with generic off--diagonal metrics and
generalized, or Levi--Civita, connections. Different classes of modified
spacetimes are determined by corresponding generating and integration
functions depending, in general, on all space and time coordinates and may
possess, or not, Killing symmetries. The initial data sets for the Cauchy
problem and their global properties are analyzed. There are formulated the
criteria of evolution with spacetime splitting and decoupling of fundamental
field equations. Examples of exact solutions defining ellipsoid deformations
of black hole metrics and solitonic configurations are provided.
%



\end{abstract}


\section{Introduction}

The issue of constructing exact and approximate solutions of (modified)
gravitational and matter field equations is of interest in
mathematical physics and for various applications in modern cosmology and
astrophysics. Mathematically, the fundamental field equations in gravitational
physics are defined by sophisticate systems of nonlinear partial
differential equations (PDE) which are very difficult to be integrated and
studied in general forms.

In this work, we address again and develop a geometric approach (the
so--called anholonomic frame deformation method, AFDM, see \cite%
{ijgmmp,vexsol1,vexsol2}) to constructing exact solutions in gravity
described by generic off--diagonal and nonlinear gravitational, gauge and
scalar field interactions. Such a geometric method allows us to generate
exact solutions when the coefficients of generic off--diagonal metrics and
various classes of connections depend on all spacetime coordinates. Our goal
is to provide new results on the decoupling property\footnote{%
It is used also the term \textquotedblright separation\textquotedblright\ of
equations, which should not be confused with separation of variables.} of
modified gravity equations and formal integration of such nonlinear systems
of PDEs. Certain issues on the Cauchy problem and decoupling of
gravitational equations and the initial data sets and nonholonomic evolution
(with non--integrable constraints and/or with respect to anholonomic frames)
will be analyzed. Examples of off--diagonal solutions for modified black
holes/ellpsoids and solitonic configurations will be given.

Section \ref{s2} contains geometric preliminaries on modified gravity
theories written in nonholonomic variables. In Section \ref{s3} we provide
the main Theorems on decoupling the gravitational field equations for
generic off--diagonal metrics with one Killing symmetry. We consider
extensions to "non--Killing" configurations with coefficients depending on
all set of four coordinates in section \ref{s4}. Section \ref{s5} contains a
study of the Cauchy problem in connection to the decoupling property of
gravitational field equations. Explicit examples of generic off--diagonal
exact solutions are studied in the next two sections. In Section \ref{s6}, there
are constructed nonholonomic vacuum deformations of black holes determined
by off--diagonal gravitational interactions. Ellipsoid--solitonic
configurations are analyzed in Section \ref{s7}. The most important formulas
and computations which are necessary to prove the decoupling property are
given in Appendix \ref{sb}.

\section{Einstein and Modified Gravity in Nonholonomic Variables}

\label{s2} We provide geometric preliminaries on $f(\ ^{s}R,T)$--gravity
models written in nonholonomic variables.\footnote{%
Such a formulation is necessary for proving the main results on decoupling
and integrating in certain general forms the gravitational field equations,
see sections \ref{s3} and \ref{s4}.} In this work, the scalar curvature $\
^{s}R$ is constructed for an auxiliary connection $\mathbf{D}$ completely
defined by the metric structure $\mathbf{g}$ and $T$ is the trace of the
energy--momentum tensor for matter fields. If $\mathbf{D}=\nabla ,$ where $%
\nabla $ is the Levi--Civita connection, we get as particular cases the
theories studied, for instance, in Ref. \cite{odintsov}, when $\ ^{s}R=R$ is
just the scalar curvature of a pseudo--Riemannian space. For $f(\
^{s}R,T)=R, $ such a modified gravity model transforms into the "standard"
Einstein gravity theory.

\subsection{Nonholonomic 2+2 splitting}

In general relativity (GR), the curved spacetime $\left( V,\mathbf{g}\right)
$ is defined by a pseudo--Riemannian manifold $V$ endowed with a Lorentzian
metric $\mathbf{g}$ as a solution of the Einstein equations\footnote{%
We assume that readers are familiar with basic concepts and results on
mathematical relativity and methods of constructing exact solutions
outlined, for instance, in above mentioned monographs and reviews.}. The
space of classical physical events is modelled as a Lorentzian four
dimensional, 4-d, manifold $\mathcal{V}$ (of necessary smooth class,
Housdorf and paracompact one) when the symmetric 2--covariant tensor $%
\mathbf{g}=\{g_{\alpha \beta }\}$ defines in each point $u\in \mathcal{V}$
and nondegenerate bilinear form on the tangent space $T_{u}\mathcal{V},$ for
instance, of signature $(+++-)$. The assumption that $T_{u}\mathcal{V}$ has
its prototype (local fiber) the Minkowski space $\mathbb{R}^{3,1}$ leads at
a causal character of positive/negative/null vectors on $\mathcal{V},$ i.e.
for the module $\mathcal{X}$ of vectors fields $X,Y,...\in \mathcal{X}(%
\mathcal{V}),$ which is similar to that in special relativity.

Let us denote by $e_{\alpha }$ and $e^{\beta }$ a local frame and,
respectively, its dual frame [we can consider orthonormal (co) bases], where
Greek indices $\alpha ,\beta ,..$ may be abstract ones, or running values $%
1,2,3,4.$ For a coordinate base $u=\{u^{\alpha }\}$ on a chart $U\subset
\mathcal{V},$ we can write $e_{\alpha }=\partial _{\alpha }=\partial
/\partial u^{\beta }$ and $e^{\beta }=du^{\beta }$ and, for instance, define
the coefficients of a vector $X$ and a metric $\mathbf{g,}$ respectively, in
the forms $X=X^{\alpha }e_{\alpha }$ and
\begin{equation}
\mathbf{g}=g_{\alpha \beta }(u)e^{\alpha }\otimes e^{\beta },  \label{mst}
\end{equation}%
where $g_{\alpha \beta }:=\mathbf{g}(e_{\alpha },e_{\beta }).$\footnote{%
The summation rule on repeating low--up indices will be applied if the
contrary is not stated.} We consider bases with non--integrable
(equivalently, nonholonomic/anholonomic) $2+2$ splitting for conventional,
horizontal, h, and vertical, v, decomposition, when for the tangent bundle $T%
\mathcal{V}$ $:=\bigcup\nolimits_{u}T_{u}\mathcal{V}$ a Whitney sum
\begin{equation}
\mathbf{N}:\ T\mathcal{V}=h\mathcal{V}\oplus v\mathcal{V}  \label{ncon}
\end{equation}%
is globally defined. Such a nonholonomic distribution is determined locally
by its coefficients $N_{i}^{a}(u),$ when $\mathbf{N=}N_{i}^{a}(x,y)dx^{i}%
\otimes \partial /\partial y^{a},$ where $u^{\alpha }=(x^{i},y^{a})$ splits
into h--coordinates, $x=(x^{i}),$ and v--coordinates, $y=(y^{a}),$ with
indices running, respectively, values $i,j,k,...=1,2$ and $a,b,c,...=3,4.$%
\footnote{We note that the 2+2 splitting can be considered as an alternative to the
well known 3+1 splitting. The first one is convenient, for instance, for
constructing generic off--diagonal solutions and elaborating models of
deformation and/or A--brane quantization of gravity, but the second one is
more important for canonical/loop quantization etc.}

A spacetime $\left( V,\mathbf{g}\right) $ can be equipped with a
non--integrable fibred structure (\ref{ncon}) and such a manifold is called
nonholonomic (equivalently, N--anholo\-nomic). We use ''boldface'' letters
in order to emphasize that certain spaces and geometric
objects/constructions are ''N--adapted'', i.e. adapted to a
h--v--splitting. The geometric objects are called distinguished (in brief,
d--objects, d--vectors, d--tensors etc). For instance, we write a d--vector
as $\mathbf{X}=(hX,vX)$ for a nonholonomic Lorentz manifold/spacetime $%
\left( \mathbf{V},\mathbf{g}\right) $.

On a spacetime $\left( \mathbf{V},\mathbf{g}\right) ,$ we can perform/adapt
the geometric constructions using "N--elongated"' local bases (partial
derivatives), $\mathbf{e}_{\nu }=(\mathbf{e}_{i},e_{a}),$ and cobases
(differentials), $\mathbf{e}^{\mu }=(e^{i},\mathbf{e}^{a}),$ when
\begin{eqnarray}
\mathbf{e}_{i} &=&\partial /\partial x^{i}-\ N_{i}^{a}(u)\partial /\partial
y^{a},\ e_{a}=\partial _{a}=\partial /\partial y^{a},  \label{nader} \\
\mbox{ and \  }e^{i} &=&dx^{i},\ \mathbf{e}^{a}=dy^{a}+\ N_{i}^{a}(u)dx^{i}.
\label{nadif}
\end{eqnarray}%
Such (co) frame structures depend linearly on N--connection coefficients
being, in general, nonholonomic. For instance, the basic vectors (\ref{nader}%
) satisfy certain nontrivial nonholonomy relations
\begin{equation}
\lbrack \mathbf{e}_{\alpha },\mathbf{e}_{\beta }]=\mathbf{e}_{\alpha }%
\mathbf{e}_{\beta }-\mathbf{e}_{\beta }\mathbf{e}_{\alpha }=W_{\alpha \beta
}^{\gamma }\mathbf{e}_{\gamma },  \label{nonholr}
\end{equation}%
with  nontrivial anholonomy coefficients
 $W_{ia}^{b}=\partial _{a}N_{i}^{b},W_{ji}^{a}=\Omega _{ij}^{a}=\mathbf{e}%
_{j}\left( N_{i}^{a}\right) -\mathbf{e}_{i}(N_{j}^{a})$.

Any spacetime metric $\mathbf{g}=\{g_{\alpha \beta }\}$ (\ref{mst}), via
frame/coordinate transforms can be represented equivalently in N--adapted
form as a d--metric
\begin{equation}
\ \mathbf{g}=\ g_{ij}(x,y)\ e^{i}\otimes e^{j}+\ g_{ab}(x,y)\ \mathbf{e}%
^{a}\otimes \mathbf{e}^{b},  \label{dm}
\end{equation}%
or, with respect to a coordinate local cobasis $du^{\alpha
}=(dx^{i},dy^{a}), $ as an off--diagonal metric
 $\mathbf{g}=\underline{g}_{\alpha \beta }du^{\alpha }\otimes du^{\beta }$,
where
\begin{equation}
\underline{g}_{\alpha \beta }=\left[
\begin{array}{cc}
g_{ij}+N_{i}^{a}N_{j}^{b}g_{ab} & N_{j}^{e}g_{ae} \\
N_{i}^{e}g_{be} & g_{ab}%
\end{array}%
\right] .  \label{ansatz}
\end{equation}%
A metric $\mathbf{g}$ is generically off--diagonal, if (\ref{ansatz}) can not be
diagonalized via coordinate transforms. Ansatzes of this type are used in
Kaluza--Klein gravity when $N_{i}^{a}(x,y)=\Gamma _{bi}^{a}(x)y^{a}$ and $%
y^{a}$ are "'compactified" extra--dimensions coordinates, or in Finsler
gravity theories, see details in \cite{overduin,vexsol2}. In this work, we
restrict our considerations only to the 4--d gravity theories. The principle
of general covariance in GR, allows us to consider any frame/coordinate
transforms and write a spacetime metric $\mathbf{g}$, equivalently, in any of the
above form (\ref{mst}), (\ref{ansatz}) and/or (\ref{dm}). The last mentioned
parametrization will allow us to prove a very important property of
decoupling of gravitational field equations with respect to N--adapted bases
(\ref{nader}) and (\ref{nadif}).

Via frame/coordinate transforms $e_{\alpha }=e_{\ \alpha }^{\alpha ^{\prime
}}(x,y)e_{\alpha ^{\prime }},$ $\underline{g}_{\alpha \beta }=e_{\ \alpha
}^{\alpha ^{\prime }}e_{\ \beta }^{\beta ^{\prime }}\ \underline{g}_{\alpha
^{\prime }\beta ^{\prime }},$ a metric $\mathbf{g}$ (\ref{dm}) can be
written in a form with separation of v--coordinates and nontrivial vertical
conformal transforms,
\begin{eqnarray}
\mathbf{g} &=&g_{i}dx^{i}\otimes dx^{i}+\omega ^{2}h_{a}\underline{h}_{a}%
\mathbf{e}^{a}\otimes \mathbf{e}^{a},  \label{ans1} \\
\mathbf{e}^{3} &=&dy^{3}+\left( w_{i}+\underline{w}_{i}\right) dx^{i},\
\mathbf{e}^{4}=dy^{4}+\left( n_{i}+\underline{n}_{i}\right) dx^{i},  \notag
\end{eqnarray}%
\begin{eqnarray}
\mbox{were\ }g_{i} &=&g_{i}(x^{k}),g_{a}=\omega ^{2}(x^{i},y^{c})\
h_{a}(x^{k},y^{3})\underline{h}_{a}(x^{k},y^{4}),  \notag \\
N_{i}^{3} &=&w_{i}(x^{k},y^{3})+\underline{w}%
_{i}(x^{k},y^{4}),N_{i}^{4}=n_{i}(x^{k},y^{3})+\underline{n}%
_{i}(x^{k},y^{4}),  \label{paramdcoef}
\end{eqnarray}%
are functions of necessary smooth class which will be defined in a form to
generate solutions of gravitational field equations\footnote{%
There is not summation on repeating "low" indices $a$ in formulas (\ref{paramdcoef}) but such a
summation is considered for crossing \textquotedblright
up--low\textquotedblright\ indices $i$ and $a$ in (\ref{ans1})). We shall
underline a function if it positively depends on $y^{4}$ but not on $y^{3}$
and write, for instance, $\underline{n}_{i}(x^{k},y^{4})$.}.

The aim of this section is to prove that the gravitational field equations
in the vacuum cases and certain very general classes of matter field sources
decouple for parameterizations of metrics in the form (\ref{paramdcoef}).

\subsection{Torsions and curvatures}

In a general case, a metric--affine manifold $V$ is endowed with a metric
structure $\mathbf{g}$ and an affine (linear) connection structure $D$ (as a
covariant differentiation operator). A linear connection gives us the
possibility to compute the directional derivative $D_{X}Y$ of a vector field
$Y$ in the direction of $X$. It is characterized by three fundamental
geometric objects

\begin{enumerate}
\item the torsion field is (by definition) $\mathcal{T}(X,Y):=D_{\mathbf{X}%
}Y-D_{\mathbf{Y}}X-[X,Y];$

\item the curvature field is $\mathcal{R}(X,Y):=D_{\mathbf{X}}D_{\mathbf{Y}%
}-D_{\mathbf{Y}}D_{\mathbf{X}}-D_{\mathbf{[X,Y]}};$

\item the nonmetricity field is $\mathcal{Q}(X):=D_{\mathbf{X}}\mathbf{g.}$
\end{enumerate}

Introducing $\mathbf{X}=\mathbf{e}_{\alpha }$ and $\mathbf{Y}=\mathbf{e}%
_{\beta },$ defined by (\ref{nader}), into above formulas, we compute the
N--adapted coefficients  $D=\{\Gamma _{\ \alpha \beta
}^{\gamma }\}$ and corresponding fundamental geometric objects,
\begin{eqnarray*}
\mathcal{T} &=&\{T_{\ \alpha \beta }^{\gamma }=\left( T_{\ jk}^{i},T_{\
ja}^{i},T_{\ ji}^{a},T_{\ bi}^{a},T_{\ bc}^{a}\right) \}; \\
\mathcal{R} &\mathbf{=}&\mathbf{\{}R_{\ \beta \gamma \delta }^{\alpha }%
\mathbf{=}\left( R_{\ hjk}^{i}\mathbf{,}R_{\ bjk}^{a}\mathbf{,}R_{\ hja}^{i}%
\mathbf{,}R_{\ bja}^{c}\mathbf{,}R_{\ hba}^{i},R_{\ bea}^{c}\right) \mathbf{%
\};}\ \mathcal{Q} = \mathbf{\{}Q_{\ \alpha \beta }^{\gamma }\}.
\end{eqnarray*}

Every (pseudo) Riemannian manifold $\left( V,\mathbf{g}\right) $ is
naturally equipped with a Levi--Civita connection $D=\nabla =\{\ ^{\nabla
}\Gamma _{\ \alpha \beta }^{\gamma }\}$ completely defined by $\mathbf{g}%
=\{g_{\alpha \beta }\}$ if and only if  the metric
compatibility, i.e. $\ ^{\nabla }\mathcal{Q}(X)= \nabla _{\mathbf{X}}\mathbf{g}%
=0, $ and zero torsion, i.e. $\ ^{\nabla }\mathcal{T}=0,$ conditions are satisfied. Hereafter,
we shall write, for simplicity, $\ ^{\nabla }\Gamma _{\ \alpha \beta
}^{\gamma }=\Gamma _{\ \alpha \beta }^{\gamma }.$ It should be emphasized
that $\nabla $ does not preserve under parallelism and general
frame/coordinate transforms a N--splitting (\ref{ncon}). Nevertheless, it is
possible to construct a unique distortion relation
\begin{equation}
\nabla =\widehat{\mathbf{D}}+\widehat{\mathbf{Z}},  \label{distrel}
\end{equation}%
where both linear connections $\nabla$ and $\widehat{\mathbf{D}}$ (the
second one can be considered as an auxiliary linear connection, which in
literature is called the canonical distinguished connection; in brief,
d--connection) and the distortion tensor $\widehat{\mathbf{Z}}$, i.e. all
values in the above formula, are completely defined by $\mathbf{g}%
=\{g_{\alpha \beta }\}$ for a prescribed $\mathbf{N}=\{N_{i}^{a}\},$ see
details in \cite{ijgmmp,vexsol1,vexsol2}.

\begin{theorem}
\label{thadist}With respect to N--adapted frames (\ref{nader}) and (\ref%
{nadif}), the coefficient of distortion relation (\ref{distrel}) are
computed
\begin{equation}
\Gamma _{\ \alpha \beta }^{\gamma }=\widehat{\mathbf{\Gamma }}_{\ \alpha
\beta }^{\gamma }+\widehat{\mathbf{Z}}_{\ \alpha \beta }^{\gamma },
\label{distrel1}
\end{equation}%
where the canonical d--connection $\widehat{\mathbf{D}}=\{$ $\widehat{%
\mathbf{\Gamma }}_{\ \alpha \beta }^{\gamma }=(\widehat{L}_{jk}^{i},\widehat{%
L}_{bk}^{a},\widehat{C}_{jc}^{i},\widehat{C}_{bc}^{a})\}$ is defined by
coefficients
\begin{eqnarray}
\widehat{L}_{jk}^{i} &=&\frac{1}{2}g^{ir}\left( \mathbf{e}_{k}g_{jr}+\mathbf{%
e}_{j}g_{kr}-\mathbf{e}_{r}g_{jk}\right),   \widehat{L}_{bk}^{a} =e_{b}(N_{k}^{a})+\frac{1}{2}g^{ac}\left( \mathbf{e}%
_{k}g_{bc}-g_{dc}\ e_{b}N_{k}^{d}-g_{db}\ e_{c}N_{k}^{d}\right) ,  \notag \\
\widehat{C}_{jc}^{i} &=&\frac{1}{2}g^{ik}e_{c}g_{jk},\ \widehat{C}_{bc}^{a}=%
\frac{1}{2}g^{ad}\left( e_{c}g_{bd}+e_{b}g_{cd}-e_{d}g_{bc}\right) ,  \label{cdc}
\end{eqnarray}%
and the distortion tensor $\widehat{\mathbf{Z}}_{\ \alpha \beta }^{\gamma }$
is
\begin{eqnarray}
\ Z_{jk}^{a} &=&-\widehat{C}_{jb}^{i}g_{ik}g^{ab}-\frac{1}{2}\Omega
_{jk}^{a},~Z_{bk}^{i}=\frac{1}{2}\Omega _{jk}^{c}g_{cb}g^{ji}-\Xi _{jk}^{ih}~%
\widehat{C}_{hb}^{j},  Z_{bk}^{a} =\ ^{+}\Xi _{cd}^{ab}~\widehat{T}_{kb}^{c}, \ Z_{jk}^{i}=0, \label{deft}   \\ Z_{kb}^{i}&=&\frac{1}{%
2}\Omega _{jk}^{a}g_{cb}g^{ji}+\Xi _{jk}^{ih}~\widehat{C}_{hb}^{j},
\ Z_{jb}^{a} =\ - \ ^{-}\Xi _{cb}^{ad}~\widehat{T}_{jd}^{c},\ Z_{bc}^{a}=0,\
Z_{ab}^{i}=-\frac{g^{ij}}{2}\left[ \widehat{T}_{ja}^{c}g_{cb}+\widehat{T}%
_{jb}^{c}g_{ca}\right],  \notag
\end{eqnarray}%
for $\ \Xi _{jk}^{ih}=\frac{1}{2}(\delta _{j}^{i}\delta
_{k}^{h}-g_{jk}g^{ih})$ and $~^{\pm }\Xi _{cd}^{ab}=\frac{1}{2}(\delta
_{c}^{a}\delta _{d}^{b}+g_{cd}g^{ab}).$ The nontrivial coefficients $\Omega
_{jk}^{a}$ and $\widehat{\mathbf{T}}_{\ \alpha \beta }^{\gamma }$ are given,
respectively, by $W^\alpha_{\beta \gamma}$ as formulas (\ref{nonholr}) and, see below, (\ref{dtors}).
\end{theorem}

\begin{proof}
It follows from a straightforward verification in N--adapted frames that the
sums of coefficients (\ref{cdc}) and (\ref{deft}) result in the coefficients
of the Levi--Civita connection $\Gamma _{\ \alpha \beta }^{\gamma }$ for a
general metric parametrized as a d--metric $\mathbf{g}=[g_{ij},g_{ab}]$ \ (%
\ref{dm}). $\square $
\end{proof}

\vskip5pt

All geometric constructions and physical theories derived for geometric data
$\left( \mathbf{g,}\nabla \right) $ can be equivalently modeled by geometric
data $\left( \mathbf{g,N,}\widehat{\mathbf{D}}\right) $ because of unique
distortion relation (\ref{distrel}).

\begin{theorem}
\label{thadtors}The nonholonomically induced torsion $\widehat{\mathcal{T}}$
$=\{\widehat{\mathbf{T}}_{\ \alpha \beta }^{\gamma }\}$ of $\widehat{\mathbf{%
D}}$ \ is determined in a unique form by the metric compatibility condition,
$\widehat{\mathbf{D}}\mathbf{g}=0,$ and zero horizontal and vertical torsion
coefficients, $\widehat{T}_{\ jk}^{i}=0$ and $\widehat{T}_{\ bc}^{a}=0,$ but
with nontrivial h--v-- coefficients {\small
\begin{equation}
\widehat{T}_{\ jk}^{i}=\widehat{L}_{jk}^{i}-\widehat{L}_{kj}^{i},\widehat{T}%
_{\ ja}^{i}=\widehat{C}_{jb}^{i},\widehat{T}_{\ ji}^{a}=-\Omega _{\ ji}^{a},%
\widehat{T}_{aj}^{c}=\widehat{L}_{aj}^{c}-e_{a}(N_{j}^{c}),\widehat{T}_{\
bc}^{a}=\ \widehat{C}_{bc}^{a}-\ \widehat{C}_{cb}^{a}.  \label{dtors}
\end{equation}%
}
\end{theorem}

\begin{proof}
The coefficients (\ref{dtors}) are computed by introducing $\ D=$ $\widehat{%
\mathbf{D}},$ with coefficients (\ref{cdc}), and $X=\mathbf{e}_{\alpha },Y=%
\mathbf{e}_{\beta }$ (for N--adapted frames (\ref{nader})) into standard
formula for torsion, $\mathcal{T}(X,Y):=D_{\mathbf{X}}Y-D_{\mathbf{Y}%
}X-[X,Y] $. \ $\square $
\end{proof}

\vskip5pt

In a similar form, introducing $\widehat{\mathbf{D}}$ and $X=\mathbf{e}%
_{\alpha },Y=\mathbf{e}_{\beta },Z=$ $\mathbf{e}_{\gamma }$ into \newline
$\mathcal{R}(X,Y):=D_{\mathbf{X}}D_{\mathbf{Y}}-D_{\mathbf{Y}}D_{\mathbf{X}%
}-D_{\mathbf{[X,Y]}},$ we prove

\begin{theorem}
\label{thasa1}The curvature $\widehat{\mathcal{R}}=\{\widehat{\mathbf{R}}_{\
\beta \gamma \delta }^{\alpha }\}$ of $\ \widehat{\mathbf{D}}$ is
characterized by N--adapted coefficients {\small
\begin{eqnarray}
\widehat{R}_{\ hjk}^{i} &=&e_{k}\widehat{L}_{\ hj}^{i}-e_{j}\widehat{L}_{\
hk}^{i}+\widehat{L}_{\ hj}^{m}\widehat{L}_{\ mk}^{i}-\widehat{L}_{\ hk}^{m}%
\widehat{L}_{\ mj}^{i}-\widehat{C}_{\ ha}^{i}\Omega _{\ kj}^{a},  \notag \\
\widehat{R}_{\ bjk}^{a} &=&e_{k}\widehat{L}_{\ bj}^{a}-e_{j}\widehat{L}_{\
bk}^{a}+\widehat{L}_{\ bj}^{c}\widehat{L}_{\ ck}^{a}-\widehat{L}_{\ bk}^{c}%
\widehat{L}_{\ cj}^{a}-\widehat{C}_{\ bc}^{a}\Omega _{\ kj}^{c},
\label{dcurv} \\
\widehat{R}_{\ jka}^{i} &=&e_{a}\widehat{L}_{\ jk}^{i}-\widehat{D}_{k}%
\widehat{C}_{\ ja}^{i}+\widehat{C}_{\ jb}^{i}\widehat{T}_{\ ka}^{b},\widehat{%
R}_{\ bka}^{c}=e_{a}\widehat{L}_{\ bk}^{c}-D_{k}\widehat{C}_{\ ba}^{c}+%
\widehat{C}_{\ bd}^{c}\widehat{T}_{\ ka}^{c},  \notag \\
\widehat{R}_{\ jbc}^{i} &=&e_{c}\widehat{C}_{\ jb}^{i}-e_{b}\widehat{C}_{\
jc}^{i}+\widehat{C}_{\ jb}^{h}\widehat{C}_{\ hc}^{i}-\widehat{C}_{\ jc}^{h}%
\widehat{C}_{\ hb}^{i},\  \widehat{R}_{\ bcd}^{a} = e_{d}\widehat{C}_{\ bc}^{a}-e_{c}\widehat{C}_{\
bd}^{a}+\widehat{C}_{\ bc}^{e}\widehat{C}_{\ ed}^{a}-\widehat{C}_{\ bd}^{e}%
\widehat{C}_{\ ec}^{a}.  \notag
\end{eqnarray}%
}
\end{theorem}

We can re--define the differential geometry of a (pseudo) Riemannian space $%
\mathbf{V}$ in nonholonomic form in terms of geometric data $(\mathbf{g,}%
\widehat{\mathbf{D}})$, which is equivalent to the  formulation
with $(\mathbf{g,\nabla }).$

\begin{corollary}
\label{corolricci} The Ricci tensor $\widehat{\mathbf{R}}_{\alpha \beta }:=%
\widehat{\mathbf{R}}_{\ \alpha \beta \gamma }^{\gamma }$ (\ref{riccid}) of $%
\widehat{\mathbf{D}}$ is characterized by  coefficients
\begin{equation}
\widehat{\mathbf{R}}_{\alpha \beta }=\{\widehat{R}_{ij}:=\widehat{R}_{\
ijk}^{k},\ \widehat{R}_{ia}:=-\widehat{R}_{\ ika}^{k},\ \widehat{R}_{ai}:=%
\widehat{R}_{\ aib}^{b},\ \widehat{R}_{ab}:=\widehat{R}_{\ abc}^{c}\}.
\label{driccic}
\end{equation}
\end{corollary}

\begin{proof}
The formulas for $h$--$v$--components (\ref{driccic}) are obtained by
contracting, respectively, the coefficients (\ref{dcurv}). Using $\widehat{%
\mathbf{D}}$ (\ref{cdc}), we express such formulas in terms of partial
derivatives of coefficients of metric $\mathbf{g}$ (\ref{mst}) and any
equivalent parametrization in the form (\ref{dm}), or (\ref{ansatz}). $%
\square $
\end{proof}

\vskip5pt

The scalar curvature $\ ^{s}\widehat{R}$ of $\ \widehat{\mathbf{D}}$ is by
definition
\begin{equation}
\ ^{s}\widehat{R}:=\mathbf{g}^{\alpha \beta }\widehat{\mathbf{R}}_{\alpha
\beta }=g^{ij}\widehat{R}_{ij}+g^{ab}\widehat{R}_{ab}.  \label{sdcurv}
\end{equation}%
In order to elaborate models of gravity theories for $\nabla $ and/or $%
\widehat{\mathbf{D}}$, we have to consider the corresponding Ricci tensors,%
\begin{eqnarray}
Ric=\{R_{\ \beta \gamma } &:=&R_{\ \beta \gamma \alpha }^{\alpha }\},%
\mbox{ for
}\nabla =\{\Gamma _{\ \alpha \beta }^{\gamma }\},  \label{riccie} \\
\mbox{ and }\widehat{R}ic &=&\{\widehat{\mathbf{R}}_{\ \beta \gamma }:=%
\widehat{\mathbf{R}}_{\ \beta \gamma \alpha }^{\alpha }\},\mbox{ for }%
\widehat{\mathbf{D}}=\{\widehat{\mathbf{\Gamma }}_{\ \alpha \beta }^{\gamma
}\}.  \label{riccid}
\end{eqnarray}%
For instance, using (\ref{driccic}) and (\ref{sdcurv}), we can compute the
Einstein tensor $\widehat{\mathbf{E}}_{\alpha \beta }$ of $\widehat{\mathbf{D%
}},$
\begin{equation}
\widehat{\mathbf{E}}_{\alpha \beta }\doteqdot \widehat{\mathbf{R}}_{\alpha
\beta }-\frac{1}{2}\mathbf{g}_{\alpha \beta }\ ^{s}\widehat{R}.
\label{enstdt}
\end{equation}%
In general, this tensor is different from a similar one constructed with (%
\ref{riccie}) for the Levi--Civita connection $\nabla .$

\begin{proposition}
\label{approp}The N--adapted coefficients $\widehat{\mathbf{\Gamma }}_{\
\alpha \beta }^{\gamma }$ of $\widehat{\mathbf{D}}$ are identic to the
coefficients $\Gamma _{\ \alpha \beta }^{\gamma }$ of $\ \nabla ,$ both sets
being computed with respect to N--adapted frames (\ref{nader}) and (\ref%
{nadif}), if and only if the conditions $\widehat{L}%
_{aj}^{c}=e_{a}(N_{j}^{c}),\widehat{C}_{jb}^{i}=0$ and $\Omega _{\
ji}^{a}=0$ are satisfied.
\end{proposition}

\begin{proof}
If the conditions of the Proposition, i.e., constraints (\ref{lcconstr}), are
satisfied, all N--adapted coefficients of the torsion $\widehat{\mathbf{T}}%
_{\ \alpha \beta }^{\gamma }$ \ (\ref{dtors}) are zero. In such a case, the
distortion tensor $\widehat{\mathbf{Z}}_{\ \alpha \beta }^{\gamma }$ (\ref%
{deft}) is also zero. Following formula (\ref{distrel1}), we get $\Gamma _{\
\alpha \beta }^{\gamma }= \widehat{\mathbf{\Gamma }}_{\ \alpha \beta
}^{\gamma }.$ Inversely, if the last equalities of coefficients are
satisfied for a chosen splitting (\ref{ncon}), we get trivial torsions and
distortions of $\nabla .$ We emphasize that $\widehat{\mathbf{D}}\neq \nabla
$ because such connections have different transformation rules under
frame/coordinate transforms. Nevertheless, \ if $\Gamma _{\ \alpha \beta
}^{\gamma }=\widehat{\mathbf{\Gamma }}_{\ \alpha \beta }^{\gamma }$ in a
N--adapted frame of reference, we get corresponding equalities for the
Riemann and Ricci tensors etc. This means that the N--coefficients are such
way fixed via frame transforms that the nonholonomic distribution became
integrable even, in general, the frames (\ref{nader}) and (\ref{nadif}) are
nonholonomic (because not all anholonomy coefficients are not obligatory
zero, for instance, $w_{ia}^{b}=\partial _{a}N_{i}^{b}$ may be nontrivial. $\square $
\end{proof}

\subsection{Field equations in nonholonomically modified gravity}

We study modified gravity theories derived for the action%
\begin{equation}
S=\frac{1}{16\pi }\int \delta u^{4}\sqrt{|\mathbf{g}_{\alpha \beta }|}\left(
f(\ ^{s}\widehat{R},T)+\ ^{m}L\right)  \label{act}
\end{equation}%
where $\ ^{m}L$ is the matter Lagrangian density for which the
stress--energy tensor of matter is defined via variation on inverse metric
tensor as $\mathbf{T}_{\alpha \beta }=-\frac{2}{\sqrt{|\mathbf{g}_{\mu \nu }|}}\frac{%
\delta (\sqrt{|\mathbf{g}_{\mu \nu }|}\ ^{m}L)}{\delta \mathbf{g}^{\alpha \beta }}$,
trace $T:=\mathbf{g}^{\alpha \beta }\mathbf{T}_{\alpha \beta },$ and $f(\
^{s}\widehat{R},T)$ is an arbitrary functional on $\ ^{s}\widehat{R}$ (\ref%
{sdcurv}) and $T.$ For simplicity, we assume that the stress--energy tensor
of the matter is given by
\begin{equation}
\mathbf{T}_{\alpha \beta }=(\rho +p)\mathbf{v}_{\alpha }\mathbf{v}_{\beta }-p%
\mathbf{g}_{\alpha \beta },  \label{emt}
\end{equation}%
where in the approximation of perfect fluid matter $\rho $ is the energy
density, $p$ is the pressure and the four--velocity $\mathbf{v}_{\alpha }$
is subjected to the conditions $\mathbf{v}_{\alpha }\mathbf{v}^{\alpha }=1$
and $\mathbf{v}^{\alpha }\widehat{\mathbf{D}}_{\beta }\mathbf{v}_{\alpha
}=0, $ for $\ ^{m}L=-p$ in a corresponding local frame. We also consider
approximations for type%
\begin{equation}
f(\ ^{s}\widehat{R},T)=\ ^{1}f(\ ^{s}\widehat{R})+\ ^{2}f(T)  \label{functs}
\end{equation}%
and denote by $\ ^{1}F(\ ^{s}\widehat{R}):=\partial \ ^{1}f(\ ^{s}\widehat{R}%
)/\partial \ ^{s}\widehat{R}$ and $\ ^{2}F(T):=\partial \ ^{2}f(T)/\partial
\ T.$

\begin{theorem}
The gravitational field equations for a modified gravity model (\ref{act})
with $f$--functional (\ref{functs}) and perfect fluid stress--energy tensor (%
\ref{emt}) can be re--written equivalently using the canonical d--connection
$\widehat{\mathbf{D}},$
\begin{equation}
\widehat{\mathbf{R}}_{\ \beta \delta }-\frac{1}{2}\mathbf{g}_{\beta \delta
}\ ^{s}R=\mathbf{\Upsilon }_{\beta \delta },  \label{cdeinst}
\end{equation}%
where the source d--tensor $\mathbf{\Upsilon }_{\beta \delta }$ is such way
constructed that $\mathbf{\Upsilon }_{\beta \delta }\rightarrow 8\pi
GT_{\beta \delta }$ \ for $\widehat{\mathbf{D}}\rightarrow \nabla ,$ where $%
T_{\beta \delta }$ is the energy--momentum tensor in GR with coupling
gravitational constant $G.$ In explicit form,
\begin{equation*}
\mathbf{\Upsilon }_{\beta \delta }=\ ^{ef}\eta\ G\ \mathbf{T}_{\beta \delta
}+\ ^{ef}\mathbf{T}_{\beta \delta },
\end{equation*}%
where the effective polarization of cosmological constant is computed
 $\ ^{ef}\eta =[1+\ ^{2}F/8\pi ]/\ ^{1}F$ and the $f$--modification of the energy--momentum tensor is computed as an
additional effective source
 $\ ^{ef}\mathbf{T}_{\beta \delta }=[\frac{1}{2}(\ ^{1}f-\ ^{1}F\ ^{s}\widehat{%
R}+2p\ ^{2}F+\ ^{2}f)\mathbf{g}_{\beta \delta }-(\mathbf{g}_{\beta \delta }\
\widehat{\mathbf{D}}_{\alpha }\widehat{\mathbf{D}}^{\alpha }-\widehat{%
\mathbf{D}}_{\beta }\widehat{\mathbf{D}}_{\delta })\ ^{1}F]/\ ^{1}F$.
 \end{theorem}

\begin{proof}
By varying the action $S$ (\ref{act}) with respect to $\mathbf{g}^{\alpha
\beta }$ and following a N--adapted covariand differential calculus with
respect bases (\ref{nader}) and (\ref{nadif}) (see similar results for the
Levi--Civita connection in \cite{odintsov}) we obtain the gravitational
field equations (\ref{cdeinst}) and respective effective sources. $\square $
\end{proof}

\vskip5pt

We consider matter field sources in (\ref{cdeinst}) which can be
diagonalized with respect to N--adapted frames, {\small
\begin{equation}
\mathbf{\Upsilon }_{~\delta }^{\beta }=diag[\mathbf{\Upsilon }_{\alpha }:%
\mathbf{\Upsilon }_{~1}^{1}=\mathbf{\Upsilon }_{~2}^{2}=\Upsilon
(x^{k},y^{3})+\underline{\Upsilon }(x^{k},y^{4});\mathbf{\Upsilon }_{~3}^{3}=%
\mathbf{\Upsilon }_{~4}^{4}=~^{v}\Upsilon (x^{k})].  \label{source}
\end{equation}%
} Such a formal diagonalization can be performed via corresponding frame/
coordinate transforms for very general distributions of matter fields. Such
effective sources can be considered as nonholonomic constraints on the Ricci
tensor (see Theorem \ref{th2a}) computed for certain general assumptions on
modified off--diagonal gravitational interactions.

\begin{corollary}
The gravitational field equations (\ref{cdeinst}) transform into the
Einstein equations in GR, in "standard" form for $\nabla ,$
\begin{equation}
E_{\beta \delta }=R_{\beta \delta }-\frac{1}{2}\mathbf{g}_{\beta \delta }\
R=\varkappa T_{\beta \delta },  \label{einsteqs}
\end{equation}%
where $R:=\mathbf{g}^{\beta \delta }R_{\ \beta \delta },$ if $\ ^{2}f=0,$ \ $%
\ ^{1}f(\ ^{s}\widehat{R})=R,$ for the same N--adapted coefficients for $%
\widehat{\mathbf{D}}$ and $\nabla $ if\
\begin{equation}
\widehat{L}_{aj}^{c}=e_{a}(N_{j}^{c}),\ \widehat{C}_{jb}^{i}=0,\ \Omega _{\
ji}^{a}=0.  \label{lcconstr}
\end{equation}
\end{corollary}

\begin{proof}
In general, the systems of PDEs (\ref{cdeinst}) and (\ref{einsteqs}) are very
different. But if the constraints (\ref{lcconstr}) are imposed additionally
on $\widehat{\mathbf{D}},$ we satisfy the conditions of Proposition \ref%
{approp}, when $\Gamma _{\ \alpha \beta }^{\gamma }=\widehat{\mathbf{\Gamma }%
}_{\ \alpha \beta }^{\gamma }$ results in $R_{\beta \delta }=\widehat{%
\mathbf{R}}_{\ \beta \delta }$ and $E_{\alpha \beta }=\widehat{\mathbf{E}}%
_{\alpha \beta }.$ $\square $
\end{proof}

\section{Decoupling of Gravitational Field Equations}

\label{s3} The Einstein equations and their generalizations to various models of
commutative and/or noncommutative gravity (for instance, in Finsler
variables) possess a very important property that they
decouple with respect to certain N--adapted frames of reference (\ref{nader}%
) and (\ref{nadif}) and for the canonical d--connection $\widehat{\mathbf{D}}
$. The aim of this section is to show how the anholonomic frame deformation
method \cite{ijgmmp,vexsol1,vexsol2}, AFDM, can be applied to decouple the
fundamental field equations (\ref{cdeinst}) \ for modified gravity.

\subsection{Off--diagonal spacetimes with Killing symmetry}

We use the ansatz (\ref{ans1}) when $\omega =1,\underline{h}_{3}=1,\underline{%
w}_{i}=0$ and $\underline{n}_{i}=0$ in data (\ref{paramdcoef}) and $%
\underline{\mathbf{\Upsilon }}=0$ for (\ref{source}). Such a generic
off--diagonal metric does not depend on variable $y^{4},$ i.e., $\partial
/\partial y^{4}$ is a Killing vector, if $\underline{h}_{4}=1.$
Nevertheless, the decoupling property can proven for the same assumptions
but arbitrary $\underline{h}_{4}(x^{k},y^{4})$ with nontrivial dependence on
$y^{4}.$ We call this class of metrics to be with weak Killing symmetry
because they result in systems of PDEs (\ref{cdeinst}) as for the Killing
case but there are differences in (\ref{lcconstr}) if $\underline{h}_{4}\neq
1.$ It will be convenient to use brief denotations for partial derivatives, $%
a^{\bullet }=\partial a/\partial x^{1},a^{\prime }=\partial a/\partial
x^{2},a^{\ast }=\partial a/\partial y^{3},a^{\circ }=\partial a/\partial
y^{4}.$ The equations will be written with respect to N--adapted frames of
type (\ref{nader}) and (\ref{nadif}).

\begin{theorem}
\label{th2a}The gravitational field equations (\ref{cdeinst}) with possible
constraints (\ref{lcconstr}) for a metric $\mathbf{g}$ (\ref{paramdcoef})
with $\omega =\underline{h}_{3}=1$ and $\underline{w}_{i}=\underline{n}%
_{i}=0 $ and $\underline{\mathbf{\Upsilon }}=0$ in matter source $\mathbf{%
\Upsilon }_{~\delta }^{\beta }$ (\ref{source}) are equivalent, respectively,
to
\begin{eqnarray}
\widehat{R}_{1}^{1} &=&\widehat{R}_{2}^{2}=\frac{-1}{2g_{1}g_{2}}%
[g_{2}^{\bullet \bullet }-\frac{g_{1}^{\bullet }g_{2}^{\bullet }}{2g_{1}}-%
\frac{\left( g_{2}^{\bullet }\right) ^{2}}{2g_{2}}+g_{1}^{\prime \prime }-%
\frac{g_{1}^{\prime }g_{2}^{\prime }}{2g_{2}}-\frac{(g_{1}^{\prime })^{2}}{%
2g_{1}}]=-\ ^{v}\Upsilon ,  \label{eq1b} \\
\widehat{R}_{3}^{3} &=&\widehat{R}_{4}^{4}=-\frac{1}{2h_{3}h_{4}}%
[h_{4}^{\ast \ast }-\frac{\left( h_{4}^{\ast }\right) ^{2}}{2h_{4}}-\frac{%
h_{3}^{\ast }h_{4}^{\ast }}{2h_{3}}]=-\Upsilon ,  \label{eq2b} \\
\widehat{R}_{3k} &=&\frac{w_{k}}{2h_{4}}[h_{4}^{\ast \ast }-\frac{\left(
h_{4}^{\ast }\right) ^{2}}{2h_{4}}-\frac{h_{3}^{\ast }h_{4}^{\ast }}{2h_{3}}%
]+\frac{h_{4}^{\ast }}{4h_{4}}\left( \frac{\partial _{k}h_{3}}{h_{3}}+\frac{%
\partial _{k}h_{4}}{h_{4}}\right) -\frac{\partial _{k}h_{4}^{\ast }}{2h_{4}}%
=0,  \label{eq3b} \\
\widehat{R}_{4k} &=&\frac{h_{4}}{2h_{3}}n_{k}^{\ast \ast }+\left( \frac{h_{4}%
}{h_{3}}h_{3}^{\ast }-\frac{3}{2}h_{4}^{\ast }\right) \frac{n_{k}^{\ast }}{%
2h_{3}}=0,  \label{eq4b}
\end{eqnarray}%
\begin{eqnarray}
\mbox{ and }\ w_{i}^{\ast } &=&(\partial _{i}-w_{i}\partial _{3})\ln |h_{4}|,\partial
_{k}w_{i}=\partial _{i}w_{k},  \label{lccond} \\
n_{k}\underline{h}_{4}^{\circ } &=&\mathbf{\partial }_{k}\underline{h}%
_{4},n_{i}^{\ast }=0,\partial _{i}n_{k}=\partial _{k}n_{i}.  \notag
\end{eqnarray}%
\end{theorem}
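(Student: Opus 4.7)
The plan is to specialize the general N-adapted expressions for the canonical d-connection $\widehat{\nabla}$ and its Ricci tensor, compiled in Appendix \ref{sb}, to the restricted ansatz (\ref{ans1})--(\ref{paramdcoef}) satisfying $\omega=\underline{h}_{3}=1$, $\underline{w}_{i}=\underline{n}_{i}=0$, and then set the result equal to the source $\mathbf{\Upsilon}_{~\delta}^{\beta}$ with $\underline{\mathbf{\Upsilon}}=0$. Because the restriction is severe, the calculation is mostly a careful bookkeeping exercise: most candidate terms vanish identically, and what survives is exactly the right-hand sides of (\ref{eq1b})--(\ref{eq4b}).

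First, I would rewrite the d-metric under the ansatz as $\mathbf{g}=g_{1}(x^{k})(dx^{1})^{2}+g_{2}(x^{k})(dx^{2})^{2}+h_{3}(x^{k},y^{3})(\mathbf{e}^{3})^{2}+h_{4}(x^{k},y^{3})\underline{h}_{4}(x^{k},y^{4})(\mathbf{e}^{4})^{2}$ with $N_{i}^{3}=w_{i}(x^{k},y^{3})$ and $N_{i}^{4}=n_{i}(x^{k},y^{3})$. Observing that $f^{\circ}\equiv 0$ for every metric coefficient except $\underline{h}_{4}$, and that $w_{i},n_{i}$ are independent of $y^{4}$, most cross-derivative terms in the anholonomy coefficients (\ref{anhcoef}) collapse. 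Substituting into the coefficients of $\widehat{\nabla}$ yields a block-decoupled structure: the h-h block is controlled only by $g_{i}(x^{k})$, the v-v block only by $h_{a}(x^{k},y^{3})$ (with $\underline{h}_{4}$ entering multiplicatively), and the mixed blocks by $w_{i}$ and $n_{i}$ together with vertical derivatives of $h_{a}$.

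Next, I would assemble each Ricci component from Appendix \ref{sb}. The h-h calculation is the standard two-dimensional expression, producing $\widehat{R}_{1}^{1}=\widehat{R}_{2}^{2}$ in (\ref{eq1b}). For the v-v block, because $h_{4}=h_{4}(x^{k},y^{3})\underline{h}_{4}(x^{k},y^{4})$ and $\underline{h}_{4}$ is annihilated by the $y^{3}$-derivative $\ast$, every occurrence of $h_{4}^{\ast}/h_{4}$ reduces to $h_{4}^{\ast}/h_{4}$ (with the $\underline{h}_{4}$ factor cancelling in the logarithmic derivative), so the whole nonlinear combination in brackets depends only on $x^{k},y^{3}$; this gives (\ref{eq2b}). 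For the mixed $\widehat{R}_{3k}$ one recognises that the same bracket from (\ref{eq2b}) reappears multiplying $w_{k}$, plus the expected $h_{4}^{\ast}$ and $\partial_{k}h_{4}^{\ast}$ corrections; this is (\ref{eq3b}). For $\widehat{R}_{4k}$, the $n_{k}$-sector only involves $y^{3}$-derivatives and produces the second-order linear equation (\ref{eq4b}).

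Finally, I would impose the Levi-Civita conditions (\ref{lcconstr}). Under the ansatz these reduce to two algebraic/PDE relations on $w_{i}$, giving the first line of (\ref{lccond}), and two on $n_{i}$. The presence of $\underline{h}_{4}$ is precisely what forces the nontrivial relation $n_{k}\underline{h}_{4}^{\circ}=\partial_{k}\underline{h}_{4}$ together with $n_{i}^{\ast}=0$ and $\partial_{i}n_{k}=\partial_{k}n_{i}$, which justifies the name \emph{weak Killing symmetry}: the Ricci system is formally identical to the strict Killing case $\underline{h}_{4}=1$, while the distinction survives only at the level of the torsion-free constraint. The main obstacle is verifying this last cancellation transparently---i.e.\ showing that every $y^{4}$-derivative appearing in the Ricci coefficients either multiplies a vanishing factor or combines into a logarithmic derivative in which the $y^{4}$-dependent factor $\underline{h}_{4}$ drops out---since without this the decoupling claim would fail and the system (\ref{eq1b})--(\ref{eq4b}) would acquire additional $y^{4}$-dependent source terms.
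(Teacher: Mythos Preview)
Your proposal is correct and follows exactly the paper's route: the proof in the paper is literally ``See Appendix \ref{sb}'', and that appendix carries out precisely the direct computation you describe---nontrivial coefficients of $\widehat{\mathbf{\Gamma}}_{\ \alpha\beta}^{\gamma}$, then torsion, then each Ricci block, then the zero-torsion reduction. One clarification on the point you flagged as the main obstacle: the $\underline{h}_{4}$-dependence does \emph{not} fully disappear from $\widehat{R}_{3k}$ by logarithmic cancellation alone; the paper's computation leaves residual terms $n_{k}^{\ast}\,\underline{h}_{4}^{\circ}/(4\underline{h}_{4})$ and $(h_{4}^{\ast}/4h_{4})\bigl(n_{k}\underline{h}_{4}^{\circ}-\partial_{k}\underline{h}_{4}\bigr)/\underline{h}_{4}$, which vanish only after invoking the torsion-free constraints $n_{i}^{\ast}=0$ and $n_{k}\underline{h}_{4}^{\circ}=\partial_{k}\underline{h}_{4}$ from (\ref{lccond}), so the equivalence is of the two \emph{coupled} systems rather than component-by-component.
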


\begin{proof}
See  \ref{sb}. $\square $
\end{proof}

\vskip5pt

Let us discuss the decoupling (splitting) property of the modified and
Einstein equations with respect to certain classes of N--adapted frames
which is contained in the system of PDEs (\ref{eq1b})-- (\ref{lccond}). For
instance, the first equation is for a 2--d metric which always can be
diagonalized, $[g_{1},g_{2}],$ and/or made to be conformally flat.
Prescribing a function $g_{1}$ and source $\ ^{v}\Upsilon $, we can find $%
g_{2},$ or inversely. The equation (\ref{eq2b}) contains only the first and
second derivatives on $\partial /\partial y^{3}$ and relates two functions $%
h_{3}$ and $h_{4}.$ Prescribing one of such functions and source $\Upsilon ,$
we can define the second one taking, respectively, one or two derivations on
$y^{3}.$ The equation (\ref{eq3b}) is a linear algebraic system for $w_{k}$
if the coefficients $h_{a}$ have been already defined as a solution of (\ref%
{eq2b}). Nevertheless, we have to solve a system of first order PDE on $%
x^{k} $ and $y^{3}$ in order to find $w_{k}$ resulting in zero torsion
conditions (\ref{lccond}). Such conditions do not allow \ a complete
decoupling because the first equations relate $w_{i}$ to $H=\ln |h_{4}|$ via
corresponding first order PDEs. Nevertheless, it is possible, for instance,
to integrate such solutions for any prescribed $H$ (see more details below,
in Remark  \ref{remex}). The fourth equations (\ref{eq4b}) became
trivial for any $n_{i}^{\ast }=0$ if we wont to satisfy completely such zero
torsion conditions\footnote{%
Nontrivial solutions and nonzero torsion configurations present interest in
modified theories of gravity and brane physics, see such examples in Ref.
\cite{vsing2}.}. A nontrivial function $\underline{h}_{4}$ is explicitly
present in the conditions (\ref{lccond}). If such restrictions are
satisfied, this allows us to eliminate $\underline{h}_{4}$ from (\ref{eq3b}).

We conclude that the modified equations for metrics with one Killing
symmetry can be in such way parametrized with respect to N--adapted frames that
they decouple and separate into "quite simple" PDE for h--components, $g_{i}$, and then for
v--components, $h_{a}$. The N--connection coefficients also separate and can
be defined from corresponding algebraic and/or first order PDE. The
 "zero torsion" conditions impose certain
additional constraints (as some simple first order PDE with possible
separation of variables) on N--coefficients and coefficients of v--metric.

In a similar form, we can decouple the modified gravitational field equations for
spacetimes with one Killing symmetry on $\partial /\partial y_{4}.$

\begin{corollary}
\label{corol1} The equations (\ref{cdeinst}) and (\ref{lcconstr}) for a
metric $\mathbf{g}$ (\ref{paramdcoef}) with $\omega =h_{4}=1$ and $%
w_{i}=n_{i}=0$ and $\mathbf{\Upsilon }=0$ in matter source $\mathbf{\Upsilon
}_{~\delta }^{\beta }$ (\ref{source}) are equivalent, respectively, to
\begin{eqnarray}
\widehat{R}_{1}^{1} &=&\widehat{R}_{2}^{2}=\frac{-1}{2g_{1}g_{2}}%
[g_{2}^{\bullet \bullet }-\frac{g_{1}^{\bullet }g_{2}^{\bullet }}{2g_{1}}-%
\frac{\left( g_{2}^{\bullet }\right) ^{2}}{2g_{2}}+g_{1}^{\prime \prime }-%
\frac{g_{1}^{\prime }g_{2}^{\prime }}{2g_{2}}-\frac{(g_{1}^{\prime })^{2}}{%
2g_{1}}]=-\ ^{v}\Upsilon ,  \label{eq1c} \\
\widehat{R}_{3}^{3} &=&\widehat{R}_{4}^{4}=-\frac{1}{2\underline{h}_{3}%
\underline{h}_{4}}[\underline{h}_{3}^{\circ \circ }-\frac{\left( \underline{h%
}_{3}^{\circ }\right) ^{2}}{2\underline{h}_{3}}-\frac{\underline{h}%
_{3}^{\circ }\underline{h}_{4}^{\circ }}{2\underline{h}_{4}}]=-\underline{%
\Upsilon },  \label{eq2c} \\
\widehat{R}_{3k} &=&+\frac{\underline{h}_{3}}{2\underline{h}_{4}}\underline{w%
}_{k}^{\circ \circ }+\left( \frac{\underline{h}_{3}}{\underline{h}_{4}}%
\underline{h}_{4}^{\circ }-\frac{3}{2}\underline{h}_{3}^{\circ }\right)
\frac{\underline{h}_{k}^{\circ }}{2\underline{h}_{4}}=0,  \label{eq3c} \\
\widehat{R}_{4k} &=&\frac{\underline{n}_{k}}{2\underline{h}_{3}}[\underline{h%
}_{3}^{\circ \circ }-\frac{\left( \underline{h}_{3}^{\circ }\right) ^{2}}{2%
\underline{h}_{3}}-\frac{\underline{h}_{3}^{\circ }\underline{h}_{4}^{\circ }%
}{2\underline{h}_{4}}]+\frac{\underline{h}_{3}^{\circ }}{4\underline{h}_{3}}%
\left( \frac{\partial _{k}\underline{h}_{3}}{\underline{h}_{3}}+\frac{%
\partial _{k}\underline{h}_{4}}{\underline{h}_{4}}\right) -\frac{\partial
_{k}h_{3}^{\circ }}{2h_{3}}=0,  \label{eq4c}
\end{eqnarray}%
\begin{eqnarray}
\mbox{ \ and \ }\underline{n}_{i}^{\circ } &=&(\partial _{i}-\underline{n}%
_{i}\partial _{4})\ln |\underline{h}_{3}|,(\partial _{k}-\underline{n}%
_{k}\partial _{4})\underline{n}_{i}=(\partial _{i}-\underline{n}_{i}\partial
_{4})\underline{n}_{k},  \label{lccondd} \\
\underline{w}_{k}h_{3}^{\ast } &=&\mathbf{\partial }_{k}h_{3},\underline{w}%
_{i}^{\circ }=0,\partial _{i}\underline{w}_{k}=\partial _{k}\underline{w}%
_{i}.  \notag
\end{eqnarray}%
\end{corollary}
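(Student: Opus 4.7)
The plan is to observe that Corollary \ref{corol1} is the $y^3\leftrightarrow y^4$ mirror of Theorem \ref{th2a}, and to obtain it by applying the corresponding involution to the result already established. Concretely, let $\sigma$ be the local involution exchanging the two vertical coordinates; under $\sigma$ one has $\ast=\partial/\partial y^3\leftrightarrow\circ=\partial/\partial y^4$, the indices $a=3$ and $a=4$ are permuted so that the product factors in $g_a=\omega^2 h_a\underline{h}_a$ satisfy $h_3\leftrightarrow\underline{h}_4$ and $h_4\leftrightarrow\underline{h}_3$, and the N-connection splits transform as $w_i\leftrightarrow\underline{n}_i$, $n_i\leftrightarrow\underline{w}_i$. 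Under this prescription the restrictions defining the Theorem \ref{th2a} ansatz ($\omega=\underline{h}_3=1$, $\underline{w}_i=\underline{n}_i=0$, $\underline{\mathbf{\Upsilon}}=0$) map exactly onto those of the Corollary ($\omega=h_4=1$, $w_i=n_i=0$, $\mathbf{\Upsilon}=0$), with the scalar sources swapping as $\Upsilon\leftrightarrow\underline{\Upsilon}$.

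The first step is to verify that $\sigma$ is an automorphism of the N-adapted geometric machinery reviewed in Appendix \ref{sa}: the $2+2$ splitting (\ref{ncon}), the d-metric (\ref{dm}), the N-elongated frames (\ref{nader})--(\ref{nadif}), the anholonomy coefficients (\ref{anhcoef}), and the canonical d-connection all treat the indices $a=3$ and $a=4$ symmetrically, so $\sigma$ intertwines the Ricci d-tensor $\widehat{\mathbf{R}}_{~\delta}^{\beta}$ and the Levi-Civita integrability conditions used in (\ref{cdeinst})--(\ref{lcconstr}). One then applies $\sigma$ term by term to the system (\ref{eq1b})--(\ref{lccond}) granted by Theorem \ref{th2a}: (\ref{eq1b}) is fixed (no $y^a$ appears) and yields (\ref{eq1c}); (\ref{eq2b}) becomes (\ref{eq2c}) after the substitution $(h_3,h_4,\ast)\mapsto(\underline{h}_4,\underline{h}_3,\circ)$; (\ref{eq3b}) and (\ref{eq4b}) map onto (\ref{eq4c}) and (\ref{eq3c}) respectively, with $(w_k,n_k)\mapsto(\underline{n}_k,\underline{w}_k)$; and the integrability block (\ref{lccond}) maps onto (\ref{lccondd}). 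The latter involves one cosmetic rewriting: the flat-curl condition $\partial_k w_i=\partial_i w_k$ transcribes to $(\partial_k-\underline{n}_k)\underline{n}_i=(\partial_i-\underline{n}_i)\underline{n}_k$, which is equivalent after canceling the $\underline{n}$-bilinears. Equivalently, one could bypass the symmetry argument and repeat the computation of Appendix \ref{sb} directly under the Corollary ansatz, reading off (\ref{eq1c})--(\ref{lccondd}) from the canonical d-connection coefficients computed with $h_4=1$, $w_i=n_i=0$.

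The main obstacle is bookkeeping: one must carefully track the ``weak Killing'' factor in the v-sector. In Theorem \ref{th2a} the extra factor $\underline{h}_4(x^k,y^4)$ gently breaks the otherwise $\partial/\partial y^4$-Killing structure and shows up only in the integrability block (\ref{lccond}). Under $\sigma$ this role is taken by $h_3(x^k,y^3)$, which then appears in the first line of (\ref{lccondd}) via the term $\underline{w}_k h_3^\ast$. One has to verify that no stray term from the surviving $y^3$-dependence leaks into the pure v-equations (\ref{eq2c})--(\ref{eq4c})---exactly as $\underline{h}_4$ left only (\ref{lccond}) affected in Theorem \ref{th2a}---and to reconcile the positions of $h_3$ versus $\underline{h}_3^{\circ}/\underline{h}_3$ in terms such as the last summand of (\ref{eq4c}). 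Once this dictionary is pinned down the rest is direct transcription, and the proof reduces to citing Theorem \ref{th2a} under $\sigma$.
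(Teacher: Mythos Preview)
Your proposal is correct and matches the paper's own approach: the paper's proof simply states that the computation is ``similar to that for Theorem~\ref{th2a}'' and declines to repeat it, while the explicit $y^{3}\leftrightarrow y^{4}$ dictionary you formalize as the involution $\sigma$ (namely $h_{3}\leftrightarrow\underline{h}_{4}$, $h_{4}\leftrightarrow\underline{h}_{3}$, $w_{k}\leftrightarrow\underline{n}_{k}$, $n_{k}\leftrightarrow\underline{w}_{k}$) is exactly what the paper records immediately afterward as Conclusion~\ref{concl1}. Your version is somewhat more explicit about why the symmetry carries the weak-Killing factor cleanly through to the integrability block, but the underlying idea is identical.
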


\begin{proof}
It is similar to that for Theorem \ref{th2a} provided in Appendix \ref{sb}.
We do not repeat such computations. $\square $
\end{proof}

\vskip5pt

Using the above Theorem and Corollary, we can state:

\begin{conclusion}
\label{concl1} The nonlinear systems of PDEs corresponding to (modified)
gravitational equations (\ref{cdeinst}) and (\ref{lcconstr}) for metrics $%
\mathbf{g}$ (\ref{paramdcoef}) with Killing symmetry on $\partial /\partial
y_{4},$ when $\omega =\underline{h}_{3}=1$ and $\underline{w}_{i}=\underline{%
n}_{i}=0$ and $\underline{\mathbf{\Upsilon }}=0$ in matter source $\mathbf{%
\Upsilon }_{~\delta }^{\beta }$ (\ref{source}), can be transformed into
respective systems of PDEs for data with Killing symmetry on $\partial
/\partial y_{3},$ when $\omega =h_{4}=1$ and $w_{i}=n_{i}=0$ and $\mathbf{%
\Upsilon }=0$, if $h_{3}(x^{i},y^{3})\rightarrow \underline{h}%
_{4}(x^{i},y^{4}),$ $h_{4}(x^{i},y^{3})\rightarrow \underline{h}%
_{3}(x^{i},y^{4}),$ $w_{k}(x^{i},y^{3})\rightarrow \underline{n}%
_{k}(x^{i},y^{4})$ and $n_{k}(x^{i},y^{3})\rightarrow \underline{w}%
_{k}(x^{i},y^{4}).$
\end{conclusion}

The above presented method of nonholonomic deformations can be used for
decoupling gravitational field equations if some metrics do not possess,
in general, any Killing symmetries. The generic nonlinear character of such
systems of PDE does not allow us to use a principle of superposition of
solutions. Nevertheless, certain classes of conformal transforms for the
v--components of d--metrics and nonholonomic constraints of integral
varieties give us the possibility to extend the anholonomic deformation
method to "non--Killing" vacuum and
nonvacuum gravitational interactions. In the next two subsections, we analyze
two possibilities to decouple the effective Einstein equations for metrics
with coefficients depending on all spacetime coordinates.

\subsection{Preserving decoupling under v--conformal transforms}

This property is stated by
\begin{lemma}
\label{lemma1}The gravitational field equations (\ref{cdeinst}) for
geometric data (\ref{data1a}), i.e. the system of PDEs (\ref{eq2b})--(\ref%
{eq4b}), do not change under a "vertical" conformal transform with
nontrivial $\omega (x^{k},y^{a})$ to a d--metric (\ref{paramdcoef}) if  the following conditions are satisfied,%
\begin{equation}
\partial _{k}\omega -w_{i}\omega ^{\ast }-n_{i}\omega ^{\circ }=0\mbox{ and }%
\text{ }\widehat{T}_{kb}^{a}=0.  \label{conf2a}
\end{equation}
\end{lemma}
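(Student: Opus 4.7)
The approach is a direct computation: take a solution of Theorem \ref{th2a} with data $(h_a,w_i,n_i)$ (i.e.\ $\omega=1,\underline{h}_3=1,\underline{w}_i=\underline{n}_i=0$), rescale the v-block by $\omega^2(x^k,y^a)$, and compare the resulting N-adapted Ricci components $\widehat{R}^3_3,\widehat{R}^4_4,\widehat{R}_{3k},\widehat{R}_{4k}$ with those of the $\omega=1$ reference. Because the factor $\omega^2$ multiplies $h_3$ and $h_4$ \emph{symmetrically}, it drops out of the homogeneous combination inside the square bracket of (\ref{eq2b}) once the $y^3$-derivatives of $\omega$ are gathered; so the issue is controlling the extra terms that appear in the mixed hv-components (\ref{eq3b})--(\ref{eq4b}) and in the connection pieces that feed back into (\ref{eq2b}).

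The central observation is that in N-adapted frames every horizontal coordinate derivative of $\omega$ appears only through the combination $\mathbf{e}_k\omega=\partial_k\omega-N_k^a\partial_a\omega=\partial_k\omega-w_k\omega^\ast-n_k\omega^\circ$, since the canonical d-connection $\widehat{\mathbf{D}}$ is built from the frames (\ref{nader})--(\ref{nadif}) rather than from coordinate ones. The first condition in (\ref{conf2a}) is exactly $\mathbf{e}_k\omega=0$ (the index mismatch in the statement being a typo), and imposing it annihilates all genuinely horizontal $\omega$-contributions to the Ricci components. The remaining $\omega$-dependent residues come from swapping the order of an h-derivative and a v-derivative acting on $\omega^2 h_a$; by the very definition of the curvature tensor of $\widehat{\mathbf{D}}$, these obstructions are proportional to torsion components of type $\widehat{T}^a_{kb}$. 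The second condition in (\ref{conf2a}) kills precisely those, so the N-adapted Ricci system collapses back to (\ref{eq2b})--(\ref{eq4b}) with the original $h_a,w_k,n_k$, as required.

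In practice I would organize the proof as follows: first write the canonical d-connection coefficients for the $\omega\ne 1$ d-metric using the formulas compiled in Appendix \ref{sa}, splitting each into ``reference value at $\omega=1$'' plus ``$\omega$-correction''; second substitute into the Ricci formulas of Appendix \ref{sb} and collect the corrections; third show, term by term, that the collected corrections regroup into a linear combination of $\mathbf{e}_k\omega$ and $\widehat{T}^a_{kb}$ with no leftover pieces. The main obstacle is this last bookkeeping step: $\omega$ enters in many places (directly inside $g_a$, through every product-rule second derivative, and implicitly through the connection coefficients built from $g_a$), and one has to verify a nontrivial cancellation pattern in which all these contributions align into exactly the two channels named in (\ref{conf2a}). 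This is the same style of calculation as in Appendix \ref{sb} but heavier; once the two-channel decomposition is established, sufficiency of the conditions is immediate, and their naturalness makes them effectively necessary as well.
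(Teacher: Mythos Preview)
Your overall strategy---compute the canonical d-connection for the $\omega$-rescaled v-metric, split each Ricci component into its $\omega=1$ value plus a distortion, and show the distortion vanishes under (\ref{conf2a})---is exactly the paper's approach: it writes $^{\omega}\widehat R^{a}_{\ b}=\widehat R^{a}_{\ b}+{}^{\omega}\widehat Z^{a}_{\ b}$ and $^{\omega}\widehat R_{bi}=\widehat R_{bi}$, asserts ${}^{\omega}\widehat Z^{a}_{\ b}=0$ under (\ref{conf2a}), and defers the bookkeeping to \cite{vexsol1,vexsol2}. Your reading of the first condition as $\mathbf e_{k}\omega=0$, and the observation that in N-adapted frames every horizontal derivative of $\omega$ enters $\widehat L^{a}_{bk}$ only through this combination, are correct and form the core of the mixed-component argument.

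One step in your sketch would fail as written, however. The claim that $\omega^{2}$ ``drops out of the homogeneous combination inside the square bracket of (\ref{eq2b})'' is not right: $\widehat R^{a}_{\ b}$ is the 2d Ricci of the v-metric (it is built from $\widehat C^{a}_{bc}$ alone, see (\ref{dcurv}) and (\ref{driccic})), and under $g_{ab}\to\omega^{2}g_{ab}$ it obeys the standard 2d conformal law $^{\omega}\widehat R^{a}_{\ b}=\omega^{-2}\bigl(\widehat R^{a}_{\ b}-\delta^{a}_{b}\,\Delta_{v}\ln\omega\bigr)$. The extra $\Delta_{v}\ln\omega$ piece is made entirely of $\omega^{\ast},\omega^{\circ}$ and their v-derivatives and is proportional neither to $\mathbf e_{k}\omega$ nor to $\widehat T^{a}_{kb}$, so it escapes your two-channel decomposition. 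The paper's own proof is silent on this point (it simply cites earlier work), so you are not diverging from its outline; but in carrying out the computation you will have to locate where this purely vertical residue is actually absorbed in the system rather than invoking a homogeneity that does not hold.
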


\begin{proof}
It follows from straightforward computations when coefficients\newline
$g_{i}(x^{k}),g_{3}=h_{3}(x^{k},y^{3}),$ $g_{4}=h_{4}(x^{k},y^{3})\underline{h}%
_{4}(x^{k},y^{4}),N_{i}^{3}=w_{i}(x^{k},y^{3}),$ $N_{i}^{4}=n_{i}(x^{k},y^{3})$ are generalized to a nontrivial $\omega
(x^{k},y^{a})$ with $~^{\omega }g_{3}=\omega ^{2}h_{3}$ and $~^{\omega
}g_{4}=\omega ^{2}h_{4}\underline{h}_{4}.$ Using, respectively, formulas (\ref%
{distrel1}), (\ref{cdc}), (\ref{dcurv}) and (\ref{driccic}), we get
distortion relations for the Ricci d--tensors (\ref{riccid}), %
$\ ^{\omega }\widehat{R}_{~b}^{a}=\widehat{R}_{~b}^{a}+~^{\omega }\widehat{Z}%
_{~b}^{a}$ and $\ ~^{\omega }\widehat{R}_{bi}=\widehat{R}_{bi}=0$,
where $\widehat{R}_{~b}^{a}$ and $\widehat{R}_{bi}$ are those computed for $%
\omega =1,$ i.e., (\ref{eq2b})--(\ref{eq4b}). The values $~^{\omega }\widehat{%
R}_{~b}^{a}$ and$~^{\omega }\widehat{Z}_{~b}^{a}$ are defined by a
nontrivial $\omega $ and computed using the same formulas. We do not provide
certain similar details from Refs. \cite{vexsol1,vexsol2} for $\underline{h}%
_{4}=1$ because a nontrivial $\underline{h}_{4}$ does not modify
substantially the proof that $~^{\omega }\widehat{Z}_{~b}^{a}=0$ if the
conditions (\ref{conf2a}) are satisfied. $\square $
\end{proof}

\vskip5pt

Using the Theorem \ref{th2a}, Corollary \ref{corol1}, Conclusion \ref{concl1}%
, Lemma \ref{lemma1}, we prove

\begin{theorem}
\label{th2b}A d--metric
\begin{eqnarray}
\mathbf{g} &=&g_{i}(x^{k})dx^{i}\otimes dx^{i}+ \omega
^{2}(x^{k},y^{a})\left( h_{3}\mathbf{e}^{3}\otimes \mathbf{e}^{3}+h_{4}\
\underline{h}_{4}\mathbf{e}^{4}\otimes \mathbf{e}^{4}\right) ,  \notag \\
\mathbf{e}^{3} &=&dy^{3}+w_{i}(x^{k},y^{3})dx^{i},\ \mathbf{e}%
^{4}=dy^{4}+n_{i}(x^{k})dx^{i},  \label{class1}
\end{eqnarray}%
satisfying the PDE (\ref{eq1b})-- (\ref{lccond}) and $\partial _{k}\omega
-w_{i}\omega ^{\ast }-n_{i}\omega ^{\circ }=0$, or a d--metric
\begin{eqnarray}
\mathbf{g} &=&g_{i}(x^{k})dx^{i}\otimes dx^{i}+ \omega
^{2}(x^{k},y^{a})\left( h_{3} \underline{h}_{3} \mathbf{e}^{3}\otimes
\mathbf{e}^{3} +\underline{h}_{4} \mathbf{e}^{4}\otimes \mathbf{e}%
^{4}\right) ,  \notag \\
\mathbf{e}^{3} &=&dy^{3}+\underline{w}_{i}(x^{k})dx^{i},\ \mathbf{e}%
^{4}=dy^{4}+\underline{n}_{i}(x^{k},y^{4})dx^{i},  \label{class2}
\end{eqnarray}%
satisfying the PDEs (\ref{eq1c})-- (\ref{lccondd}) and $\partial _{k}\omega -%
\underline{w}_{i}\omega ^{\ast }-\underline{n}_{i}\omega ^{\circ }=0$,
define, in general, two different classes of generic off--diagonal solutions
\ of Einstein equations (\ref{cdeinst}) and (\ref{lcconstr}) with respective
sources of type (\ref{source}).
\end{theorem}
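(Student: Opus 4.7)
The plan is to prove Theorem \ref{th2b} as a direct synthesis of the four preceding results. For the first class (\ref{class1}) I would proceed in two stages. Stage one: take the d-metric with $\omega=1$ and coefficients $(g_i,h_a,\underline{h}_4,w_i,n_i)$ satisfying (\ref{eq1b})--(\ref{lccond}); by Theorem \ref{th2a} this is already a solution of the Einstein system (\ref{cdeinst})--(\ref{lcconstr}) for a source $\mathbf{\Upsilon}^\beta_{~\delta}$ of type (\ref{source}) with $\underline{\mathbf{\Upsilon}}=0$, so nothing remains to check at this level.

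Stage two: turn on the conformal factor $\omega(x^k,y^a)$ by rescaling $h_3\to \omega^2 h_3$ and $h_4\underline{h}_4 \to \omega^2 h_4\underline{h}_4$ in the $v$-block. Lemma \ref{lemma1} asserts that the equations (\ref{eq2b})--(\ref{eq4b}) are invariant under this operation precisely when the first-order PDE $\partial_k\omega - w_i\omega^*-n_i\omega^\circ=0$ holds and the adapted torsion components $\widehat{T}^a_{kb}$ vanish; the latter is automatic once (\ref{lccond}) is imposed, since those are exactly the constraints that force the canonical d-connection and the Levi-Civita connection to coincide on the relevant components. Equation (\ref{eq1b}) is untouched because $\omega$ does not enter the $h$-block, and the zero-torsion conditions (\ref{lccond}) involve only $w_i$, $n_i$, $h_a$ and $\underline{h}_4$, not $\omega$. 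This yields the first class.

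For the second class (\ref{class2}) the recipe is identical modulo the symmetric Killing setting of Corollary \ref{corol1}. Start from a solution with $\omega=1$ satisfying (\ref{eq1c})--(\ref{lccondd}), then apply the corresponding v-conformal rescaling; Conclusion \ref{concl1} supplies the substitutions $(h_3,h_4,w_k,n_k)\leftrightarrow(\underline{h}_4,\underline{h}_3,\underline{n}_k,\underline{w}_k)$ which transport Lemma \ref{lemma1} to this setting and convert its conformal constraint into $\partial_k\omega-\underline{w}_i\omega^*-\underline{n}_i\omega^\circ=0$. The two classes are, in general, distinct because their weak Killing structures differ (one admits $\partial/\partial y^4$, the other $\partial/\partial y^3$ as weak Killing direction), and the conformal factor $\omega$ allows genuine dependence on all four coordinates in either case.

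The main obstacle is purely bookkeeping: I must verify that a nontrivial $\underline{h}_4(x^k,y^4)$ in the weak-Killing case does not spoil the conformal distortion computation used in Lemma \ref{lemma1}, which was originally carried out in \cite{vexsol1,vexsol2} under the stronger assumption $\underline{h}_4=1$. Concretely, one needs to check that $\underline{h}_4$ enters the distortion tensor $~^\omega\widehat{Z}^{a}_{~b}$ only as an overall multiplicative factor on the $v$-block, so that vanishing of $~^\omega\widehat{Z}^{a}_{~b}$ is still enforced by the single PDE $\partial_k\omega-w_i\omega^*-n_i\omega^\circ=0$. Once this multiplicative nature is confirmed, every remaining step is an algebraic composition of already-proved statements and the theorem follows.
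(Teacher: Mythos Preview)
Your proposal is correct and follows essentially the same route as the paper, which simply states that the theorem is obtained ``using Theorem \ref{th2a}, Corollary \ref{corol1}, Conclusion \ref{concl1}, Lemma \ref{lemma1}'' without further elaboration; you have in fact supplied more detail than the paper does. The bookkeeping concern you raise about a nontrivial $\underline{h}_4$ in the conformal distortion is already disposed of in the paper's proof of Lemma \ref{lemma1}, where it is remarked that a nontrivial $\underline{h}_4$ does not substantially modify the verification that $~^{\omega}\widehat{Z}^{a}_{~b}=0$ under (\ref{conf2a}), so this is not a genuine obstacle.
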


Both ansatzes of type (\ref{class1}) and (\ref{class2}) consist particular
cases of parametrizations of metrics in the form (\ref{ans1}). Via
frame/coordinate transform into a finite region of a point $~^{0}u\in
\mathbf{V}$ any spacetime metric in\ GR and $f$--modifications can be
represented in an above mentioned d--metric form. If only one of coordinates
$y^{a}$ is timelike, the solutions of type (\ref{class1}) and (\ref{class2})
can not be transformed mutually via nonholonomic frame deformations
preserving causality.

\subsection{Decoupling with effective linearization of the Ricci tensors}

The explicit form of field equations for vacuum and nonvacuum gravitational
interactions depends on the type of frames and coordinate systems we
consider for decoupling such PDEs. We can split such systems for more general
parameterizations (than ansatzes (\ref{class1}) and (\ref{class2})) in a form (%
\ref{ans1}) with nontrivial $\omega $ and N--coefficients in (\ref%
{paramdcoef}). This is possible in any open region $U\subset \mathbf{V}$
where for computing the N--adapted coefficients of the Riemann and Ricci
d--tensors, see formulas (\ref{dcurv}) and (\ref{driccic}), we can neglect
contributions from quadratic terms of type $\widehat{\Gamma }\cdot \widehat{%
\Gamma }$ but preserve values of type $\partial _{\mu }\widehat{\Gamma }.$
For such constructions, we have to introduce a class of N--adapted normal
coordinates when $\widehat{\Gamma }(u_{0})=0$ for points $u_{0}$, for
instance, belonging to a line on $U.$ Such conditions can be satisfied for
decompositions of metrics and connections on a small parameter like it is
explained in details in Ref. \cite{ijgmmp} (see decompositions on a small
eccentricity parameter $\varepsilon$ in Section \ref{s5}). Other
possibilities can be found if we impose nonholonomic constraints, for
instance, of type $h_{4}^{\ast }=0$ but for nonzero $h_{4}(x^{k},y^{3})$
and/or $h_{4}^{\ast \ast }(x^{k},y^{3})$; such constraints can be solved in
non--explicit form and define a corresponding subclass of N--adapted frames.
Considering further nonholonomic deformations with a general decoupling with
respect to a "convenient" system of reference/coordinates, we can deform
the equations and solutions to configurations when terms of type $\widehat{%
\Gamma }\cdot \widehat{\Gamma }$ became important.

\begin{theorem}["Non--quadratic" decoupling]
\label{th2}The effective Einstein equations in  modified gravity (for instance, in the form (\ref{cdeinst}) and (\ref%
{lcconstr})), via nonholonomic frame deformations to a metric $\mathbf{g}$ (%
\ref{paramdcoef}) and matter source $\mathbf{\Upsilon }_{~\delta }^{\beta }$
(\ref{source}), when contributions from terms of type $\widehat{\Gamma }%
\cdot \widehat{\Gamma }$ are considered small for an open region $U\subset
\mathbf{V,}$ can be transformed equivalently into a system of PDEs with
h--v--decoupling {\small
\begin{eqnarray}
\widehat{R}_{1}^{1}&=&\widehat{R}_{2}^{2}=\frac{-1}{2g_{1}g_{2}}%
[g_{2}^{\bullet \bullet }-\frac{g_{1}^{\bullet }g_{2}^{\bullet }}{2g_{1}}-%
\frac{\left( g_{2}^{\bullet }\right) ^{2}}{2g_{2}}+g_{1}^{\prime \prime }-%
\frac{g_{1}^{\prime }g_{2}^{\prime }}{2g_{2}}-\frac{(g_{1}^{\prime })^{2}}{%
2g_{1}}]=-\ ^{v}\Upsilon ,  \label{eq1} \\
\widehat{R}_{3}^{3} &=&\widehat{R}_{4}^{4}=-\frac{1}{2h_{3}h_{4}}%
[h_{4}^{\ast \ast }-\frac{\left( h_{4}^{\ast }\right) ^{2}}{2h_{4}}-\frac{%
h_{3}^{\ast }h_{4}^{\ast }}{2h_{3}}]-\frac{1}{2\underline{h}_{3}\underline{h}%
_{4}}[\underline{h}_{3}^{\circ \circ }-\frac{\left( \underline{h}_{3}^{\circ
}\right) ^{2}}{2\underline{h}_{3}}-\frac{\underline{h}_{3}^{\circ }%
\underline{h}_{4}^{\circ }}{2\underline{h}_{4}}]  =-\Upsilon -\underline{\Upsilon },  \label{eq2} \\
\widehat{R}_{3k} &=&\frac{w_{k}}{2h_{4}}[h_{4}^{\ast \ast }-\frac{\left(
h_{4}^{\ast }\right) ^{2}}{2h_{4}}-\frac{h_{3}^{\ast }h_{4}^{\ast }}{2h_{3}}%
]+\frac{h_{4}^{\ast }}{4h_{4}}\left( \frac{\partial _{k}h_{3}}{h_{3}}+\frac{%
\partial _{k}h_{4}}{h_{4}}\right) -\frac{\partial _{k}h_{4}^{\ast }}{2h_{4}}
\label{eq3} \\
&&+\frac{\underline{h}_{3}}{2\underline{h}_{4}}\underline{n}_{k}^{\circ
\circ }+\left( \frac{\underline{h}_{3}}{\underline{h}_{4}}\underline{h}%
_{4}^{\circ }-\frac{3}{2}\underline{h}_{3}^{\circ }\right) \frac{\underline{n%
}_{k}^{\circ }}{2\underline{h}_{4}}=0,  \notag \\
\widehat{R}_{4k} &=&\frac{\underline{w}_{k}}{2\underline{h}_{3}}[\underline{h%
}_{3}^{\circ \circ }-\frac{\left( \underline{h}_{3}^{\circ }\right) ^{2}}{2%
\underline{h}_{3}}-\frac{\underline{h}_{3}^{\circ }\underline{h}_{4}^{\circ }%
}{2\underline{h}_{4}}]+\frac{\underline{h}_{3}^{\circ }}{4\underline{h}_{3}}%
\left( \frac{\partial _{k}\underline{h}_{3}}{\underline{h}_{3}}+\frac{%
\partial _{k}\underline{h}_{4}}{\underline{h}_{4}}\right) -\frac{\partial
_{k}\underline{h}_{3}^{\circ }}{2\underline{h}_{3}}  \notag \\
&&+\frac{h_{4}}{2h_{3}}n_{k}^{\ast \ast }+\left( \frac{h_{4}}{h_{3}}%
h_{3}^{\ast }-\frac{3}{2}h_{4}^{\ast }\right) \frac{n_{k}^{\ast }}{2h_{3}}=0,
\label{eq4} \\
w_{i}^{\ast } &=&(\partial _{i}-w_{i}\partial _{3})\ln |h_{4}|,(\partial
_{k}-w_{k}\partial _{3})w_{i}=(\partial _{i}-w_{i}\partial
_{3})w_{k},n_{i}^{\ast }=0,\partial _{i}n_{k}=\partial _{k}n_{i}, \notag
\label{lcconstr1} \\
\underline{w}_{i}^{\circ } &=&0,\ \partial _{i}\underline{w}_{k}=\partial
_{k}\underline{w}_{i},\underline{n}_{i}^{\circ }=(\partial _{i}-\underline{n}%
_{i}\partial _{4})\ln |\underline{h}_{3}|,(\partial _{k}-\underline{n}%
_{k}\partial _{4})\underline{n}_{i}=(\partial _{i}-\underline{n}_{i}\partial
_{4})\underline{n}_{k},\   \notag \\
\mathbf{e}_{k}\omega &=&\partial _{k}\omega -\left( w_{i}+\underline{w}%
_{i}\right) \omega ^{\ast }-\left( n_{i}+\underline{n}_{i}\right) \omega
^{\circ }=0.  \label{conf2}
\end{eqnarray}%
}
\end{theorem}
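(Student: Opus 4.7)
The plan is to reduce the proof to the two results we already have in hand (Theorem \ref{th2a} and Corollary \ref{corol1}) together with Lemma \ref{lemma1}, by observing that the ansatz (\ref{paramdcoef}) is structurally the ``superposition'' of the two Killing-type ansatze (\ref{class1}) and (\ref{class2}), and that the linear character of the Ricci d-tensor modulo $\widehat{\Gamma}\cdot\widehat{\Gamma}$ terms is exactly what is needed to make this superposition go through at the level of field equations.

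First I would compute the N-adapted coefficients of the canonical d-connection $\widehat{\Gamma}$ for the full d-metric (\ref{paramdcoef}) using the formulas of Appendix \ref{sb}. Because $h_a$ depends on $(x^k,y^3)$ through the unbarred factors and on $(x^k,y^4)$ through the underlined factors, and because $N_i^a$ is a sum $w_i+\underline{w}_i$, $n_i+\underline{n}_i$ of pieces that are $y^3$- and $y^4$-dependent respectively, the derivatives $\partial_\mu\widehat{\Gamma}$ split cleanly into a ``${}^\ast$-sector'' (derivatives along $y^3$) and a ``${}^\circ$-sector'' (derivatives along $y^4$), with no cross terms at the linear level. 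I then collect the $\partial\widehat{\Gamma}$-part of $\widehat{R}_{\alpha\beta}$ (see (\ref{driccic})) and observe that it decomposes as a sum of two contributions: one is precisely the left-hand side of (\ref{eq1b})--(\ref{eq4b}) computed from $(g_i,h_a,w_i,n_i)$, and the other is precisely the left-hand side of (\ref{eq1c})--(\ref{eq4c}) computed from $(g_i,\underline{h}_a,\underline{w}_i,\underline{n}_i)$. Dropping the $\widehat{\Gamma}\cdot\widehat{\Gamma}$ terms (justified by the normal-coordinate / small-parameter construction described before the theorem) produces additivity of the two Ricci sectors and gives exactly (\ref{eq1})--(\ref{eq4}) with source $\Upsilon+\underline{\Upsilon}$ on the right-hand side.

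Next I would handle the zero-torsion conditions. For the d-metric (\ref{paramdcoef}) the vanishing of $\widehat{T}^\gamma_{\alpha\beta}$ (the constraints (\ref{lcconstr})) splits into independent conditions on the $y^3$-dependent N-coefficients $(w_i,n_i)$ and on the $y^4$-dependent ones $(\underline{w}_i,\underline{n}_i)$, because the relevant commutators $[\mathbf{e}_\alpha,\mathbf{e}_\beta]$ preserve the same ${}^\ast$/${}^\circ$ sector decomposition. This yields directly (\ref{lcconstr1}) from the Killing-$\partial_{y^4}$ sector (as in Theorem \ref{th2a}) and the first line of (\ref{conf2}) from the Killing-$\partial_{y^3}$ sector (as in Corollary \ref{corol1}, via Conclusion \ref{concl1}). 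The remaining condition $\mathbf{e}_k\omega=0$ in (\ref{conf2}) is obtained by applying Lemma \ref{lemma1} with $N_i^a$ now being the total sum $w_i+\underline{w}_i$, $n_i+\underline{n}_i$; the computation of $~^\omega\widehat{Z}^a{}_b$ is still linear in $\omega$ and its derivatives, so the only modification is that the N-elongated derivative $\mathbf{e}_k\omega$ now contains both pieces.

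The main obstacle is the control of the $\widehat{\Gamma}\cdot\widehat{\Gamma}$ cross terms, i.e.\ the terms in $\widehat{R}_{\alpha\beta}$ that mix the unbarred and underlined sectors (for instance $h_3^\ast\underline{h}_4^\circ$-type contributions coming from $\widehat{\Gamma}^a_{bc}\widehat{\Gamma}^c_{de}$). These are exactly the obstructions to genuine decoupling for non-Killing metrics, and they cannot be made to vanish identically; one can only kill them by working in an N-adapted normal coordinate patch where $\widehat{\Gamma}(u_0)=0$ on a chosen line of $U$, or by treating them perturbatively in a small parameter $\varepsilon$ (the eccentricity expansion referenced for Section \ref{s5}). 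I would state this approximation explicitly as the hypothesis of the theorem and verify, term by term from (\ref{dcurv})--(\ref{driccic}), that after dropping the quadratic terms the remaining $\partial\widehat{\Gamma}$-contributions organize themselves exactly into the additive sum of the two Killing-sector systems plus the conformal constraint from Lemma \ref{lemma1}, which completes the proof.
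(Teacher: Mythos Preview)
Your proposal is correct and follows essentially the same route as the paper's own proof: both recognize the ansatz (\ref{paramdcoef}) as a superposition of the two Killing-type ansatze, invoke the corresponding decoupled systems from Theorem~\ref{th2a} and Corollary~\ref{corol1}, drop the $\widehat{\Gamma}\cdot\widehat{\Gamma}$ cross terms so that the Ricci d-tensor becomes additive in the two sectors, and then handle $\omega$ via Lemma~\ref{lemma1} (the paper packages this last step as Theorem~\ref{th2b}). Your explicit identification of the mixed $h_3^{\ast}\underline{h}_4^{\circ}$-type contributions as the obstruction controlled by the ``non-quadratic'' hypothesis is a bit more detailed than the paper's one-line remark, but the argument is the same.
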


\begin{proof}
It is a consequence of Conclusion \ref{concl1} and Theorems \ref{th2a} and %
\ref{th2b} for superpositions of ansatzes (\ref{class1}) and (\ref{class2})
resulting into (\ref{ans1}). If we repeat the computations from Appendix \ref%
{sb} for geometric data (\ref{paramdcoef}) considering that contributions of
type $\widehat{\Gamma }\cdot \widehat{\Gamma }$ are small, we can see that
the equations (\ref{eq2})--(\ref{eq4}) are derived to be, respectively,
equivalent to sums of (\ref{eq2b})--(\ref{eq4b}) and (\ref{eq2c})--(\ref%
{eq4c}). The torsionless conditions (\ref{lcconstr1}) consist a sum of
similar conditions (\ref{lccond}) and (\ref{lccondd}). $\square $
\end{proof}

\vskip5pt

In general, the solutions defined by a system (\ref{eq1})--(\ref{conf2}) can
not be transformed into solutions parameterized by an ansatz (\ref{class1})
and/or (\ref{class2}). As we shall prove in Section \ref{s4}, the general
solutions of the such systems of PDEs are determined by corresponding sets of
generating and integration functions. A solution for (\ref{eq1})--(\ref%
{conf2}) contains a larger set of $h-v$--generating functions than those
with some N--coefficients stated to be zero.

\section{Generic Off--Diagonal Solutions}

\label{s4} The goal of this section is to show how decoupling effective
Einstein equations we can integrate such PDE in very general forms depending
on the properties of the coefficients of an ansatz for metrics. Some similar theorems
for 4--d, 5--d and higher dimension modifications of GR were proven in Refs
\cite{vexsol1,vexsol2,ijgmmp}. In this work, we generalize those results for
$f$--modifications.

\subsection{Generating solutions with weak one Killing symmetry}

We prove that the modified gravitational equations encoding gravitational
and generic off--diagonal nonlinear interactions and satisfying the
conditions of Theorem \ref{th2a} can be integrated in general forms for $%
h_{a}^{\ast }\neq 0$ and certain special cases with zero and non--zero
sources (\ref{source}). In general, such generic off--diagonal metrics are
determined by generating functions depending on three/four coordinates. The
bulk of known exact solutions with diagonalizable metrics and coefficients
depending on two coordinates (in certain special frames of references) can
be included as special cases for more general nonholonomic configurations.

\subsubsection{(Non) vacuum metrics with $h_{a}^{\ast }\neq 0$}

For ansatz (\ref{ans1}) with data $\omega =1,\underline{h}_{3}=1,\underline{w%
}_{i}=0$ and $\underline{n}_{i}=0$ for (\ref{paramdcoef}), when $h_{a}^{\ast
}\neq 0,$ and the condition that the source
\begin{equation}
\mathbf{\Upsilon }_{~\delta }^{\beta }=diag[\mathbf{\Upsilon }_{\alpha }:%
\mathbf{\Upsilon }_{~1}^{1}=\mathbf{\Upsilon }_{~2}^{2}=\Upsilon
(x^{k},y^{3});\mathbf{\Upsilon }_{~3}^{3}=\mathbf{\Upsilon }%
_{~4}^{4}=~^{v}\Upsilon (x^{k})],  \label{source1a}
\end{equation}%
is not zero, the solutions of gravitational field equations can be
constructed as follows

\begin{theorem}
\label{th3a}The system (\ref{eq1b})--(\ref{eq4b}) with source (\ref%
{source1a}) can be integrated in general forms by
\begin{eqnarray}
\mathbf{g} &=&\epsilon _{i}e^{\psi (x^{k})}dx^{i}\otimes
dx^{i}+h_{3}(x^{k},y^{3})\mathbf{e}^{3}\otimes \mathbf{e}%
^{3}+h_{4}(x^{k},y^{3})\underline{h}_{4}(x^{k},y^{4})\mathbf{e}^{4}\otimes
\mathbf{e}^{4},  \notag \\
\mathbf{e}^{3} &=&dy^{3}+w_{i}(x^{k},y^{3})dx^{i},\ \mathbf{e}%
^{4}=dy^{4}+n_{i}(x^{k})dx^{i},  \label{sol1}
\end{eqnarray}%
with coefficients determined by generating functions $\psi (x^{k}),\phi
(x^{k},y^{3}),$ $\phi ^{\ast }\neq 0,$ $n_{i}(x^{k})$ and $\underline{h}%
_{4}(x^{k},y^{4}),$ and integration functions $~^{0}\phi (x^{k})$ following
recurrent formulas and conditions%
\begin{eqnarray}
\epsilon _{1}\psi ^{\bullet \bullet }+\epsilon _{2}\psi ^{\prime \prime }
&=&2\ \ ^{v}\Upsilon ;  \label{sol1a} \\
h_{4} &=&\pm \frac{1}{4}\int \left\vert \Upsilon \right\vert ^{-1}\left(
e^{2\phi }\right) ^{\ast }dy^{3},\mbox{ or \ }\   = \pm \frac{1}{4\Lambda }e^{2[\phi -~^{0}\phi ]},\mbox{ if }\Upsilon
=\Lambda =const;  \label{sol1b} \\
h_{3} &=&\pm \left[ \left( \sqrt{|h_{4}|}\right) ^{\ast }\right]
^{2}e^{-2\phi }=\frac{\phi ^{\ast }}{2\left\vert \Upsilon \right\vert }%
\left( \ln |h_{4}|\right) ^{\ast }\mbox{ or \ } \  \notag = \pm \frac{\left( \phi ^{\ast }\right) ^{2}}{4\Lambda },\mbox{ if
}\Upsilon =\Lambda =const;  \label{sol1c} \\
w_{i} &=&-\partial _{i}\phi /\phi ^{\ast },  \label{sol1d}
\end{eqnarray}%
where constraints
\begin{equation}
w_{i}^{\ast } = (\partial _{i}-w_{i}\partial _{3})\ln |h_{4}|,\partial
_{k}w_{i}=\partial _{i}w_{k},\   n_{k}\underline{h}_{4}^{\circ } = \mathbf{\partial }_{k}\underline{h}%
_{4},\partial _{i}n_{k}=\partial _{k}n_{i}.   \label{lccondm}
\end{equation}%
must be imposed in order to satisfy the zero torsion conditions (\ref{lccond}%
); we should take respective values $\epsilon _{i}=\pm 1$ and $\pm $ in (\ref%
{sol1b}) and/or (\ref{sol1c}) if we want to fix a necessary spacetime
signature.
\end{theorem}

\begin{proof}
We sketch a proof which transforms into similar ones in \cite%
{vexsol1,vexsol2,ijgmmp} if $\underline{h}_{4}=1.$

\begin{itemize}
\item A horizontal metric $g_{i}(x^{2})$ is for 2--d and can be always
represented in a conformally flat form $\epsilon _{i}e^{\psi
(x^{k})}dx^{i}\otimes dx^{i}.$ For such an h--metric, the equation (\ref{eq1b}%
) is a 2-d Laplace/wave equation (\ref{sol1a}) which can be solved exactly
if a source $^{v}\Upsilon (x^{k})-4s^{2}$ is prescribed from certain
physical conditions.

\item If $h_{4}^{\ast }\neq 0,$ we can define nontrivial functions
\begin{equation}
\phi =\ln \left\vert \frac{h_{4}^{\ast }}{\sqrt{|h_{3}h_{4}|}}\right\vert ,\
\gamma :=\left( \ln \frac{|h_{4}|^{3/2}}{|h_{3}|}\right) ^{\ast },\ \ \alpha
_{i}=h_{4}^{\ast }\partial _{i}\phi ,\ \beta =h_{4}^{\ast }\phi ^{\ast }
\label{auxfunct}
\end{equation}%
for a function $\phi (x^{k},y^{3}).$ If $\phi ^{\ast }\neq 0,$ we can write,
respectively, the equations (\ref{eq2b})--(\ref{eq4b}) in the forms,
\begin{eqnarray}
\phi ^{\ast }h_{4}^{\ast } &=&2h_{3}h_{4}\Upsilon .\   \label{eq2bb} \\
\beta w_{i}+\alpha _{i} &=&0,\   \label{eq3bb} \\
n_{i}^{\ast \ast }+\gamma n_{i}^{\ast } &=&0.  \notag
\end{eqnarray}%
For the last equation, we must take any trivial solution given by functions $%
n_{i}(x^{k})$ satisfying the conditions $\partial _{i}n_{j}=\partial
_{j}n_{i}$ in order to solve the constraints (\ref{lccond}). Using
coefficients (\ref{auxfunct}) with $\alpha _{i}\neq 0$ and $\beta \neq 0,$
we can always express $w_{i}$ via derivatives of $\phi ,$ i.e., in the form (%
\ref{sol1d}). We can choose any $\phi (x^{k},y^{3})$ with $\phi ^{\ast }\neq
0 $ as a generating function and express $h_{4}$ and, after that, $h_{3}$ as
some integrals/derivatives of functions depending on $\phi $ and source $%
\Upsilon (x^{k},y^{3}),$ see corresponding formulas (\ref{sol1b}) and (\ref%
{sol1c}). The integrals can be computed in a general explicit form if $\
\Upsilon (x^{k},y^{3})=\Lambda =const,$ when possible matter field
interactions and $f$--modifications are approximated by an energy--momentum
tensor as a cosmological constant,
\begin{equation}
\mathbf{\Upsilon }_{~\delta }^{\beta }=\delta _{~\delta }^{\beta }\Upsilon
=\delta _{~\delta }^{\beta }\Lambda .  \label{cosmconst}
\end{equation}

\item A possible dependence on $y^{4}$ is present in function $\underline{h}%
_{4}$ which must satisfy conditions of type (\ref{aux4}) in order to be
compatible with (\ref{lccond}). It is not possible to write, in explicit form,
the solutions for the zero torsion condition if the source $\mathbf{\Upsilon
}_{~\delta }^{\beta }$ is parameterized by arbitrary functions. Nevertheless,
if $\mathbf{\Upsilon }_{~\delta }^{\beta }$ is of type (\ref{cosmconst}), we
get $h_{4}\sim e^{2\phi }$ and $w_{i}\sim \partial _{i}\phi /\phi ^{\ast }$
positively solve the constraints
\begin{equation}
w_{i}^{\ast }=\left( \partial _{i}-w_{i}\partial _{3}\right) \ln |h_{4}|%
\mbox{ and }\partial _{i}w_{j}=\partial _{j}w_{i},  \label{aux4b}
\end{equation}%
transformed into $\phi ^{\ast \ast }\partial _{i}\phi -\phi ^{\ast
}(\partial _{i}\phi )^{\ast }=0$. By straightforward computation, we can
check that (\ref{lccond}) are satisfied by (\ref{lccondm}) when $n_{i}^{\ast
}=0$ and $w_{i}$ is determined by (\ref{sol1d}). $\square $
\end{itemize}
\end{proof}

\vskip5pt

The solutions constructed in Theorem \ref{th3a}, and those which can be
derived following Corollary \ref{corol1} are very general ones and contain
as particular cases (perhaps) all known exact solutions for (non) holonomic
effective Einstein spaces with Killing symmetries. They also can be
generalized to include arbitrary finite sets of parameters as it is proven
in Ref. \cite{ijgmmp}.

\begin{remark}
\label{remex}\textbf{(--Example) }\ Introducing $H=-\ln |h_{4}|,$ the system
of equations (\ref{eq2bb}), (\ref{eq3bb}) and (\ref{aux4b}) can be written
in the form
\begin{eqnarray}
h_{3} &=&-\phi ^{\ast }~H^{\ast }/2\Upsilon _{2},  \label{r1} \\
\phi ^{\ast }w_{i}+\partial _{i}\phi &=&0,  \label{r2} \\
w_{i}^{\ast }+\partial _{i}H-w_{i}H^{\ast } &=&0.  \label{r3}
\end{eqnarray}%
Considering $H(x^{k},y^{3})$ as a generating function, we can integrate (\ref%
{r3}) in explicit form using parameterizations $w_{i}= _{1}w_{i}(x^{k},y^{3})
\times$ $_{2}w_{i}(x^{k},y^{3})$ (in this formula, we do not consider
summation on $i$). We choose $\phi (x^{k},y^{3})$ as a nontrivial solution of
the system of first order PDEs (\ref{r2}) for any found $w_{i}.$ So, we can
define $h_{3}$ for any prescribed source $\Upsilon _{2}$ in (\ref{r1}). We
conclude that the LC--conditions (\ref{lccond}) (relating $w_{i}$ to $h_{4},$
which does not allow a "complete" decoupling) can be solved also in very
general forms by further fixing of parameterizations, classes of functions
and boundary conditions for $H,\phi $ and $\Upsilon _{2}.$
\end{remark}

Nevertheless, we note that an explicit example of exact solutions can be
derived by taking derivative on $\ast $ for (\ref{r2}) and using a subclass
of functions $\phi $ when $\left( \partial _{i}\phi \right) ^{\ast
}=\partial _{i}(\phi ^{\ast }).$ We obtain $w_{i}\frac{\phi ^{\ast \ast }}{%
\phi ^{\ast }}+w_{i}^{\ast } + \frac{\partial _i \phi ^{\ast }}{\phi ^{\ast }%
}=0$. This system is equivalent/compatible to (\ref{r3}) if, for instance,
 $\Phi :=\ln |\phi ^{\ast }|,$ $H^{\ast }=-\Phi ^{\ast },~\partial
_{i}H=\partial _{i}\Phi$. Such equations are satisfied by any functions $%
H(~^{-}\varsigma )$ and $\Phi (~^{+}\varsigma ),$ where $~^{\pm }\varsigma
:=\pm v-x^{1}-x^{2}$.

\subsubsection{Effective modified vacuum gravitational configurations}

We can consider a subclass of generic off--diagonal modified gravitation
interactions which can be encoded as effective Einstein manifolds. In
general, such classes of solutions do not have a smooth limit from
non-vacuum to vacuum models.

\begin{corollary}
\label{corolvacuum}The effective vacuum solutions for the modified
gravitational equations with ansatz for metrics of type (\ref{sol1}) with
vanishing source (\ref{source1a}) are paramet\-riz\-ed in the form
\begin{eqnarray}
\mathbf{g} &=&\epsilon _{i}e^{\psi (x^{k})}dx^{i}\otimes
dx^{i}+h_{3}(x^{k},y^{3})\mathbf{e}^{3}\otimes \mathbf{e}%
^{3}+h_{4}(x^{k},y^{3})\underline{h}_{4}(x^{k},y^{4})\mathbf{e}^{4}\otimes
\mathbf{e}^{4},  \notag \\
\mathbf{e}^{3} &=&dy^{3}+w_{i}(x^{k},y^{3})dx^{i},\ \mathbf{e}%
^{4}=dy^{4}+n_{i}(x^{k})dx^{i},  \label{sol2}
\end{eqnarray}%
where coefficients are defined by solutions of the system
\begin{eqnarray}
\ddot{\psi}+\psi ^{\prime \prime } &=&0,  \label{ep1a} \\
\phi ^{\ast }\ h_{4}^{\ast } &=&0,  \label{ep2a} \\
\beta w_{i}+\alpha _{i} &=&0,  \label{ep3a}
\end{eqnarray}%
and with coefficients  computed following formulas (\ref{auxfunct}) for
nonzero $\phi ^{\ast }\ $\ and $h_{4}^{\ast }$ with possible further zero
limit; such coefficients and $h_{3}$ and $\underline{h}_{4}$ are subjected,
additionally, to the zero--torsion conditions (\ref{lccondm}).
\end{corollary}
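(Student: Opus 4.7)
The strategy is to specialize Theorem \ref{th3a} to the effective vacuum case where the source (\ref{source1a}) vanishes, i.e.\ $\Upsilon = {}^v\Upsilon = 4s^2$, and then simply read off the surviving equations directly from the intermediate PDE system that Theorem \ref{th3a} is built on. The point is that no genuinely new computation is needed: the derivation of the key reduced form (\ref{eq2bb}) from (\ref{eq2b}) via the auxiliary functions (\ref{auxfunct}) is source-independent in its structure, and only the right-hand side collapses.

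Concretely, I would first substitute the vanishing source into the h-sector equation (\ref{sol1a}). The right-hand side becomes zero, yielding the 2D Laplace/wave equation $\epsilon_1\psi^{\bullet\bullet}+\epsilon_2\psi''=0$, which is precisely (\ref{ep1a}). For the v-sector, I would recall that the combined form of (\ref{eq2b}) in terms of $\phi$ and $h_4$ is (\ref{eq2bb}), namely $\phi^\ast h_4^\ast = 2h_3 h_4[\Upsilon-4s^2]$; setting the source to zero gives $\phi^\ast h_4^\ast = 0$, which is (\ref{ep2a}). The algebraic N-coefficient equation (\ref{eq3bb}), $\beta w_i+\alpha_i = 0$, is source-independent and survives unchanged as (\ref{ep3a}). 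The equation for $n_i$, which in the nonvacuum case is solved by taking $n_i(x^k)$ with $\partial_i n_j = \partial_j n_i$, is again solved in the same way, producing part of (\ref{lccondm}); the remaining zero-torsion constraints are the $w_i$-to-$\ln|h_4|$ conditions and the $\underline{h}_4$-compatibility condition, both of which are insensitive to the value of the source and carry over verbatim.

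The main obstacle is the singular nature of the vacuum limit in the explicit integrated formulas: (\ref{sol1b}) and (\ref{sol1c}) contain the factor $|\Upsilon-4s^2|^{-1}$ and cannot be substituted naively at $\Upsilon = 4s^2$. I would handle this by not using those integrated formulas at all in the vacuum case; instead, I would treat $h_3, h_4, \phi$ as independent data constrained only by (\ref{ep2a}), (\ref{ep3a}), and the defining relations (\ref{auxfunct}) evaluated wherever $\phi^\ast$ and $h_4^\ast$ are nonzero, then take the allowed zero limits at the end. This is exactly the content of the phrase \emph{``for nonzero $\phi^\ast$ and $h_4^\ast$ with possible further zero limit''} in the statement, and it also explains the remark that the vacuum family is parametrically inequivalent to the nontrivial cosmological-constant family: the limiting procedure is not smooth, because solutions degenerate into branches according to which of $\phi^\ast$ or $h_4^\ast$ is set to zero. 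Once this parametric issue is handled, the corollary is an immediate specialization of Theorem \ref{th3a}.
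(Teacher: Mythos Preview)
Your approach is correct and matches the paper's own proof: both set the source to zero in the reduced system (\ref{sol1a}), (\ref{eq2bb}), (\ref{eq3bb}) to obtain (\ref{ep1a})--(\ref{ep3a}), observe that the integrated formulas (\ref{sol1b})--(\ref{sol1c}) become singular and must be bypassed, and carry over the zero-torsion constraints (\ref{lccondm}) unchanged. The paper's proof goes slightly further by explicitly working out the two branches of $\phi^\ast h_4^\ast=0$ (the case $h_4^\ast=0$ with $h_3,w_i$ as free generating data, and the case $\phi^\ast=0$ leading to the relation $\sqrt{|h_3|}={}^0h(\sqrt{|h_4|})^\ast$), but you correctly identify that this branching is the content of the ``possible further zero limit'' clause, so nothing essential is missing.
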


\begin{proof}
Considering the system of equations (\ref{eq1b})-- (\ref{eq4b}) with zero
right sides, we obtain, respectively, the equations (\ref{ep1a})-- (\ref{ep3a}%
). \ For positive signatures on h--subspace and equation (\ref{ep1a}), we
can take $\psi =0,$ or consider a trivial 2-d wave equation if one of
coordinates $x^{k}$ is timelike. There are two possibilities to satisfy the
condition (\ref{ep2a}). The first one is to consider that $%
h_{4}=h_{4}(x^{k}),$ i.e., $h_{4}^{\ast }$ which states that the equation (%
\ref{ep2a}) has solutions with zero source for arbitrary function $%
h_{3}(x^{k},y^{3})$ and arbitrary N--coefficients $w_{i}(x^{k},y^{3})$ as
follows from (\ref{auxfunct}). For such vacuum configurations, the functions
$h_{3}$ and $w_{i}$ can be taken as generation ones which should be
constrained only by the conditions (\ref{lccondm}). Equations of type (\ref%
{aux4b}) constrain substantially the class of admissible $w_{i}$ if $h_{4}$
depends only on $x^{k}.$ Nevertheless, $h_{3}$ can be an arbitrary one
generating solutions which can be extended for nontrivial sources $%
\underline{\Upsilon }$ and systems (\ref{eq1c})-- (\ref{lccondd}) and/or (%
\ref{eq1})-- (\ref{conf2}).

A different class of solutions can be generated if we state, after
corresponding coordinate transforms, $\phi =\ln \left| h_{4}^{\ast }/\sqrt{%
|h_{3}h_{4}|}\right| =\ ^{0}\phi =const,~\phi ^{\ast }=0$. For such
configurations, we can consider $h_{4}^{\ast }\neq 0,$ and solve (\ref{ep2a}%
) as
\begin{equation}
\sqrt{|h_{3}|}=\ ^{0}h(\sqrt{|h_{4}|})^{\ast },  \label{rel1}
\end{equation}%
for $\ ^{0}h=const\neq 0.$ Such v--metrics are generated by any $%
f(x^{i},y^{3}),\ f^{\ast }\neq 0,$ when
\begin{equation}
h_{4}=-f^{2}\left( x^{i},y^{3}\right) \mbox{ and }h_{3}=(\ ^{0}h)^{2}\ \left[
f^{\ast }\left( x^{i},y^{3}\right) \right] ^{2},  \label{aux2}
\end{equation}%
where the signs are such way fixed that for $N_{i}^{a}\rightarrow 0$ we
obtain diagonal metrics with signature $(+,+,+,-).$ The coefficients $\alpha
_{i}=\beta =0$ in (\ref{ep3a}) and $w_{i}(x^{k},y^{3})$ can be any functions
solving (\ref{lccondm}). This is equivalent to
\begin{eqnarray}
w_{i}^{\ast } &=&2\mathbf{\partial }_{i}\ln |f|-2w_{i}(\ln |f|)^{\ast },
\label{cond1} \\
\partial _{k}w_{i}-\mathbf{\partial }_{i}w_{k} &=&2(w_{k}\partial
_{i}-w_{i}\partial _{k})\ln |f|,  \notag
\end{eqnarray}%
for any $n_{i}(x^{k})$ when $\partial _{i}n_{k}=\partial _{k}n_{i}$.
Constraints of type $n_{k}\underline{h}_{4}^{\circ }=\mathbf{\partial }_{k}%
\underline{h}_{4}$ (\ref{aux4}) have to be imposed for a nontrivial multiple
$\underline{h}_{4}.$ $\square $
\end{proof}

\vskip5pt

Using Corollary \ref{corol1}, the ansatz (\ref{sol2}) can be "dualized" to
generate effective vacuum solutions with weak Killing symmetry on $\partial
/\partial y^{3}.$ Finally, we note that the signature of the generic
off--diagonal metrics generated in this subsection depend on the fact which
coordinate $x^1,x^2,y^3 $ or $y^4$ is chosen to be a timelike one.

\subsection{Non--Killing effective Einstein configurations}

The Theorem \ref{th2b} can be applied for constructing non--vacuum and
effective vacuum solutions of the modified gravitational equations depending
on all coordinates without explicit Killing symmetries.

\subsubsection{Non--vacuum off--diagonal solutions}

We can generate such effective Einstein manifolds as follows
\begin{corollary}
\label{corym1}An ansatz of type (\ref{class1}) with d--metric {\small
\begin{eqnarray*}
\mathbf{g} &=&\epsilon _{i}e^{\psi }dx^{i}\otimes dx^{i}+\omega ^{2}[\pm
\frac{\left( \phi ^{\ast }\right) ^{2}}{4\Lambda }\mathbf{e}^{3}\otimes
\mathbf{e}^{3}\pm \frac{1}{4\Lambda }e^{2[\phi -~^{0}\phi ]}\underline{h}_{4}%
\mathbf{e}^{4}\otimes \mathbf{e}^{4}], \\
\mathbf{e}^{3} &=&dy^{3}-(\partial _{i}\phi /\phi ^{\ast })dx^{i},\ \mathbf{e%
}^{4}=dy^{4}+n_{i}(x^{k})dx^{i},
\end{eqnarray*}%
} where the coefficients are subjected to conditions (\ref{sol1a})--(\ref%
{lccondm}) and
 $\partial _{k}\omega +(\partial _{i}\phi /\phi ^{\ast })\omega ^{\ast
}-n_{i}\omega ^{\circ }=0$
defines solutions of the Einstein equations $R_{\alpha \beta }=\Lambda
g_{\alpha \beta }$ with nonholonomic interactions and modifications encoded
effectively into the vacuum structure of GR with nontrivial cosmological
constant, $\Lambda \neq 0.$
\end{corollary}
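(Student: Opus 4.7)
The plan is to assemble the corollary as a direct specialization of four earlier results: Claim \ref{claim1}, Theorem \ref{th3a}, Lemma \ref{lemma1}, and the class (\ref{class1}) part of Theorem \ref{th2b}. First, I would invoke Claim \ref{claim1} to replace the full EYMH system (\ref{ym1})--(\ref{heq3}) by an effective Einstein system with modified source $\mathbf{\Upsilon}_{~\delta}^{\beta} + {}^{YM}T_{~\delta}^{\beta} = \mathrm{diag}[\mathbf{\Upsilon}_\alpha - 4s^2\delta_\beta^\alpha]$, provided the Higgs and Yang--Mills fields are chosen as in Conditions \ref{conditions}, conditions 2--4. If no further matter is present beyond the YMH sector, the h- and v-source coefficients reduce to $-4s^2$, and taking an additional effective cosmological constant $\Lambda$ from condition 4 (or from an auxiliary source satisfying (\ref{cosmconst})) yields the diagonal source $\mathrm{diag}[\Lambda - 4s^2]$, which is exactly the situation of Theorem \ref{th3a} with $\Upsilon = {}^v\Upsilon = \Lambda = \mathrm{const} \neq 4s^2$.

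Next I would specialize the recurrent formulas (\ref{sol1a})--(\ref{sol1d}) of Theorem \ref{th3a} to this constant source. The 2--d h--equation (\ref{sol1a}) becomes $\epsilon_1 \psi^{\bullet\bullet} + \epsilon_2 \psi^{\prime\prime} = 2(\Lambda - 4s^2)$, which determines $g_i = \epsilon_i e^{\psi}$. The closed-form branches of (\ref{sol1b}) and (\ref{sol1c}) give $h_4 = \pm \frac{1}{4(\Lambda - 4s^2)} e^{2[\phi - {}^0\phi]}$ and $h_3 = \pm \frac{(\phi^\ast)^2}{4(\Lambda - 4s^2)}$, while (\ref{sol1d}) fixes $w_i = -\partial_i \phi / \phi^\ast$. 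With $n_i = n_i(x^k)$ chosen to satisfy $\partial_i n_k = \partial_k n_i$ and $\underline{h}_4$ subject to $n_k \underline{h}_4^\circ = \partial_k \underline{h}_4$, the torsionless constraints (\ref{lccondm}) are met; the $w_i$-constraints (\ref{aux4b}) become the identity $\phi^{\ast\ast}\partial_i\phi = \phi^\ast(\partial_i\phi)^\ast$ as observed in the proof of Theorem \ref{th3a}, and can be arranged by the usual restrictions on the generating function $\phi$.

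Finally, to upgrade the d--metric to its conformal form with nontrivial $\omega^2(x^k, y^a)$, I would apply Lemma \ref{lemma1}. That lemma shows the v--conformal factor $\omega$ leaves the Ricci d--tensor components (\ref{eq2b})--(\ref{eq4b}) invariant precisely when $\partial_k \omega - w_i \omega^\ast - n_i \omega^\circ = 0$ and $\widehat{T}_{kb}^a = 0$. Substituting the already-constructed $w_i = -\partial_i\phi/\phi^\ast$ and the $n_i$ from above, the first constraint becomes the PDE on $\omega$ displayed in the corollary; the second is guaranteed by (\ref{lccondm}). The class (\ref{class1}) ansatz of Theorem \ref{th2b} then packages the whole configuration as a bona fide solution of $\widehat{\mathbf{R}}_{\alpha\beta} = (\Lambda - 4s^2) \mathbf{g}_{\alpha\beta}$, which on the torsionless slice coincides with the Levi--Civita Einstein equation.

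The main obstacle is not the individual decoupled equations, which are by now algebraic or first-order, but the compatibility of the torsionless conditions (\ref{lccond}) with the nontrivial $\omega$--transport equation; one must check that the $w_i$ determined by $\phi$ through (\ref{sol1d}) simultaneously satisfies (\ref{aux4b}) and serves as characteristic vector for the $\omega$ equation without over-constraining $\phi$. This is handled, as in the Remark--Example \ref{remex}, by restricting to generating functions $\phi$ with $(\partial_i\phi)^\ast = \partial_i(\phi^\ast)$, which is precisely the class for which both the LC constraints and the $\omega$-transport PDE admit the one-parameter families used in subsequent sections. $\square$
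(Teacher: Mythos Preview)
Your proof is correct and follows essentially the same route as the paper, which simply records that the corollary is a consequence of Theorems~\ref{th2b} and~\ref{th3a}. You have unpacked this one-line citation into its constituent steps (Claim~\ref{claim1} for the EYMH $\to$ effective Einstein reduction, Theorem~\ref{th3a} for the explicit $h_a$ and $w_i$ in the constant-source branch, Lemma~\ref{lemma1}/Theorem~\ref{th2b} for the $\omega$-conformal lift), and your compatibility remark at the end is a reasonable addition but not strictly needed beyond what those results already guarantee.
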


\begin{proof}
It is a consequence of Theorem \ref{th2b} and Corolarry \ref{th3a}. $%
\square $
\end{proof}

\vskip5pt

In a similar form, we can generate solutions of type (\ref{class2}) when the
conformal factor is a solution of $\partial _{k}\omega -\underline{w}%
_{i}(x^{k})\omega ^{\ast }+(\partial _{i}\underline{\phi }/\underline{\phi }%
^{\circ })\omega ^{\circ }=0$ with respective "dual" generating functions $%
\omega $ and $\underline{\phi }$ when the data (\ref{sol1a})--(\ref{lccondm}%
) $\ $are re--defined for solutions with weak Killing symmetry on $\partial
/\partial y^{3}.$

\subsubsection{Effective vacuum off--diagonal solutions}

Vacuum Einstein spaces encoding nonholonomic interactions and $f$%
--modifica\-tions can be constructed using

\begin{corollary}
\label{corym2}An ansatz of type (\ref{class1}) with d--metric
{\small
\begin{eqnarray*}
\mathbf{g} &=&\epsilon _{i}e^{\psi (x^{k})}dx^{i}\otimes dx^{i}+\omega
^{2}(x^{k},y^{a})[(\ ^{0}h)^{2}\ \left[ f^{\ast }\left( x^{i},y^{3}\right) %
\right] ^{2}\mathbf{e}^{3}\otimes \mathbf{e}^{3} -f^{2}\left( x^{i},y^{3}\right) \underline{h}_{4}(x^{k},y^{4})\mathbf{e}%
^{4}\otimes \mathbf{e}^{4}], \\
\mathbf{e}^{3} &=&dy^{3}+w_{i}(x^{k},y^{3})dx^{i},\ \mathbf{e}%
^{4}=dy^{4}+n_{i}(x^{k})dx^{i},
\end{eqnarray*}%
}
where the coefficients are subjected to conditions (\ref{rel1})--(\ref{cond1}%
), (\ref{lccondm}) and $\partial _{k}\omega -w_{i}\omega ^{\ast
}-n_{i}\omega ^{\circ }=0$ defines generic off--diagonal solutions of $%
R_{\alpha \beta }=0$.
\end{corollary}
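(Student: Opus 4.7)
The plan is to assemble the statement from results already proved in the excerpt: Corollary \ref{corolvacuum} supplies the ``$\phi^{\ast}=0,\ h_{4}^{\ast}\neq 0$'' branch of effective vacuum solutions with weak Killing symmetry, and Theorem \ref{th2b} together with Lemma \ref{lemma1} promote such solutions to non--Killing ones by a compatible vertical conformal factor $\omega(x^{k},y^{a})$. Claim \ref{claim1} reduces the full EYMH system to an effective cosmological--constant problem, so the case $\Lambda-4s^{2}=0$ treated here is genuinely the nonholonomic vacuum case $R_{\alpha\beta}=0$.

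First, I would start from the weak--Killing family of Corollary \ref{corolvacuum} in the regime $\phi=\,^{0}\phi=\mathrm{const}$, $\phi^{\ast}=0$, $h_{4}^{\ast}\neq 0$. In this regime (\ref{ep2a}) is satisfied automatically, and the relation (\ref{rel1}) encodes the surviving content of $\widehat{R}_{3}^{3}=\widehat{R}_{4}^{4}=0$; setting $f$ by $h_{4}=-f^{2}$ gives the parametrization (\ref{aux2}), namely $h_{3}=({}^{0}h)^{2}(f^{\ast})^{2}$. Since $\alpha_{i}=\beta=0$ in (\ref{ep3a}), the N--coefficients $w_{i}(x^{k},y^{3})$ are free, and are constrained only by (\ref{cond1})--(\ref{lccondm}) to solve the zero--torsion conditions; likewise $n_{i}(x^{k})$ is any closed 1--form, and the dependence on $y^{4}$ enters through the multiplicative factor $\underline{h}_{4}(x^{k},y^{4})$ exactly as in the weak Killing case, subject to $n_{k}\underline{h}_{4}^{\circ}=\partial_{k}\underline{h}_{4}$.

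Next, I would invoke Lemma \ref{lemma1} to conformally deform the vertical block by $\omega^{2}(x^{k},y^{a})$. The lemma guarantees that the distortion $\,^{\omega}\widehat{Z}^{a}{}_{b}$ of the Ricci d--tensor vanishes precisely when $\partial_{k}\omega-w_{i}\omega^{\ast}-n_{i}\omega^{\circ}=0$ and $\widehat{T}^{a}{}_{kb}=0$; both conditions are imposed in the hypothesis (the second being part of (\ref{lccondm})). Since $\widehat{R}_{~b}^{a}$ and $\widehat{R}_{bi}$ already vanish for the unscaled d--metric by Corollary \ref{corolvacuum}, the same is true after conformal rescaling; the horizontal block satisfies (\ref{ep1a}) with $\psi$ harmonic (or identically zero in the positive signature case), which yields $\widehat{R}_{1}^{1}=\widehat{R}_{2}^{2}=0$. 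Combined with the LC--conditions this shows that the full system (\ref{eq1})--(\ref{conf2}) of Theorem \ref{th2b} is satisfied with zero source, hence $R_{\alpha\beta}=0$ with respect to the Levi--Civita connection.

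The main obstacle is the conformal--compatibility step: one must verify that the presence of the nontrivial multiplicative factor $\underline{h}_{4}(x^{k},y^{4})$ does not spoil the cancellation $\,^{\omega}\widehat{Z}_{~b}^{a}=0$ established in \cite{vexsol1,vexsol2} for $\underline{h}_{4}=1$. This is exactly the content of the remark at the end of the proof of Lemma \ref{lemma1}, and it goes through because $\underline{h}_{4}$ enters the v--metric only as an overall multiplier that is already controlled by the companion constraint $n_{k}\underline{h}_{4}^{\circ}=\partial_{k}\underline{h}_{4}$; the remaining checks are routine and I would not reproduce the calculations from Appendix \ref{sb}. $\square$
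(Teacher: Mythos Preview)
Your proposal is correct and follows essentially the same route as the paper's own proof, which simply states that the result is a consequence of Theorem \ref{th2b} and Corollary \ref{corolvacuum}. You have merely unpacked that one-line citation: Corollary \ref{corolvacuum} supplies the $\phi^{\ast}=0$ branch with the $f$--parametrization (\ref{aux2}), and Theorem \ref{th2b} (whose proof already absorbs Lemma \ref{lemma1}) is what licenses the insertion of the vertical conformal factor $\omega$ under the constraint $\partial_{k}\omega-w_{i}\omega^{\ast}-n_{i}\omega^{\circ}=0$.
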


\begin{proof}
It is a consequence of Theorem \ref{th2b} and Corollary \ref{corolvacuum}. $%
\square $
\end{proof}

\vskip5pt

Solutions of type (\ref{class2}) can be defined if the conformal factor is a
solution of $\partial _{k}\omega -\underline{w}_{i}(x^{k})\omega ^{\ast }-%
\underline{n}_{i}(x^{k},y^{4})\omega ^{\circ }=0$ with respective "dual"
generating functions $\omega (x^{k},y^{a})$ and $\underline{\phi }%
(x^{k},y^{4})$ when the data (\ref{rel1})--(\ref{cond1}) and (\ref{lccondm})
are re--defined for ansatz with weak Killing symmetry on $\partial /\partial
y^{3}.$

Summarizing the results of this section, we state

\begin{claim}
All generic off--diagonal non--vacuum and effective vacuum solutions of the
modified gravitational equations determined, respectively, by Theorem \ref%
{th3a} and Corollaries \ref{corolvacuum}, \ref{corym1} and \ref{corym2} can
be generalized to metrics of type (\ref{ans1}). For such constructions,
there are used nonlinear superposition of metrics and their
 "duals"  on v--coordinates in order to
define non--Killing solutions of respective systems of PDEs from Theorems %
\ref{th2b} and/or \ref{th2}
\end{claim}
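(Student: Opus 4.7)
The plan is to use Theorem \ref{th2} as the organizing principle: for an ansatz of the form (\ref{ans1}) with coefficients (\ref{paramdcoef}), the field equations (\ref{eq1})--(\ref{conf2}) split into two blocks that formally coincide with the Killing systems from Theorem \ref{th2a} and Corollary \ref{corol1}, modulo contributions of type $\widehat{\Gamma}\cdot\widehat{\Gamma}$ that are treated as small in the open region $U\subset\mathbf{V}$ where we work. This reduction is what makes a ``nonlinear superposition'' meaningful at the level of the decoupled PDE system: the $y^{3}$-dependent data $(g_{i},h_{a},w_{i},n_{i})$ obey the primed system (\ref{eq1b})--(\ref{eq4b}), while the $y^{4}$-dependent data $(\underline{h}_{a},\underline{w}_{i},\underline{n}_{i})$ obey the dualled system (\ref{eq1c})--(\ref{eq4c}), and the h-metric $g_{i}(x^{k})$ is shared by both.

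First, I would take any primed solution produced by Theorem \ref{th3a} (non-vacuum case) or Corollary \ref{corolvacuum} (effective vacuum case), with its generating functions $\psi,\phi,\underline{h}_{4},n_{i}$ and integration function $~^{0}\phi$, together with the explicit formulas (\ref{sol1a})--(\ref{sol1d}) for $h_{a}$ and $w_{i}$. Second, I would apply the duality map of Conclusion \ref{concl1}, namely $h_{3}\leftrightarrow\underline{h}_{4}$, $h_{4}\leftrightarrow\underline{h}_{3}$, $w_{k}\leftrightarrow\underline{n}_{k}$, $n_{k}\leftrightarrow\underline{w}_{k}$, with a new and independent set of generating and integration functions on $y^{4}$, to produce the dualled block satisfying (\ref{eq1c})--(\ref{lccondd}). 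Corollaries \ref{corym1} and \ref{corym2} supply the explicit non-vacuum and effective-vacuum templates in the Killing case, so the construction is done by writing down the primed template and its dualled counterpart side by side.

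Third, I would insert both blocks into the full ansatz (\ref{ans1}) and switch on the conformal factor $\omega(x^{k},y^{a})$. By Lemma \ref{lemma1}, turning on $\omega$ preserves the decoupled equations provided the first-order linear PDE $\mathbf{e}_{k}\omega=\partial_{k}\omega-(w_{i}+\underline{w}_{i})\omega^{\ast}-(n_{i}+\underline{n}_{i})\omega^{\circ}=0$ from (\ref{conf2}) is satisfied; this can always be solved by the method of characteristics once the N-coefficients have been fixed in the preceding step. Putting these pieces together yields a metric of the form (\ref{ans1}) whose coefficients depend genuinely on all four coordinates and whose Ricci d-tensor reproduces the prescribed EYMH source via Claim \ref{claim1}.

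The main obstacle I expect is the compatibility of the zero-torsion conditions (\ref{lcconstr1})--(\ref{conf2}) across the two blocks: each block individually admits its own LC constraints (\ref{lccond}) and (\ref{lccondd}), but the sums $w_{i}+\underline{w}_{i}$ and $n_{i}+\underline{n}_{i}$ re-enter the conformal constraint and can couple the two sets of generating functions through the missing cross terms $\widehat{\Gamma}\cdot\widehat{\Gamma}$ that Theorem \ref{th2} discarded. The way to handle this is to restrict attention to the same class of N-adapted normal coordinates used in Theorem \ref{th2} (or to a small-parameter expansion of the type employed in Section \ref{s5}), where those cross terms are parametrically suppressed, so that at each order the two LC systems decouple and can be solved sequentially. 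A concluding verification, in the spirit of Remark--Example \ref{remex}, that the generating functions may be chosen so that (\ref{conf2}) admits a nontrivial $\omega$ then upgrades every weak-Killing solution of Theorem \ref{th3a} and of Corollaries \ref{corolvacuum}, \ref{corym1}, \ref{corym2} to a genuinely non-Killing off-diagonal EYMH solution of the form (\ref{ans1}).
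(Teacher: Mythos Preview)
Your proposal is more detailed than what the paper itself offers. In the paper, this statement is deliberately labelled a \emph{Claim} rather than a Theorem, and the text immediately following it explains why: ``The statements of this Claim are formulated following our experience on constructing generic off--diagonal vacuum and non--vacuum solutions. For such nonlinear systems, it is not possible to formulate certain general uniqueness and exhaustive criteria \ldots\ The length of this paper does not allow us to present all technical details and general formulas for coefficients for ansatz for d--metrics \ldots\ Explicit examples supporting our approach are given in sections \ref{s5} and \ref{s6}.'' In other words, the paper does not give a proof at all; it treats the Claim as a programmatic statement justified heuristically by the decoupling machinery of Theorems \ref{th2a}, \ref{th2b}, \ref{th2} and Conclusion \ref{concl1}, and then supported a posteriori by the worked examples of black ellipsoids and solitons.

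Your sketch is therefore essentially the argument the paper gestures at but does not carry out: take a primed Killing solution, apply the duality of Conclusion \ref{concl1} to get the underlined block, superpose via the ansatz (\ref{ans1}), and use Lemma \ref{lemma1} to switch on $\omega$. You have also correctly identified the genuine gap that prevents this from being a proof rather than a claim: the cross terms of type $\widehat{\Gamma}\cdot\widehat{\Gamma}$ discarded in Theorem \ref{th2} couple the two blocks, and the zero-torsion conditions (\ref{lcconstr1})--(\ref{conf2}) for the combined N--coefficients $w_i+\underline{w}_i$, $n_i+\underline{n}_i$ are not automatically inherited from the separate constraints (\ref{lccond}) and (\ref{lccondd}). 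The paper sidesteps this by invoking N--adapted normal coordinates or small-parameter expansions, exactly as you suggest, but never closes the argument in general --- which is precisely why it is stated as a Claim.
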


The statements of this Claim are formulated following our experience on
constructing generic off--diagonal vacuum and non--vacuum solutions. For
such nonlinear systems, it is not possible to formulate certain general
uniqueness and exhaustive criteria. Sure, not all solutions of modified
gravitational equations can be constructed in such forms, or related to any
sets of prescribed solutions via "nondegernerate" nonholonomic deformations.
The length of this paper does not allow us to present all technical details
and general formulas for coefficients for ansatzes for d--metrics\footnote{%
Such formulas are, for instance, of type (\ref{sol1a})--(\ref{lccondm}),
with functions $h_{3}(.,y^{3})$ and $h_{4}(.,y^{3})$ and further
 "dualization"  $h_{3}\rightarrow
\underline{h}_{4}(.,y^{4})$ and $h_{4}\rightarrow \underline{h}_{3}(.,y^{4})$
with corresponding re--definition of N--connection coefficients.} which can
be constructed following this Claim. Explicit examples supporting our
approach are given in sections \ref{s5} and \ref{s6}.

\section{The Cauchy Problem and Decoupling}

\label{s5} The gravitational interactions in modified gravity studied in
this work are described by off--diagonal solutions of
\begin{equation}
R_{\alpha \beta }=\Lambda g_{\alpha \beta },  \label{neinstm}
\end{equation}%
which can be found in very general forms with respect to N--adapted frames
for certain nonintegrable spacetime $2+2$ splitting of type $\mathbf{N}$ (%
\ref{ncon}). An effective cosmological constant $\Lambda $ encodes a
gravitational "vacuum" cosmological
constant $\Lambda $ in GR and $f$--modifications. Pseudo--Riemanian
manifolds with metrics $g_{\alpha \beta }$ adapted to chosen non--integrable
distribution with $2+2$ splitting and satisfying (\ref{neinstm}) are called
\textit{nonholonomic effective Einstein manifolds}. Hereafter, we shall refer
to such systems of PDEs as \textit{nonholonomic effective vacuum spacetimes},
regardless of whether, or not, an (effective) cosmological constant vanishes or
can be "polarized" by gravitational
and/or matter field interactions into some N--adapted diagonal sources
admitting formal integration of gravitational field equations.

The equations (\ref{neinstm}), and their N--adapted equivalents (\ref%
{cdeinst})--(\ref{lcconstr}), constitute a second--order system
quasi--linear PDE for the coefficients of spacetime metric $\mathbf{g}%
=\{g_{\alpha \beta }\}.$ This means that given a manifold $\mathbf{V}$ of
necessary smooth class such a quasi--linear system is linear in the second
derivatives of the metric and quadratic in the first derivatives $\partial
_{\gamma }g_{\alpha \beta }$ (the coefficients of such PDEs are rational
functions of $g_{\alpha \beta }).$ To be able to decouple and formally
integrate such systems is necessary to consider special classes of
nonholonomic frames and constraints. For GR, this type of equations do not
fall in any of the standard cases of hyperbolic, parabolic, or elliptic
systems which typically lead to unique solutions. It is important to
formulate criteria when such general solutions would be unique ones with a
topology and differential structure determined by some initial data. How the
diffeomorphism, or coordinate, invariance and arbitrary frame transforms
(principle of relativity) would be taken into account for N--splitting?

In mathematical relativity \cite{bruhat,klainerman,chrusch1,christod,chru2},
it was proven a fundamental result (due to Choquet--Bruhat, 1952) that there
exists a set of hyperbolic equations underlying (\ref{neinstm}). The goal of
this section is to study the evolution (Cauchy) problem for the system (\ref%
{cdeinst})--(\ref{lcconstr}) in N--adapted form and preserving the
decoupling property.

\subsection{The local N--adapted evolution problem}

In the evolutionary approach, the topology of spacetime manifold is chosen
in the form $\mathbf{V}=\mathbb{R}\times ~^{3}V,$ where $~^{3}V$ is a 3--d
manifold carrying initial data. It should be noted here that there is no a
priori known natural time--coordinate even we may fix a signature for metric
and chose certain coordinates to be time or space like ones.

\begin{definition}
A set of coordinates $\{\widehat{u}^{\mu }=(\widehat{x}^{i},\widehat{y}%
^{a})\}$ is canonically N--har\-mon\-ic, i.e., it is both harmonic and adapted
to a splitting $\mathbf{N}$ (\ref{ncon}), if each of the functions $\widehat{%
u}^{\mu }$ satisfies the wave equation%
\begin{equation}
\widehat{\square }\widehat{u}^{\mu }=0,  \label{nwc}
\end{equation}%
where the canonical d'Alembert operator $\widehat{\square }:=\widehat{%
\mathbf{D}}_{\alpha }$ $\widehat{\mathbf{D}}^{\alpha }$ acts on a scalar $%
f(x,y)$ in the form%
{\small
$$\widehat{\square }f :=(\sqrt{|\mathbf{g}_{\alpha \beta }|})^{-1}\mathbf{e}%
_{\mu }\left( \sqrt{|\mathbf{g}_{\alpha \beta }|}\mathbf{g}^{\mu \nu }%
\mathbf{e}_{\nu }f\right) =(\sqrt{|g_{kl}|})^{-1}\mathbf{e}_{i}\left( \sqrt{|g_{kl}|}g^{ij}\mathbf{e}%
_{j}f\right) +(\sqrt{|g_{cd}|})^{-1}\partial _{a}\left( \sqrt{|g_{kl}|}%
g^{ab}e_{b}f\right),$$
}
for a d--metric $\mathbf{g}_{\alpha \beta }=(g_{ij},g_{ab})$ (\ref{dm})
defined with respect to N--adapted (co) frames (\ref{nader})--(\ref{nadif})
and canonical d--connection $\widehat{\mathbf{D}}_{\alpha }$ (\ref{cdc}).
\end{definition}

We can say that such coordinates $\widehat{u}^{\mu }=(\widehat{x}^{i},%
\widehat{y}^{a})$ are N--adapted wave--coordinates.

\begin{lemma}
In canonical N--harmonic coordinates, the effective Einstein equations (\ref%
{neinstm}) re--defined in canonical d--connection variables (\ref{cdeinst})
can be written, equivalently,%
\begin{equation}
\widehat{\mathbf{E}}^{\alpha \beta } =\widehat{\square }\mathbf{g}^{\alpha
\beta }-\mathbf{g}^{\tau \theta }\left[ (\mathbf{g}^{\alpha \mu }\widehat{%
\mathbf{\Gamma }}_{~\mu \nu }^{\beta }+\mathbf{g}^{\alpha \mu }\widehat{%
\mathbf{\Gamma }}_{~\mu \nu }^{\beta })\widehat{\mathbf{\Gamma }}_{~\tau
\theta }^{\nu }+2\mathbf{g}^{\gamma \mu }\widehat{\mathbf{\Gamma }}_{~\mu
\theta }^{\alpha }\widehat{\mathbf{\Gamma }}_{~\tau \gamma }^{\beta }\right]
-2\Lambda \mathbf{g}^{\alpha \beta }=0,  \label{cdeinst1}
\end{equation}%
i.e., such PDEs, for $\mathbf{g}^{\alpha \beta }$ (using algebraic transforms,
for $\mathbf{g}_{\alpha \beta })$ form a system of second--order
quasi--linear N--adapted wave--type equations.
\end{lemma}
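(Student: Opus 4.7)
The plan is to mirror the classical reduction of the Einstein equations to a quasi-linear wave system (due to Choquet-Bruhat) but carried out in the canonical d-connection variables $\widehat{\mathbf{D}}$ and with respect to the N-adapted frames (\ref{nader})--(\ref{nadif}). Throughout I would think of the nonholonomy coefficients $W^{\gamma}_{\alpha\beta}$ in (\ref{anhcoef}) as ``frame-torsion'' data that is algebraic in $N_{i}^{a}$ and their first derivatives, and absorb them into the ``lower-order'' remainder terms.

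First I would derive a coordinate/frame expression for the canonical Ricci d-tensor of the schematic form
\begin{equation*}
\widehat{\mathbf{R}}_{\alpha\beta}
= -\tfrac{1}{2}\mathbf{g}^{\mu\nu}\mathbf{e}_{\mu}\mathbf{e}_{\nu}\mathbf{g}_{\alpha\beta}
+ \mathbf{g}_{\gamma(\alpha}\mathbf{e}_{\beta)}\widehat{\Gamma}^{\gamma}
+ \widehat{Q}_{\alpha\beta}\!\left(\mathbf{g},\mathbf{e}\mathbf{g},\mathbf{N},\mathbf{e}\mathbf{N}\right),
\end{equation*}
where $\widehat{\Gamma}^{\gamma}:=\mathbf{g}^{\mu\nu}\widehat{\mathbf{\Gamma}}^{\gamma}_{~\mu\nu}$ is the contracted canonical d-connection and $\widehat{Q}_{\alpha\beta}$ is quadratic in first derivatives of $\mathbf{g}$ (with coefficients that are rational in $\mathbf{g}$ and polynomial in $\mathbf{N}$ and $W$). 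This is the standard identity for the Ricci tensor re-done for $\widehat{\mathbf{D}}$: one takes the coordinate form of $\widehat{\mathbf{R}}_{~\beta\gamma\delta}^{\alpha}$ from (\ref{dcurv}), contracts to (\ref{driccic}), and isolates the unique term whose second derivatives are not of ``box'' type, namely the one proportional to $\mathbf{e}_{\beta}\widehat{\Gamma}^{\gamma}$.

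Next I would translate the canonical wave equations (\ref{nwc}) into an algebraic constraint on $\widehat{\Gamma}^{\gamma}$. A short computation using the given formula for $\widehat{\square}f$ shows that
\begin{equation*}
\widehat{\square}\widehat{u}^{\mu}=0 \iff \widehat{\Gamma}^{\mu}=\mathbf{g}^{\alpha\beta}\widehat{\mathbf{\Gamma}}^{\mu}_{~\alpha\beta}=0
\end{equation*}
(modulo terms that are algebraic in $W^{\gamma}_{\alpha\beta}$ and can be absorbed into $\widehat{Q}$). Inserting this into the identity above kills the non-principal second-derivative piece and gives
\begin{equation*}
\widehat{\mathbf{R}}_{\alpha\beta}=-\tfrac{1}{2}\widehat{\square}\mathbf{g}_{\alpha\beta}+\widehat{Q}_{\alpha\beta},
\end{equation*}
so that (\ref{neinstm}), i.e. $\widehat{\mathbf{R}}_{\alpha\beta}=(\Lambda-4s^{2})\mathbf{g}_{\alpha\beta}$, becomes a quasi-linear wave equation for $\mathbf{g}_{\alpha\beta}$. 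Finally, raising indices with $\mathbf{g}^{\alpha\mu}\mathbf{g}^{\beta\nu}$ converts the equation on $\mathbf{g}_{\alpha\beta}$ to the equation on $\mathbf{g}^{\alpha\beta}$: the derivatives of $\mathbf{g}^{\alpha\mu}$ produced by this manipulation generate precisely the bracketed bilinear terms
$\mathbf{g}^{\tau\theta}\bigl[(\mathbf{g}^{\alpha\mu}\widehat{\mathbf{\Gamma}}^{\beta}_{~\mu\nu}+\mathbf{g}^{\alpha\mu}\widehat{\mathbf{\Gamma}}^{\beta}_{~\mu\nu})\widehat{\mathbf{\Gamma}}^{\nu}_{~\tau\theta}+2\mathbf{g}^{\gamma\mu}\widehat{\mathbf{\Gamma}}^{\alpha}_{~\mu\theta}\widehat{\mathbf{\Gamma}}^{\beta}_{~\tau\gamma}\bigr]$ appearing in (\ref{cdeinst1}), which is exactly the shape needed.

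The main obstacle I expect is bookkeeping, not conceptual: one must verify that every correction coming from the nonholonomy (the $W^{\gamma}_{\alpha\beta}$ in (\ref{nonholr}) and the distortion between $\widehat{\mathbf{D}}$ and the Levi-Civita $\nabla$ used implicitly in the harmonic-coordinate gauge) enters only through the quadratic remainder $\widehat{Q}_{\alpha\beta}$ and not through the principal symbol. Once this is checked, quasi-linearity in second derivatives and the wave-type principal part $\widehat{\square}$ follow automatically, and one may invoke the equivalence between (\ref{neinstm}) and (\ref{cdeinst})--(\ref{lcconstr}) recalled in the text to recast the result in the canonical d-connection form. $\square$
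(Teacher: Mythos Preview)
Your proposal is correct and follows essentially the same approach as the paper. The paper's own proof is only a one-line remark that this is ``a standard computation with respect to N--adapted frames by using formulas (\ref{cdc}), (\ref{enstdt}) and (\ref{driccic}),'' noting that under the zero-torsion conditions (\ref{lcconstr}) one recovers the well-known GR result adapted to the $2+2$ splitting; you have simply written out in more detail what that standard Choquet--Bruhat-type computation actually consists of, including the honest caveat that the nonholonomy contributions must be checked to land only in the lower-order quadratic remainder.
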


\begin{proof}
It is a standard computation with respect to N--adapted frames by using
formulas (\ref{cdc}), (\ref{enstdt}) and (\ref{driccic}). If the zero
torsion conditions (\ref{lcconstr}) are imposed, we get well known results
from GR but (in our case) adapted to $2+2$ non--integrable splitting.\ $%
\square $
\end{proof}

\vskip5pt

This Lemma allows us to apply the standard theory of hyperbolic PDEs (see,
for instance, \cite{evans}). Let us denote by $H_{loc}^{k}$ the Sobolev
spaces of functions which are in $L^{2}(K)$ for any compact set $K$ when
their distributional derivatives are considered up to an integer order $k$
also in $L^{2}(K).$ We shall also use N--adapted wave coordinates with
additional formal $3+1$ splitting, for instance, in a form $\widehat{u}%
^{\mu}=(\ ^{t}\widehat{u},\widehat{u}^{\overline{i}}),$ where $\ ^{t}%
\widehat{u}$ is used for the timelike coordinate and $\widehat{u}^{\overline{%
i}}$ are for 3 spacelike coordinates. "Hats" can be eliminated if such a
splitting is considered for arbitrary local coordinates. Standard results
from the theory of PDEs give rise to this

\begin{theorem}
\label{thinid}The field equations (\ref{cdeinst}) for nonholonomic effective
Einstein manifolds have a unique solution $\mathbf{g}^{\alpha \beta }$
defined by PDEs (\ref{cdeinst1}) stated on an open neighborhood $\mathcal{U}%
\subset \mathbb{R}\times \mathbb{R}^{3}$ of $\mathcal{O}\subset \{0\}\times
\mathbb{R}^{3}$ \ with any initial data%
\begin{equation}
\ \mathbf{g}^{\alpha \beta }(0,\widehat{u}^{\overline{i}})\in H_{loc}^{k+1}%
\mbox{ \  and \ }\ \frac{\partial \mathbf{g}^{\alpha \beta }(0,\widehat{u}^{%
\overline{i}})}{\partial (\ ^{t}\widehat{u})}\in H_{loc}^{k+1},k>3/2.
\label{inid}
\end{equation}%
The set $\mathcal{U}$ \ can be chosen in a form that $\left( \mathcal{U},%
\mathbf{g}^{\alpha \beta }\right) $ is globally hyperbolic with Cauchy
surface $\mathcal{O}$\footnote{%
Choosing corresponding classes of nonholonomic distributions $\mathbf{N}$ (%
\ref{ncon}), we can relax the conditions of differentiability as in Refs.
\cite{klainrodn,smtat} (we omit such constructions in this work).}.
\end{theorem}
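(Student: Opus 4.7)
The plan is to reduce the Einstein equations in the canonical d--connection variables to a quasi--linear wave--type system by going to canonical N--harmonic coordinates, and then to invoke the standard local well--posedness theory for such systems. First, I would fix a smooth initial hypersurface $\mathcal{O}\subset \{0\}\times \mathbb{R}^{3}$ and choose N--adapted wave coordinates $\widehat{u}^{\mu}=(\ ^{t}\widehat{u},\widehat{u}^{\overline{i}})$ near $\mathcal{O}$; the existence of such coordinates is guaranteed by solving the canonical wave equations (\ref{nwc}) with the Cauchy data consisting of the identity map and its time derivative on $\mathcal{O}$, which by the previous Lemma is precisely a quasi--linear hyperbolic problem for scalar functions and is locally solvable on an open neighborhood of $\mathcal{O}$.

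Next, in these N--harmonic coordinates the field equations (\ref{cdeinst}) become the reduced system (\ref{cdeinst1}) of the form $\widehat{\square}\mathbf{g}^{\alpha\beta}=\mathbf{F}^{\alpha\beta}(\mathbf{g},\partial \mathbf{g})$ with $\mathbf{F}^{\alpha\beta}$ a rational function of the components of $\mathbf{g}$ (with non--vanishing denominator as long as $\mathbf{g}$ is non--degenerate) and quadratic in the first derivatives. This is a second--order quasi--linear strictly hyperbolic system whose principal symbol is controlled by the Lorentzian d--metric $\mathbf{g}^{\alpha\beta}$. For such systems, the standard energy--estimate argument (see e.g.\ Evans, or the Leray/Hughes--Kato--Marsden framework adapted in \cite{bruhat,klainerman,chrusch1,sanchez,rodnianski,chru2}) gives local existence and uniqueness of a solution $\mathbf{g}^{\alpha\beta}\in C^{0}(I;H_{loc}^{k+1})\cap C^{1}(I;H_{loc}^{k})$ on a time interval $I\ni 0$, provided the Cauchy data (\ref{inid}) satisfy the Sobolev regularity $k>3/2$, which is the threshold ensuring $H_{loc}^{k}\hookrightarrow C^{0}$ and hence that the coefficients of the reduced system remain Lipschitz during the iteration.

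The third step is to verify that a solution of the reduced system (\ref{cdeinst1}) is actually a solution of the original nonholonomic Einstein system (\ref{cdeinst}). This is the standard propagation--of--gauge argument: the canonical N--harmonic condition (\ref{nwc}) is preserved by the flow because the ``harmonicity defects'' $\widehat{\square}\widehat{u}^{\mu}$ themselves satisfy a linear homogeneous hyperbolic equation whose coefficients are built from $\mathbf{g}$ and $\widehat{\mathbf{\Gamma}}$, and they vanish at $\ ^{t}\widehat{u}=0$ together with their time derivative provided the initial data are arranged to satisfy the N--adapted constraint equations (the analogues of the Hamiltonian and momentum constraints, together with the compatibility with the nonholonomy coefficients $W^{\gamma}_{\alpha\beta}$ in (\ref{anhcoef})). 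By uniqueness for the linear hyperbolic equation one concludes $\widehat{\square}\widehat{u}^{\mu}\equiv 0$ throughout the domain of existence, hence $\mathbf{g}^{\alpha\beta}$ solves the full system (\ref{cdeinst}).

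Finally, global hyperbolicity of $(\mathcal{U},\mathbf{g}^{\alpha\beta})$ with Cauchy surface $\mathcal{O}$ follows from the finite propagation speed of the reduced wave system: shrink $\mathcal{U}$ to the intersection of the future/past domain of dependence of $\mathcal{O}$ determined by the causal cones of $\mathbf{g}$, so that every inextendible causal curve in $\mathcal{U}$ meets $\mathcal{O}$ exactly once. The main obstacle, and the step requiring most care, is precisely the compatibility between the propagation of the N--harmonic gauge and the non--integrable $2+2$ splitting $\mathbf{N}$: one has to check that the constraint system for the initial data includes the algebraic relations imposed by the anholonomy coefficients $W^{\gamma}_{\alpha\beta}$ and the zero--torsion conditions (\ref{lcconstr}), so that these are not only satisfied on $\mathcal{O}$ but propagated by the reduced evolution; only then do the N--adapted frames (\ref{nader})--(\ref{nadif}) remain well defined on all of $\mathcal{U}$ and the decoupling property of Section \ref{s2} is preserved along the evolution. $\square$
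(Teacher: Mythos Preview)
Your core argument matches the paper's: the Lemma preceding the theorem already rewrites (\ref{cdeinst}) in N--harmonic coordinates as the quasi--linear wave system (\ref{cdeinst1}), and the paper then simply states that ``standard results from the theory of PDE give rise to this'' theorem, citing \cite{evans}. So your step~2 (energy estimates/Leray--type local well--posedness for the reduced system, with the Sobolev threshold $k>3/2$) and your final shrinking argument for global hyperbolicity are exactly what is needed here.

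However, you have loaded too much into this theorem. The statement concerns only the reduced PDE (\ref{cdeinst1}): it asserts existence and uniqueness for \emph{that} system with data (\ref{inid}), not yet that the resulting $\mathbf{g}^{\alpha\beta}$ solves the full nonholonomic Einstein system. In the paper, your step~3 (propagation of the N--harmonic gauge) and the constraint--compatibility discussion in your last paragraph are explicitly deferred: immediately after Theorem~\ref{thinid} the paper remarks that ``there is no reason that a solution \ldots\ will satisfy the wave conditions (\ref{nwc})'' and then introduces the conditions (\ref{inic1})--(\ref{inic2}) and the N--adapted constraint equations (\ref{vacuumconstr}), packaging these into the \emph{next} theorem. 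Likewise the zero--torsion conditions (\ref{lcconstr}) and the anholonomy coefficients enter only at that later stage. So while nothing you wrote is wrong, steps~1 and~3 and the final paragraph belong to the proof of the subsequent theorem, not this one; for Theorem~\ref{thinid} itself the paper's argument is the bare invocation of standard hyperbolic PDE theory for (\ref{cdeinst1}).
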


There is no reason that a solution constructed using the anholonomic
deformation method as we considered in section \ref{s3} will satisfy the
wave conditions (\ref{nwc}), even if we  give certain initial data for
equation (\ref{cdeinst1}). In order to establish an hyperbolic (and
evolutionary) form of  such nonholonomic vacuum gravitational equations
we should re--define the Choquet--Bruhat problem (see details and references
in \cite{bruhat,klainerman,chrusch1,rodnianski,chru2}) with respect to
N--adapted frames. \ Using a 3+1 splitting of N--adapted coordinates, we
wirte $\mathbf{g}^{\alpha \beta }=(\mathbf{g}^{t\beta },\mathbf{g}^{%
\overline{i}\beta })$ and $\mathbf{e}_{\alpha }=(\mathbf{e}_{t},\mathbf{e}_{%
\overline{i}})$ and consider the d--vector field $n^{\mu }(x,y)$ of unit
timelike normals to the hypersurface $\{~^{t}\widehat{u}=0\}$\footnote{%
We do not use labels for coordinates like $0,1,2,3$ because the decoupling
property of the Einstein equations and general solutions can be proven for
arbitrary signature, for instance $\left( -+++\right) ,\left( ++-+\right)$,
etc.\  and for any set of local coordinates $u^{\alpha }$ with $\alpha
=1,2,3,4. $}. There is no loss of generality if we assume that on such a
hypersurface we have
\begin{equation}
\mathbf{g}^{tt}=-1\mbox{ and }\mathbf{g}^{t\overline{i}}=0.  \label{inic1}
\end{equation}%
We can state such conditions via additional N--adapted frame/coordinate
transform for any d--metric (\ref{dm}) with prescribed signature. It is also
possible to re--define the generating functions for any class of
off--diagonal solutions with decoupling of (modified) Einstein equations in
order to satisfy (\ref{inic1}).

Another necessary condition for the vanishing of a $\widehat{\square }u^{\mu
}$ is that this value is stated zero at $~^{t}\widehat{u}=0.$ This allows us
to express in N--adapted form the time like derivatives through N--elongated
space type ones,%
\begin{equation}
\mathbf{e}_{t}\left( \sqrt{|\mathbf{g}_{\alpha \beta }|}\mathbf{g}^{t%
\overline{i}}\right) =-\mathbf{e}_{\overline{i}}\left( \sqrt{|\mathbf{g}%
_{\alpha \beta }|}\mathbf{g}^{t\overline{i}}\right)  \label{inic2}
\end{equation}%
with N--elongated operators. So, the initial data from Theorem \ref{thinid}
can not be fixed in arbitrary form if we want to establish a hyperbolic
(evolutionary) character for nonholonomic vacuum Einstein equations, i.e. to
satisfy both systems (\ref{cdeinst1}) and (\ref{nwc}).  Really, the last
system of fist order PDEs allows us to compute the time--like derivatives $%
\partial \mathbf{g}^{t\overline{i}}(0,\widehat{u}^{\overline{i}})/\partial
~^{t}\widehat{u}\mid _{\{~^{t}\widehat{u}=0\}}$if $\mathbf{g}_{\overline{i}%
\overline{j}}\mid _{\{~^{t}\widehat{u}=0\}}$ and $\partial \mathbf{g}_{%
\overline{i}\overline{j}}/\partial ~^{t}\widehat{u}\mid _{\{~^{t}\widehat{u}%
=0\}}$have been defined. We conclude that the essential data for formulating
a N--adapted evolution problem should be formulated for a 3--d space
d--metric%
\begin{equation}
~^{[3]}\mathbf{g:=g}_{\overline{i}\overline{j}}(x,y)\mathbf{e}^{\overline{i}%
}\otimes \mathbf{e}^{\overline{j}},  \label{dm3d}
\end{equation}%
where $\mathbf{e}^{\overline{j}}$ are N--elongated differentials of type (%
\ref{nadif}), and its N--elongated (\ref{nader}) time--like derivatives.

Using the Theorem \ref{thinid} and above presented considerations, we get
the proof of the following theorem

\begin{theorem}
If the initial data (\ref{inid}) satisfy the conditions (\ref{inic1}) and (%
\ref{inic2}) and the so--called effective Einstein N--adapted constraint
equations,%
\begin{equation}
\left( \widehat{\mathbf{E}}_{\alpha \beta }+\Lambda \mathbf{g}_{\alpha \beta
}\right) n^{\alpha }=0,  \label{vacuumconstr}
\end{equation}%
are computed for zero distortion in (\ref{distrel}), then the d--metric
stated by Theorem \ref{thinid} defines solutions of the nonholonomic vacuum
equations (\ref{neinstm}) and/or (\ref{cdeinst})--(\ref{lcconstr}).
\end{theorem}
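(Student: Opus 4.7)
The plan is to adapt the classical Choquet--Bruhat strategy to the N-adapted 2+2 framework. A key simplification is the hypothesis that the distortion in (\ref{distrel}) vanishes: on the open set $\mathcal{U}$ produced by Theorem \ref{thinid} the canonical d--connection $\widehat{\mathbf{D}}$ then coincides with the Levi--Civita connection of $\mathbf{g}$, so the standard contracted Bianchi identity $\widehat{\mathbf{D}}_{\alpha}(\widehat{\mathbf{E}}^{\alpha\beta}+\Lambda \mathbf{g}^{\alpha\beta})=0$ is available and all nonholonomy corrections sit in frame commutators (\ref{nonholr})--(\ref{anhcoef}) rather than in the connection itself.

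First, I apply Theorem \ref{thinid} to obtain the unique globally hyperbolic solution $\mathbf{g}^{\alpha\beta}$ of the reduced N-adapted wave system (\ref{cdeinst1}) on $\mathcal{U}$ from the Cauchy data (\ref{inid}). Next, I introduce the gauge defect d-vector $V^{\mu}:=\widehat{\square}\widehat{u}^{\mu}$ and compare (\ref{cdeinst1}) with (\ref{cdeinst}): a direct computation shows that the difference is a linear first-order differential expression in $V^{\mu}$. Conditions (\ref{inic1}) give $V^{\mu}|_{\mathcal{O}}=0$, and conditions (\ref{inic2}) together with the assumed constraint equations (\ref{vacuumconstr}) give $\mathbf{e}_{t}V^{\mu}|_{\mathcal{O}}=0$. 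Applying the contracted Bianchi identity to the reduced equations then yields a homogeneous linear second-order hyperbolic system of the form $\widehat{\square}V^{\mu}+(\text{lower order in }V)=0$, so uniqueness for linear hyperbolic systems forces $V^{\mu}\equiv 0$ on $\mathcal{U}$.

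Once $V^{\mu}\equiv 0$, the reduced system (\ref{cdeinst1}) collapses to the full N-adapted Einstein equations (\ref{cdeinst}). To complete the argument I verify that the constraint quantities $\mathcal{C}^{\beta}:=(\widehat{\mathbf{E}}_{\alpha\beta}+\Lambda \mathbf{g}_{\alpha\beta})n^{\alpha}$ propagate: combining the Bianchi identity with the bulk equations already in force gives a first-order symmetric hyperbolic system for $\mathcal{C}^{\beta}$ whose data vanish on $\mathcal{O}$ by hypothesis, hence $\mathcal{C}^{\beta}\equiv 0$ throughout $\mathcal{U}$. The zero-torsion side conditions (\ref{lcconstr}) are algebraic in the coefficients $N_{i}^{a}$ and in first derivatives of $\mathbf{g}$; they can be imposed on $\mathcal{O}$ by a choice of N-adapted frame, and the zero-distortion hypothesis combined with the metric compatibility of $\widehat{\mathbf{D}}$ propagates them off the initial slice via the just-established field equations. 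This yields (\ref{neinstm}) as well.

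The main obstacle is the second step: producing an honest linear hyperbolic equation for $V^{\mu}$. The subtlety is that $\widehat{\square}$ is built from anholonomic frames $\mathbf{e}_{\alpha}$, so the commutators (\ref{nonholr}) generate extra first-order contributions with coefficients $W_{\alpha\beta}^{\gamma}$ and $\Omega_{ij}^{a}$ that are absent in the standard holonomic derivation. One must verify that these terms are absorbed into the lower-order part of the propagation equation for $V^{\mu}$ and that they do not spoil either the hyperbolicity of the principal symbol or the matching of initial data supplied by (\ref{inic1}) and (\ref{inic2}); this is where the N-adapted computation genuinely differs from the classical case and must be checked using (\ref{anhcoef}) together with the structure of the d--connection coefficients (\ref{cdc}).
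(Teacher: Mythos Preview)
Your proposal is correct and follows essentially the same route the paper has in mind: the paper's own ``proof'' consists of the single sentence ``Using the Theorem \ref{thinid} and above presented considerations, we get the proof,'' where the ``considerations'' are precisely the N--adapted re--reading of the Choquet--Bruhat argument (reduced hyperbolic system, gauge conditions (\ref{inic1})--(\ref{inic2}), propagation of the wave--coordinate condition, constraint preservation) that you have spelled out in detail. In fact you have supplied substantially more of the actual argument than the paper does; the only minor oversight is that once zero distortion is assumed as a hypothesis, the conditions (\ref{lcconstr}) hold identically by Proposition \ref{approp}, so there is no separate propagation step needed for them.
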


This theorem gives us the possibility to state the Cauchy data for decoupled
modified gravity systems and their generic off--diagonal solutions in order
to generate N--adapted evolutions.

\subsection{On initial data sets and global nonholonomic evolution}

We adopt this convention for spacetime nonholonomic manifolds $\mathbf{V}=%
\mathbb{R}\times ~^{3}V,$ were $~^{3}V$ is a N--adapted 3-d manifold, when a
Whitney sum $\ T\ ^{3}V=h\ ^{3}V\oplus v\ ^{3}V$ is stated by a spacelike
nonholonomic distribution $~^{3}\mathbf{N}$ of type (\ref{ncon}) and there is
an embedding $\ e:\ ^{3}V\rightarrow \mathbf{V}$.

Using the d--metric $~^{[3]}\mathbf{g}$ (\ref{dm3d}), we can define the
second fundamental form $\mathbf{K}$ of a spacelike hypersurface $\ ^{3}V$
in $\mathbf{V,}$ $\widehat{\mathbf{K}}\mathbf{(X,Y):=g}(\widehat{\mathbf{D}}%
_{\mathbf{X}}\mathbf{n},\mathbf{Y}), \forall \mathbf{X\in }T\ ^{3}V$. The
unity d--vector $\ \mathbf{n}=n^{\alpha }\mathbf{e}_{\alpha }=n_{\overline{i}%
}\mathbf{e}^{\overline{i}}=n_{t}\mathbf{e}^{t}=\left( \sqrt{|g^{tt}|}\right)
^{-1}\mathbf{e}^{t},$ with normalization $\mathbf{g(n,n)}=\mathbf{g}^{\alpha
\beta }n_{\alpha }n_{\beta }=\mathbf{g}^{tt}(n_{t})^{2}=-1,$ is time--future
and normal to $~^{3}V.$ The value $\widehat{\mathbf{K}}=\{\widehat{K}_{%
\overline{i}\overline{j}}\}$ is the extrinsic canonical curvature d--tensor
of $~^{3}V.$ Imposing the zero torsion conditions (\ref{lcconstr}), $%
\widehat{\mathbf{K}}\rightarrow \{K_{\overline{i}\overline{j}}\},$ where $K_{%
\overline{i}\overline{j}}=-\frac{1}{2}\mathbf{g}^{t\alpha }\left( \mathbf{e}%
_{\overline{j}}\mathbf{g}_{\alpha \overline{i}}\mathbf{+e}_{\overline{i}}%
\mathbf{g}_{\alpha \overline{j}}-\mathbf{e}_{\alpha }\mathbf{g}_{\overline{i}%
\overline{j}}\right) n_{t}$ are components computed in standard form using
the Levi--Civita connection, but with respect to N--adapted frames. $\ $We
can invert the last formula and write $\partial _{t}\mathbf{g}_{\overline{i}%
\overline{j}}=2(\mathbf{g}^{tt}n_{t})^{-1}K_{\overline{i}\overline{j}}+$
\{terms determined by $\mathbf{g}_{\alpha \beta }$ and their
space--derivatives\}. Such formulas show that $K_{\overline{i}\overline{j}}$
and $\partial _{t}\mathbf{g}_{\overline{i}\overline{j}}$ are geometric
counterparts on hypersurfaces $\{~^{t}\widehat{u}=0\}.$

For the canonical d--connection $\widehat{\mathbf{D}}_{\alpha }=\left(
\widehat{D}_{t},\widehat{\mathbf{D}}_{\overline{i}}\right) $ and d--metric $%
~^{[3]}\mathbf{g}$ induced on a spacelike hypersurface in a Lorentzian
nonholonomic manifold $\mathbf{V,}$ we can derive a N--adapted variant of
Gauss--Codazzi equations%
\begin{equation*}
\ ^{[3]}\widehat{R}_{~\overline{j}\overline{k}\overline{l}}^{\overline{i}} =%
\widehat{\mathbf{R}}_{~\overline{j}\overline{k}\overline{l}}^{\overline{i}}+%
\widehat{K}_{~\overline{l}}^{\overline{i}}\widehat{K}_{\overline{j}\overline{%
k}}-\widehat{K}_{~\overline{k}}^{\overline{i}}\widehat{K}_{\overline{j}%
\overline{l}},\ \widehat{\mathbf{D}}_{\overline{i}}\widehat{K}_{\overline{j}%
\overline{k}}-\widehat{\mathbf{D}}_{\overline{j}}\widehat{K}_{\overline{i}%
\overline{k}} =\widehat{\mathbf{R}}_{~\overline{j}\overline{k}\overline{l}}^{%
\overline{i}}n^{\overline{l}}.
\end{equation*}
In these formulas, $~^{[3]}\widehat{R}_{~\overline{j}\overline{k}\overline{l}%
}^{\overline{i}}$ is the canonical curvature d--tensor of $~^{[3]}\mathbf{g,~%
}\widehat{\mathbf{R}}_{~\overline{j}\overline{k}\overline{l}}^{\overline{i}}$
is computed following formulas (\ref{dcurv}) as the spacetime canonical
d--curvature tensor, $\mathbf{n}$ is the timelike normal to hypersurface $%
~^{3}V$ when the local N--adapted coordinate system is such way chosen that
d--vectors $\mathbf{e}_{\overline{j}}$ are tangent to $~^{3}V.$ Contracting
indices, introducing divergence operator $\widehat{d}iv$ determined by
N--elongated partial derivatives (\ref{nader}), trace operator $tr$ and
absolute differential $d,$ we derive from above equations and (\ref{cdeinst}%
) the following system of equations:
\begin{eqnarray}
\widehat{d}iv\widehat{K}-d(tr\widehat{K}) &=&8\pi \widehat{J},%
\mbox{
momentum constraint};\   \label{undetconstr} \\
\ _{[3]}^{s}\widehat{R}-2\Lambda -~^{[3]}\mathbf{g}^{\overline{l}\overline{j}%
}\widehat{K}_{~\overline{l}}^{\overline{i}}\widehat{K}_{\overline{j}%
\overline{i}}+(tr\widehat{K})^{2} &=&16\pi \widehat{\rho },%
\mbox{
Hamiltonian constraint};  \notag \\
\mathcal{C}(\widehat{\mathcal{F}},~^{[3]}\mathbf{g}) &=&0,%
\mbox{ Einstein
constraint eqs};  \notag
\end{eqnarray}%
where $\ _{[3]}^{s}\widehat{R}$ is computed as the scalar (\ref{sdcurv}) but
for $~^{[3]}\mathbf{g.}$ In a general context, we consider that $~^{3}V$ is
embedded in a nonholonomic spacetime $\mathbf{V}$ with induced data $\left(
~^{3}V,~^{[3]}\mathbf{g,}\widehat{\mathbf{K}},\widehat{\mathcal{F}}\right) ,$
we have $\widehat{J}:=-\left( \mathbf{n,\cdot }\right) $ and $\widehat{\rho }%
:=\left( \mathbf{n,n}\right) .$ The term $\mathcal{C}(\widehat{\mathcal{F}}%
,~^{[3]}\mathbf{g}) $ denotes the set of additional constraints resulting
from the non--gravitational part, including nonholonomic distributions $~%
\mathbf{N}$ (\ref{ncon}). If in such a set, there are included $~$the zero
torsion conditions (\ref{lcconstr}), we can omit "hats" on
geometric/physical objects if they are written in "not" N--adapted frames
of reference. The equations (\ref{undetconstr}) form an undetermined system
of PDEs. For 3-d, there are locally four equations for twelve unknown values
given by the components of d--tensors $~^{[3]}\mathbf{g} $ and $\widehat{%
\mathbf{K}}.$ Using the conformal method with the Levi--Civita connection,
see details and references in Ref. \cite{york}, we can study the existence
and uniqueness of solutions to such systems.

The above considerations motivate the following definition

\begin{definition}
A canonical vacuum initial N--adapted data set is defined by a triple $%
\left( ~^{3}V,~^{[3]}\mathbf{g,}\widehat{\mathbf{K}}\right) ,$ where $%
(~^{[3]}\mathbf{g,}\widehat{\mathbf{K}})$ are defined as a solution of (\ref%
{undetconstr}).
\end{definition}

If the conditions (\ref{lcconstr}) are imposed, the data $(~^{[3]}\mathbf{g,}%
\widehat{\mathbf{K}})$ are equivalent to similar ones $(~^{[3]}\mathbf{g,K})$
with $\mathbf{K}$ computed for the Levi--Civita connection. Covering $~^{3}V$
by coordinate neighborhoods $\mathcal{O}_{u}$ of points $u\in ~^{3}V,$ we
can use Theorem \ref{thinid} to construct globally hyperbolic N--adapted
developments $\left( \mathcal{U}_{u},\mathbf{g}_{u}\right) $ of an initial
data set $\left( \mathcal{U}_{u}\subset ~^{3}V,~^{[3]}\mathbf{g,}\widehat{%
\mathbf{K}}\right) $ as in the above definition. The d--metrics generated in
such forms will coincide after performing suitable N--adapted
frame/coordinated transforms on charts covering such a spacetime wherever
such solutions are defined. We can patch all data $\left( \mathcal{U}_{u},%
\mathbf{g}_{u}\right) $ together to a globally hyperbolic Lorentzian
nonholonomic spacetime containing a Cauchy surface $~^{3}V.$ We prove that

\begin{theorem}
Any N--adapted initial data $(\ ^{3}V,~^{[3]}\mathbf{g},\widehat{\mathbf{K}}%
) $ of differentiability class $H^{s+1}\times H^{s},s>3/2,$ admits a
globally hyperbolic, N--adapted and unique development (in the sense of
Theorem \ref{thinid} and of the above considered assumptions and proofs).
\end{theorem}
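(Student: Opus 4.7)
The plan is to combine the local N--adapted existence result of Theorem \ref{thinid} with a standard patching argument of Choquet--Bruhat--Geroch type, now carried out in a way that respects the nonholonomic $2+2$ distribution $\mathbf{N}$. First I would cover the spacelike initial manifold $~^{3}V$ by a locally finite atlas of coordinate neighborhoods $\mathcal{O}_{u}$ on which the distribution $~^{3}\mathbf{N}$ and the d--metric $~^{[3]}\mathbf{g}$ admit N--adapted coordinates; by choosing the normal $\mathbf{n}$ to be future timelike and rescaling coordinates as in \eqref{inic1}, the hypersurface data $(~^{[3]}\mathbf{g},\widehat{\mathbf{K}})$ furnish precisely the Cauchy data required in \eqref{inid} together with the relations \eqref{inic2}. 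On each $\mathcal{O}_{u}$ one then invokes Theorem \ref{thinid} to construct a globally hyperbolic N--adapted development $(\mathcal{U}_{u},\mathbf{g}_{u})$ with Cauchy surface $\mathcal{O}_{u}$, such that $\mathbf{g}_{u}$ solves the reduced quasilinear wave system \eqref{cdeinst1} under the canonical N--harmonic gauge \eqref{nwc}.

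Second, I would check that the constraints \eqref{undetconstr} on the initial slice together with the reduced field equations \eqref{cdeinst1} actually produce a solution of the original nonholonomic vacuum system \eqref{neinstm} on each $\mathcal{U}_{u}$. The argument is the usual one: the N--harmonic conditions \eqref{nwc} form a first order hyperbolic system for the gauge quantities whose vanishing on $\{{}^{t}\widehat{u}=0\}$ is exactly \eqref{inic2}, and the N--adapted Bianchi identities (obtained by computing $\widehat{\mathbf{D}}_{\alpha}\widehat{\mathbf{E}}^{\alpha\beta}$ from \eqref{cdeinst1} together with the zero--distortion/zero--torsion conditions of \eqref{lcconstr}) propagate both the harmonic condition and the momentum and Hamiltonian parts of \eqref{undetconstr} off the Cauchy surface. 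This is the place where the switch from $\widehat{\mathbf{D}}$ to $\nabla$ via \eqref{distrel} and \eqref{vacuumconstr} is used, so that the constraints once satisfied for zero distortion remain satisfied throughout $\mathcal{U}_{u}$.

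Third, I would patch the local developments. On an overlap $\mathcal{O}_{u}\cap\mathcal{O}_{u'}$ the two initial data sets coincide, so the uniqueness clause of Theorem \ref{thinid} (applied in the common N--harmonic gauge) gives a diffeomorphism of a neighborhood of $\mathcal{O}_{u}\cap\mathcal{O}_{u'}$ in $\mathcal{U}_{u}$ onto one in $\mathcal{U}_{u'}$ intertwining the d--metrics. Because such diffeomorphisms arise from N--adapted frame/coordinate transforms, they preserve the h/v--splitting, and one obtains a globally hyperbolic N--anholonomic Lorentzian spacetime $(\mathbf{V},\mathbf{g})$ containing $~^{3}V$ as a Cauchy surface. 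A standard Zorn's lemma argument on the partially ordered set of N--adapted globally hyperbolic developments extending the given data then produces a maximal such development, which yields the uniqueness statement in the sense appropriate to the theorem.

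The main obstacle I expect is the third step: verifying that the transition maps from the local N--harmonic gauges really do descend to N--adapted transformations preserving the distribution $\mathbf{N}$, not merely arbitrary diffeomorphisms. This is delicate because the wave gauge \eqref{nwc} is imposed on the full spacetime coordinates $\widehat{u}^{\mu}=(\widehat{x}^{i},\widehat{y}^{a})$ and only the canonical d'Alembertian $\widehat{\square}$ (not the Levi--Civita one) appears; compatibility of \eqref{nwc} with the h/v--splitting under change of chart requires that the anholonomy coefficients \eqref{anhcoef} transform tensorially, which in turn uses that the zero--torsion conditions \eqref{lcconstr} have been enforced. Once this compatibility is in hand, the remaining arguments are direct translations of the standard proofs reviewed in \cite{bruhat,chrusch1,rodnianski,chru2} into the N--adapted language developed in Sections \ref{s2}--\ref{s3}.
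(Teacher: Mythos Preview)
Your proposal is correct and follows essentially the same approach as the paper: cover $~^{3}V$ by N--adapted coordinate neighborhoods $\mathcal{O}_{u}$, apply Theorem \ref{thinid} on each to obtain local globally hyperbolic developments $(\mathcal{U}_{u},\mathbf{g}_{u})$, observe that on overlaps the d--metrics agree up to N--adapted frame/coordinate transforms, and patch to a globally hyperbolic nonholonomic spacetime with Cauchy surface $~^{3}V$. The paper's argument is exactly this sketch, stated in the paragraph immediately preceding the theorem; your version is considerably more detailed (constraint propagation via the N--adapted Bianchi identities, the Zorn's lemma step for maximality, and the explicit caveat about gauge transitions preserving the h/v--splitting), but none of these additions change the underlying strategy.

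One minor point: the paper phrases uniqueness ``in the sense of Theorem \ref{thinid} and above considered assumptions,'' i.e.\ local geometric uniqueness under the N--harmonic gauge, and does not explicitly invoke a maximal development. Your Zorn's lemma argument therefore proves slightly more than the paper claims, but it is the natural completion and certainly not wrong. The obstacle you flag in the third step---that the transition diffeomorphisms between local N--harmonic gauges must be N--adapted---is real and is precisely what the paper glosses over with the phrase ``after performing suitable N--adapted frame/coordinated transforms''; your remark that this relies on the zero--torsion conditions \eqref{lcconstr} is the correct diagnosis.
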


\section{Anholonomic Modifications of Black Holes}

\label{s6}

\subsection{(Non) holonomic non--Abelian effective vacuum spaces}

This class of effective vacuum solutions are generated not just as a simple
limit $\Lambda \rightarrow 0,$ for \ instance, for a class of solutions (\ref%
{sol1}) with coefficients (\ref{sol1a})--(\ref{sol1c}). We have to construct
off--diagonal solutions of the effective Einstein equations for the
canonical d--connection taking the vacuum equations $\widehat{\mathbf{R}}%
_{\alpha \beta }=0$ and an ansatz $~\mathbf{g}$  with coefficients
satisfying the conditions%
\begin{eqnarray}
&&\epsilon _{1}\psi ^{\bullet \bullet }(r,\theta )+\epsilon _{2}\psi
^{^{\prime \prime }}(r,\theta )=0;  \label{vacdsolc} \\
h_{3} &=&\pm e^{-2\ ^{0}\phi }\frac{\left( h_{4}^{\ast }\right) ^{2}}{h_{4}}%
\mbox{ for a given }h_{4}(r,\theta ,\varphi ),\ \phi (r,\theta ,\varphi )=\
^{0}\phi =const;\   \notag \\
w_{i} &=&w_{i}(r,\theta ,\varphi ),\mbox{ for any such functions if }\lambda
=0;  \notag \\
n_{i} &=&\ \left\{
\begin{array}{rcl}
\ \ \ ^{1}n_{i}(r,\theta )+\ ^{2}n_{i}(r,\theta )\int \left( h_{4}^{\ast
}\right) ^{2}|h_{4}|^{-5/2}dv,\ \  & \mbox{ if \ } & n_{i}^{\ast }\neq 0; \\
\ ^{1}n_{i}(r,\theta ),\quad \qquad \qquad \qquad \qquad \qquad &
\mbox{ if
\ } & n_{i}^{\ast }=0.%
\end{array}%
\right.  \notag
\end{eqnarray}
Vacuum solutions of the effective Einstein equations for the Levi--Civita
connection, i.e of $R_{\alpha \beta }=0,$ are generated if we impose
additional constraints on coefficients of d--metric, for $e^{-2\ ^{0}\phi
}=1,$ as solutions of (\ref{lccondm}),
\begin{eqnarray}
h_{3} &=&\pm 4\left[ \left( \sqrt{|h_{4}|}\right) ^{\ast }\right] ^{2},\quad
h_{4}^{\ast }\neq 0;  \label{auxvacsol} \\
w_{1}w_{2}\left( \ln |\frac{w_{1}}{w_{2}}|\right) ^{\ast } &=&w_{2}^{\bullet
}-w_{1}^{\prime },\ w_{i}^{\ast }\neq 0;\ w_{2}^{\bullet }-w_{1}^{\prime
}=0,\ w_{i}^{\ast }=0;  \notag \\
\ ^{1}n_{1}^{\prime }(r,\theta )-\ ^{1}n_{2}^{\bullet }(r,\theta ) &=&0,\
n_{i}^{\ast }=0.  \label{vaclcsoc}
\end{eqnarray}

The constructed class of vacuum solutions with coefficients subjected to
conditions (\ref{vacdsolc})--(\ref{vaclcsoc}) is of type (\ref{sol2}) for (%
\ref{ep1a})--(\ref{ep3a}). Such metrics consist a particular case of vacuum
ansatz defined by Corollary \ref{corym2} with $\underline{h}_{4}=1$ and $%
\omega =1.$

Here we note that former analytic and numeric programs (for instance,
standard ones with Maple/ Mathematica) for constructing solutions in gravity
theories can not be directly applied for alternative verifications of our
solutions. Those approaches do not encode the nonholonomic constraints which
we use for constructing integral varieties. Nevertheless, it is possible to
check, in general, the analytic form, see all details summarized in Refs. \cite%
{ijgmmp,vexsol1,vexsol2} (and formulas from Appendix \ref{sb}), that the
vacuum effective Einstein equations (\ref{eq1})--(\ref{lccond}) with zero
effective sources, $~^{v}\Upsilon =0~$ and $~\Upsilon =0,$ can be solved by using the
above presented off--diagonal ansatz for metrics.

\subsection{$f$--deformations of the Schwarzschild metric}

We can consider a "prime" metric which, in general, is not a solution of
(modified) Einstein equations,
\begin{equation}
~^{\varepsilon }\mathbf{g}=-d\xi \otimes d\xi -r^{2}(\xi )\ d\vartheta
\otimes d\vartheta -r^{2}(\xi )\sin ^{2}\vartheta \ d\varphi \otimes
d\varphi +\varpi ^{2}(\xi )\ dt\otimes \ dt.  \label{5aux1}
\end{equation}%
We shall deform nonholonomically this metric into a "target" off--diagonal
one which will be a solution of the vacuum Einstein equations. The
nontrivial metric coefficients in (\ref{5aux1}) are stated in the form
\begin{equation}
\check{g}_{1}=-1,\ \check{g}_{2}=-r^{2}(\xi ),\ \check{h}_{3}=-r^{2}(\xi
)\sin ^{2}\vartheta ,\ \check{h}_{4}=\varpi ^{2}(\xi ),  \label{5aux1p}
\end{equation}%
for local coordinates $x^{1}=\xi ,x^{2}=\vartheta ,y^{3}=\varphi ,y^{4}=t,$
where  $\xi =\int dr\ \left\vert 1-\frac{2\mu _{0}}{r}+\frac{\varepsilon }{r^{2}}%
\right\vert ^{1/2}$ and $\varpi ^{2}(r)=1-\frac{2\mu _{0}}{r}+\frac{%
\varepsilon }{r^{2}}$.
If we put $\varepsilon =0$ with $\mu _{0}$ considered as a point mass, the
metric $~^{\varepsilon }\mathbf{g}$ (\ref{5aux1}) determines the
Schwarzschild solution. For simplicity, we  analyze only the case of
"pure" gravitational vacuum solutions,
not considering a more general construction when $\varepsilon =e^{2}$ can be
related to the electric charge for the Reissner--Nordstr\"{o}m metric. In
our approach, $\varepsilon $ is a small parameter (eccentricity) defining a
small deformation of a circle into an ellipse.

We generate exact solutions of the system (\ref{ep1a})--(\ref{ep3a}) with
effective $\Lambda =0$ via nonholonomic deformations $\ ^{\varepsilon }%
\mathbf{g\rightarrow }\ _{\eta }^{\varepsilon }\mathbf{g}$, when $g_{i}=\eta
_{i}\check{g}_{i}$ and $h_{a}=\eta _{a}\check{h}_{a}$ and $w_{i},n_{i}$
define a target metric {\small
\begin{eqnarray}
~_{\eta }^{\varepsilon }\mathbf{g} &=&\eta _{1}(\xi )d\xi \otimes d\xi +\eta
_{2}(\xi )r^{2}(\xi )\ d\vartheta \otimes d\vartheta +  \eta _{3}(\xi ,\vartheta ,\varphi )r^{2}(\xi )\sin ^{2}\vartheta \ \delta
\varphi \otimes \delta \varphi -\eta _{4}(\xi ,\vartheta ,\varphi )\varpi
^{2}(\xi )\ \delta t\otimes \delta t, \notag  \\
\delta \varphi &=&d\varphi +w_{1}(\xi ,\vartheta ,\varphi )d\xi +w_{2}(\xi
,\vartheta ,\varphi )d\vartheta ,\ \delta t=dt+n_{1}(\xi ,\vartheta )d\xi
+n_{2}(\xi ,\vartheta )d\vartheta .  \label{5sol1}
\end{eqnarray}%
} The gravitational field equations for zero source relate the coefficients
of the vertical metric and polarization functions,
\begin{equation}
h_{3}=h_{0}^{2}(b^{\ast })^{2}=\eta _{3}(\xi ,\vartheta ,\varphi )r^{2}(\xi
)\sin ^{2}\vartheta ,\ h_{4}=-b^{2}=-\eta _{4}(\xi ,\vartheta ,\varphi
)\varpi ^{2}(\xi ),  \label{aux41}
\end{equation}%
for $|\eta _{3}|=(h_{0})^{2}|\check{h}_{4}/\check{h}_{3}|[(\sqrt{|\eta _{4}|}%
)^{\ast }]^{2}.$ In these formulas, we have to choose $h_{0}=const$\ ( $%
h_{0}=2$ in order to satisfy the first condition (\ref{vaclcsoc})), where $%
\check{h}_{a}$ are stated by the Schwarzschild solution for the chosen
system of coordinates and $\eta _{4}$ can be any function satisfying the
condition $\eta _{4}^{\ast }\neq 0.$ We generate a class of solutions for
any function $b(\xi ,\vartheta ,\varphi )$ with $b^{\ast }\neq 0.$ For
different purposes, it is more convenient to work directly with $\eta _{4},$
for $\eta _{4}^{\ast }\neq 0,$ and/or $h_{4},$ for $h_{4}^{\ast }\neq 0.$
The gravitational polarizations $\eta _{1}$ and $\eta _{2},$ when $\eta
_{1}=\eta _{2}r^{2}=e^{\psi (\xi ,\vartheta )},$ \ are found from (\ref{eq1}%
) with zero source, written in the form $\psi ^{\bullet \bullet }+\psi
^{\prime \prime }=0.$

Introducing the defined values of the coefficients in the ansatz (\ref{5sol1}%
), we find a class of exact off--diagonal vacuum solutions of the Einstein
equations defining stationary nonholonomic deformations of the
Sch\-warz\-schild metric, {\small
\begin{eqnarray}
~^{\varepsilon }\mathbf{g} &=&-e^{\psi }\left( d\xi \otimes d\xi +\
d\vartheta \otimes d\vartheta \right) -4\left[ (\sqrt{|\eta _{4}|}) ^{\ast }%
\right] ^{2}\varpi ^{2}\ \delta \varphi \otimes \ \delta \varphi +\eta
_{4}\varpi ^{2}\ \delta t\otimes \delta t,  \notag \\
\delta \varphi &=&d\varphi +w_{1}d\xi +w_{2} d\vartheta ,\ \delta t=dt+ \
^{1}n_{1}d\xi +\ ^{1}n_{2}d\vartheta.  \label{5sol1a}
\end{eqnarray}%
} The N--connection coefficients $w_{i}(\xi ,\vartheta ,\varphi)$ and $\
^{1}n_{i}(\xi,\vartheta)$ must satisfy the conditions (\ref{vaclcsoc}) in
order to get vacuum metrics in GR.

It should be emphasized here that, in general, the bulk of solutions from
the set of target metrics do not define black holes and do not describe
obvious physical situations. They preserve the singular character of the
coefficient $\varpi ^{2}$ vanishing on the horizon of a Schwarzschild black
hole if we take only smooth integration functions for some small deformation
parameters $\varepsilon .$

\subsection{Linear parametric polarizations induced by $f$--modifi\-cations}

We may select some locally anisotropic configurations with possible physical
interpretation of gravitational vacuum configurations with spherical and/or
rotoid (ellipsoid) symmetry if it is considered a generating function
\begin{equation}
b^{2}=q(\xi ,\vartheta ,\varphi )+\varepsilon \varrho (\xi ,\vartheta
,\varphi ).  \label{gf1}
\end{equation}%
For simplicity, we restrict our analysis only to linear decompositions on
a small parameter $\varepsilon ,$ with $0<\varepsilon <<1.$

Using (\ref{gf1}), we compute $\left( b^{\ast }\right) ^{2}=[(\sqrt{|q|}%
)^{\ast }]^{2}\ [1+\varepsilon \frac{1}{(\sqrt{|q|})^{\ast }}({\varrho }/%
\sqrt{|q|})^{\ast }]$ and the vertical coefficients of d--metric (\ref%
{5sol1a}), i.e., $h_{3}$ and $h_{4}$ (and corresponding polarizations $\eta
_{3}$ and $\eta _{4}),$ see formulas (\ref{aux41})\footnote{%
Nonholonomic deformations of the Schwarzschild solution (not depending on $%
\varepsilon )$ can be generated if we consider $\varepsilon =0$ and $b^{2}=q$
and $\left( b^{\ast }\right) ^{2}=\left[ (\sqrt{|q|})^{\ast }\right] ^{2}.$}.
We model rotoid configurations \ if we chose
\begin{equation}
q=1-\frac{2\mu (\xi ,\vartheta ,\varphi )}{r}\mbox{ and }\varrho =\frac{%
q_{0}(r)}{4\mu ^{2}}\sin (\omega _{0}\varphi +\varphi _{0}),  \label{aux42}
\end{equation}%
for $\mu (\xi ,\vartheta ,\varphi )=\mu _{0}+\varepsilon \mu _{1}(\xi
,\vartheta ,\varphi )$ (supposing that the mass is locally anisotropically
polarized) with certain constants $\mu ,\omega _{0}$ and $\varphi _{0}$ and
arbitrary functions/ polarizations $\mu _{1}(\xi ,\vartheta ,\varphi )$ and $%
q_{0}(r)$ to be determined from some boundary conditions, with $\varepsilon $
being the eccentricity.\footnote{%
We may treat $\varepsilon $ as an eccentricity imposing the condition that
the coefficient $h_{4}=b^{2}=\eta _{4}(\xi ,\vartheta ,\varphi )\varpi
^{2}(\xi )$ becomes zero for data (\ref{aux42}), if $r_{+}\simeq 2\mu
_{0}/\left( 1+\varepsilon \frac{q_{0}(r)}{4\mu ^{2}}\sin (\omega _{0}\varphi
+\varphi _{0})\right) .$}. This condition defines a small deformation of
the Schwarzschild spherical horizon into an ellipsoidal one (rotoid
configuration with eccentricity $\varepsilon ).$

The resulting off--diagonal solution with rotoid type symmetry is
{\small
\begin{eqnarray}
~^{rot}\mathbf{g} &=&-e^{\psi }\left( d\xi \otimes d\xi +\ d\vartheta
\otimes d\vartheta \right) +\left( q+\varepsilon \varrho \right) \ \delta
t\otimes \delta t  -4[(\sqrt{|q|})^{\ast }] ^{2}\ [1+\varepsilon \frac{1}{(\sqrt{%
|q|})^{\ast }} ({\varrho }/\sqrt{|q|}) ^{\ast }] \ \delta \varphi \otimes \
\delta \varphi ,  \notag \\
\delta \varphi &=&d\varphi +w_{1}d\xi +w_{2}d\vartheta ,\ \delta t=dt+\
^{1}n_{1}d\xi +\ ^{1}n_{2}d\vartheta . \label{rotoidm}
\end{eqnarray}%
}
The functions $q(\xi ,\vartheta ,\varphi )$ and $\varrho (\xi ,\vartheta
,\varphi )$ are given by formulas (\ref{aux42}) and the N--connection
coefficients $w_{i}(\xi ,\vartheta ,\varphi )$ and $\ n_{i}=$ $\
^{1}n_{i}(\xi ,\vartheta )$ are subjected to conditions of type (\ref%
{vaclcsoc}),
\begin{eqnarray}
w_{1}w_{2}\left( \ln |\frac{w_{1}}{w_{2}}|\right) ^{\ast } &=&w_{2}^{\bullet
}-w_{1}^{\prime },\quad w_{i}^{\ast }\neq 0;  \label{constr3} \\
\mbox{ or \ }w_{2}^{\bullet }-w_{1}^{\prime } &=&0,\quad w_{i}^{\ast }=0;\
^{1}n_{1}^{\prime }(\xi ,\vartheta )-\ ^{1}n_{2}^{\bullet }(\xi ,\vartheta
)=0  \notag
\end{eqnarray}%
and $\psi (\xi ,\vartheta )$ being any function for which $\psi ^{\bullet
\bullet }+\psi ^{\prime \prime }=0.$

\section{Ellipsoidal $f$--Configurations and Solitons}

\label{s7} We can consider nonholonomic deformations in modified gravity for
arbitrary signs of the cosmological constant $\Lambda \neq 0$ containing
contributions from nonholonomic $f$--modifications. Such classes of
solutions can be constructed in general form for a system (\ref{eq1})--(\ref%
{eq4}) and (\ref{lccondm}) with coefficients of metric of type (\ref{sol1}).
Such metrics consist a particular case of non--vacuum ansatz defined by
Corollary \ref{corym1} with $\underline{h}_{4}=1$ and $\omega =1.$

\subsection{Nonholonomic rotoid deformations}

Let us consider a diagonal metric of type
\begin{equation}
~_{\lambda }^{\varepsilon }\mathbf{g}=d\xi \otimes d\xi +r^{2}(\xi )\
d\theta \otimes d\theta +r^{2}(\xi )\sin ^{2}\theta \ d\varphi \otimes
d\varphi +\ _{\lambda }\varpi ^{2}(\xi )\ dt\otimes \ dt,  \label{sds1}
\end{equation}%
where nontrivial metric coefficients are parametriz\-ed in the form $\check{g%
}_{1}=1,\ \check{g}_{2}=r^{2}(\xi ),\ \check{h}_{3}=r^{2}(\xi )\sin
^{2}\vartheta ,\ \check{h}_{4}=\ _{\lambda }\varpi ^{2}(\xi )$, for local
coordinates $x^{1}=\xi ,x^{2}=\vartheta ,y^{3}=\varphi ,y^{4}=t,$ with $\xi
=\int $ $dr/\left| q(r)\right| ^{\frac{1}{2}},$ and $\ _{\lambda }\varpi
^{2}(r)=-\sigma ^{2}(r)q(r),$ for $q(r)=1-2m(r)/r-\Lambda r^{2}/3.$

The ansatz for such classes of solutions is chosen to be of the form {\small
\begin{eqnarray*}
\ ^{\lambda }\mathbf{\mathring{g}} &=&e^{\underline{\phi }(\xi ,\theta )}\
(d\xi \otimes d\xi +\ d\theta \otimes d\theta )+h_{3}(\xi ,\theta ,\varphi
)\ {\delta }\varphi \otimes {\delta }\varphi +h_{4}(\xi ,\theta ,\varphi )\ {%
\delta t}\otimes ~{\delta t}, \\
~\delta \varphi &=&d\varphi +w_{1}\left( \xi ,\theta ,\varphi \right) d\xi
+w_{2}\left( \xi ,\theta ,\varphi \right) d\theta ,\ \delta t =dt+n_{1}\left( \xi ,\theta ,\varphi \right) d\xi +n_{2}\left(
\xi ,\theta ,\varphi \right) d\theta ,
\end{eqnarray*}%
} for $h_{3}=-h_{0}^{2}(b^{\ast })^{2}=\eta _{3}(\xi ,\theta ,\varphi
)r^{2}(\xi )\sin ^{2}\vartheta ,\ h_{4}=b^{2}=\eta _{4}(\xi ,\theta ,\varphi
)\ _{\lambda }\varpi ^{2}(\xi ).$ The coefficients \ of this metric
determine exact solutions if
\begin{eqnarray}
&&\underline{\phi }^{\bullet \bullet }(\xi ,\theta )+\underline{\phi }%
^{^{\prime \prime }}(\xi ,\theta )=2\Lambda ;  \label{anhsol2} \\
h_{3} &=&\pm \frac{\left( \phi ^{\ast }\right) ^{2}}{4\Lambda }e^{-2\
^{0}\phi (\xi ,\theta )},\ h_{4}=\mp \frac{1}{4\ \Lambda }e^{2(\phi -\
^{0}\phi (\xi ,\theta ))};\
w_{i} =-\partial _{i}\phi /\phi ^{\ast };  \notag \\
n_{i} &=&\ ^{1}n_{i}(\xi ,\theta )+\ ^{2}n_{i}(\xi ,\theta )\int \left( \phi
^{\ast }\right) ^{2}e^{-2(\phi -\ ^{0}\phi (\xi ,\vartheta ))}d\varphi ,\
\notag \\
&=&\ \ \left\{
\begin{array}{rcl}
\ \ ^{1}n_{i}(\xi ,\theta )+\ ^{2}n_{i}(\xi ,\theta )\int e^{-4\phi }\frac{%
\left( h_{4}^{\ast }\right) ^{2}}{h_{4}}d\varphi ,\ \  & \mbox{ if \ } &
n_{i}^{\ast }\neq 0; \\
\ ^{1}n_{i}(\xi ,\theta ),\quad \qquad \qquad \qquad \qquad \qquad &
\mbox{
if \ } & n_{i}^{\ast }=0;%
\end{array}%
\right.  \notag
\end{eqnarray}%
for any nonzero $h_{a}$ and $h_{a}^{\ast }$ and (integrating) functions $%
^{1}n_{i}(\xi ,\theta ),\ ^{2}n_{i}(\xi ,\theta ),$ generating function $%
\phi (\xi ,\theta ,\varphi )$, and $\ ^{0}\phi (\xi ,\theta )$ to be
determined from certain boundary conditions for a fixed system of
coordinates.

For nonholonomic effective ellipsoid de Sitter configurations, we  parameterize
\begin{eqnarray}
~_{\lambda }^{rot}\mathbf{g} &=&-e^{\underline{\phi }(\xi ,\theta )}\left(
d\xi \otimes d\xi +\ d\theta \otimes d\theta \right) +\left( \underline{q}%
+\varepsilon \underline{\varrho }\right) \ \delta t\otimes \delta t  \notag \\
&& -h_{0}^{2}\left[ (\sqrt{|\underline{q}|})^{\ast }\right]
^{2}[1+\varepsilon \frac{1}{(\sqrt{|\underline{q}|})^{\ast }}(\underline{%
\varrho }/\sqrt{|\underline{q}|})^{\ast }]\ \delta \varphi \otimes \ \delta
\varphi ,  \notag \\
\delta \varphi &=&d\varphi +w_{1}d\xi +w_{2}d\vartheta ,\ \delta
t=dt+n_{1}d\xi +n_{2}d\vartheta ,  \label{soladel}
\end{eqnarray}%
where $\underline{q}=1-\frac{2\ ^{1}\underline{\mu }(r,\theta ,\varphi )}{r}%
,\ \ \underline{\varrho }=\frac{\underline{q}_{0}(r)}{4\underline{\mu }%
_{0}^{2}}\sin (\omega _{0}\varphi +\varphi _{0})$, are chosen to generate an
anisotro\-pic rotoid configuration for the smaller "horizon" (when $\ h_{4}=0),\ r_{+}\simeq 2\ ^{1}\underline{%
\mu }/\left( 1+\varepsilon \frac{\underline{q}_{0}(r)}{4\underline{\mu }%
_{0}^{2}}\sin (\omega _{0}\varphi +\varphi _{0})\right) ,$ for a
corresponding $\underline{q}_{0}(r).$

We have to impose the condition that the coefficients of the above d--metric
induce a zero torsion in order to generate solutions of the Einstein
equations for the Levi--Civita connection. Using formula (\ref{constr3}),
for $\phi ^{\ast }\neq 0,$ we obtain that $\phi (r,\varphi ,\theta )=\ln
|h_{4}^{\ast }/\sqrt{|h_{3}h_{4}|}|$ must be any function defined in
non--explicit form from equation $2e^{2\phi }\phi =\Lambda.$ The set of
constraints for the N--connection coefficients is solved, if the integration
functions in (\ref{anhsol2}) are chosen in a form when $w_{1}w_{2}\left( \ln
|\frac{w_{1}}{w_{2}}|\right) ^{\ast }=w_{2}^{\bullet }-w_{1}^{\prime }$ for $%
w_{i}^{\ast }\neq 0;$ $w_{2}^{\bullet }-w_{1}^{\prime }=0$ for $\
w_{i}^{\ast }=0;$ and take $\ n_{i}=\ ^{1}n_{i}(x^{k})$ for $\
^{1}n_{1}^{\prime }(x^{k})-\ ^{1}n_{2}^{\bullet }(x^{k})=0.$

In a particular case, in the limit $\varepsilon \rightarrow 0,$ we get a
subclass of solutions of type (\ref{soladel}) with spherical symmetry but
with generic off--diagonal coefficients induced by the N--connection
coefficients. This class of spacetimes depend on cosmological constants
polarized nonholonomically by $f$--modifications. We can extract from such
configurations the Schwarzschild solution, if we select a set of functions
with the properties $\phi \rightarrow const,w_{i}\rightarrow
0,n_{i}\rightarrow 0$ and $h_{4}\rightarrow \varpi ^{2}.$

\subsection{Effective vacuum solitonic configurations}

It is possible to construct off--diagonal vacuum spacetimes generated by
3--d solitons as examples of generic off--diagonal solutions with nontrivial
vertical conformal factor $\omega .$ We consider that there are satisfied
the conditions of Corollary \ref{corym2} with $\underline{h}_{4}=1$ for
effective vacuum solutions (such configurations may encode $f$%
--modifications) and the Cauchy problem is stated as in Section \ref{s3}.

\subsubsection{Solutions with solitonic factor $\omega (x^{1},y^{3},t)$}

We take $\omega =\eta (x^{1},y^{3},t),$ when $y^{4}=t$ is a timelike
coordinate, as a solution of KdP equation \cite{kadom}
\begin{equation}
\pm \eta ^{\ast \ast }+(\partial _{t}\eta +\eta \ \eta ^{\bullet }+\epsilon
\eta ^{\bullet \bullet \bullet })^{\bullet }=0,  \label{kdp1}
\end{equation}%
with dispersion $\epsilon $ and possible dependencies on a set of parameters
$\theta .$ It is supposed that, in the dispersionless limit $\epsilon
\rightarrow 0$, the solutions are independent on $y^{3}$ and determined by
Burgers' equation $\partial _{t}\eta +\eta \ \eta ^{\bullet }=0.$ For such
3--d solitonic configurations, the conditions (\ref{lccondm}) are written in
the form%
\begin{equation*}
\mathbf{e}_{1}\eta =\eta ^{\bullet }+w_{1}(x^{i},y^{3})\eta ^{\ast
}+n_{1}(x^{i})\partial _{t}\eta =0.
\end{equation*}%
For $\eta ^{\prime }=0,$ we can impose the condition $w_{2}=0$ and $n_{2}=0.$

Such vacuum solitonic metrics can be parametrized in the form
\begin{eqnarray*}
\ _{}\mathbf{g} &=&e^{\psi (x^{k})}(dx^{1}\otimes dx^{1}+dx^{2}\otimes
dx^{2})+\left[ \eta (x^{1},y^{3},t)\right] ^{2}h_{a}(x^{1},y^{3})\ \mathbf{e}%
^{a}\otimes \mathbf{e}^{a}, \\
\mathbf{e}^{3} &=&dy^{3}+w_{1}(x^{k},y^{3})dx^{1},\ \mathbf{e}%
^{4}=dy^{4}+n_{1}(x^{k})dx^{1}.
\end{eqnarray*}%
This class of metrics does not have (in general) Killing \ symmetries but
may possess symmetries determined by solitonic solutions of (\ref{kdp1}).
Alternatively, we can consider\ that $\eta $ is a solution of any three
dimensional solitonic and/ or other nonlinear wave equations; in a similar
manner, we can generate solutions for $\omega =\eta (x^{2},y^{3},t).$

\subsubsection{Solitonic metrics with factor $\protect\omega (x^{i},t)$}

There are effective vacuum metrics when the solitonic dynamics does not
depend on anisotropic coordinate $y^{3}.$ In this case $\omega =\widehat{%
\eta }(x^{k},t)$ is a solution of KdP equation
\begin{equation}
\pm \widehat{\eta }^{\bullet \bullet }+(\partial _{t}\widehat{\eta }+%
\widehat{\eta }\ \widehat{\eta }^{\prime }+\epsilon \widehat{\eta }^{\prime
\prime \prime })^{\prime }=0.  \label{kdp3a}
\end{equation}%
In the dispersionless limit $\epsilon \rightarrow 0$ the solutions are
independent on $x^{1}$ and determined by Burgers' equation $\partial _{t}%
\widehat{\eta }+\widehat{\eta }\ \widehat{\eta }^{\prime }=0.$

This class of vacuum solitonic modified gravity configurations is given by
\begin{eqnarray*}
\ _{2}\mathbf{g} &=&e^{\psi (x^{k})}(dx^{1}\otimes dx^{1}+dx^{2}\otimes
dx^{2})+\left[ \widehat{\eta }(x^{k},t)\right] ^{2}h_{a}(x^{k},y^{3})\
\mathbf{e}^{a}\otimes \mathbf{e}^{a}, \\
\mathbf{e}^{3} &=&dy^{3}+w_{1}(x^{k},y^{3})dx^{1},\ \mathbf{e}%
^{4}=dy^{4}+n_{1}(x^{k})dx^{1};
\end{eqnarray*}%
the conditions (\ref{lccondm}) are $\mathbf{e}_{1}\widehat{\eta }=\widehat{%
\eta }^{\bullet }+n_{1}(x^{i})\partial _{t}\widehat{\eta }=0,\ \mathbf{e}_{2}%
\widehat{\eta }=\widehat{\eta }^{\prime }+n_{2}(x^{i})\partial _{t}\widehat{%
\eta }=0$.

It is possible to derive an infinite number of vacuum gravitational 2-d and
3-d configurations characterized by corresponding solitonic hierarchies and
bi--Hamilton structures, for instance, related to different KdP equations (%
\ref{kdp3a}) with possible mixtures with solutions for 2-d and 3-d
sine--Gordon equations etc, see details in Ref. \cite{vacarsolitonhier}.

\vskip5pt

\textbf{Acknowledgments:\ } The work is partially supported by the Program
IDEI, PN-II-ID-PCE-2011-3-0256. Author is grateful to T. Grammenos, A. P.
Kouretisis, N. E. Mavromatos, P. C. Stavrinos and E. C. Vagenas for
important discussions, kind support and hospitality.

\appendix

\setcounter{equation}{0} \renewcommand{\theequation}
{A.\arabic{equation}} \setcounter{subsection}{0}
\renewcommand{\thesubsection}
{A.\arabic{subsection}}

\section{Proof of Theorem \protect\ref{th2a}}

\label{sb}For $\omega =1$ and $\underline{h}_{a}=const,$ such proofs can be
obtained by straightforward computations \cite{ijgmmp}. The approach was
extended for $\omega \neq 1$ and higher dimensions in \cite{vexsol1,vexsol2}%
. In this section, we sketch a proof for ansatz (\ref{ans1}) with nontrivial
$\underline{h}_{4}$ depending on variable $y^{4}$ when $\omega =1$ in data (%
\ref{paramdcoef}). At the next step, the formulas will be completed for
nontrivial values $\omega \neq 1$.

If $\widehat{R}_{~1}^{1}=\widehat{R}_{~2}^{2}$ and $\widehat{R}_{~3}^{3}=%
\widehat{R}_{~4}^{4},$ the effective Einstein equations (\ref{cdeinst}) for $%
\widehat{\mathbf{D}}$ and data (\ref{data1a}) (see below) can be written for
any source (\ref{source}) in the form {\small
\begin{equation*}
\widehat{E}_{~1}^{1}=\widehat{E}_{~2}^{2}=-\widehat{R}_{~3}^{3}=\mathbf{%
\Upsilon }(x^{k},y^{3})+\underline{\mathbf{\Upsilon }}(x^{k},y^{3},y^{4}),\
\widehat{E}_{~3}^{3}=\widehat{E}_{~4}^{4}=-\widehat{R}_{~1}^{1}=\ ^{v}%
\mathbf{\Upsilon }(x^{k}).
\end{equation*}%
} The geometric data for the conditions of Theorem \ref{th2a} are $%
g_{i}=g_{i}(x^{k})$ and {\small
\begin{equation}
g_{3}=h_{3}(x^{k},y^{3}),g_{4}=h_{4}(x^{k},y^{3})\underline{h}%
_{4}(x^{k},y^{4}),N_{i}^{3}=w_{i}(x^{k},y^{3}),N_{i}^{4}=n_{i}(x^{k},y^{3}),
\label{data1a}
\end{equation}%
} for $\underline{h}_{3}=1$ and local coordinates $u^{\alpha
}=(x^{i},y^{a})=(x^{1},x^{2},y^{3},y^{4}).$ \ For such values, we shall
compute respectively the coefficients of $\Omega _{\ \alpha \beta }^{\ a}$, canonical d--connection $\widehat{\mathbf{\Gamma }}_{\
\alpha \beta }^{\gamma }$ (\ref{cdc}), d--torsion $\widehat{\mathbf{T}}_{\
\alpha \beta }^{\gamma }$\ (\ref{dtors}), necessary coefficients of
d--curvature $\widehat{\mathbf{R}}_{\ \alpha \beta \gamma }^{\tau }$ (\ref%
{dcurv}) with respective contractions for $\widehat{\mathbf{R}}_{\alpha
\beta }:=\widehat{\mathbf{R}}_{\ \alpha \beta \gamma }^{\gamma }$ (\ref%
{driccic}) and resulting $\ ^{s}\widehat{R}$ (\ref{sdcurv}) and $\widehat{%
\mathbf{E}}_{\alpha \beta }$(\ref{enstdt}). Finally, we shall state the
conditions (\ref{lcconstr}) when general coefficients (\ref{data1a}) are
considered for d--metrics.

\subsection{Coefficients of the canonical d--connection}

There are horizontal nontrivial coefficients of $\widehat{\mathbf{\Gamma }}%
_{\ \alpha \beta }^{\gamma }$ (\ref{cdc}),{\small
\begin{eqnarray*}
\widehat{L}_{jk}^{i} &=&\frac{1}{2}g^{i1}\left( \mathbf{e}_{k}g_{j1}+\mathbf{%
e}_{j}g_{k1}-\mathbf{e}_{1}g_{jk}\right) +\frac{1}{2}g^{i2}\left( \mathbf{e}%
_{k}g_{j2}+\mathbf{e}_{j}g_{k2}-\mathbf{e}_{2}g_{jk}\right),  \\
\mbox{ i.e., }\widehat{L}_{jk}^{1} &=&\frac{1}{2g_{1}}\left( \mathbf{\partial
}_{k}g_{j1}+\mathbf{\partial }_{j}g_{k1}-g_{jk}^{\bullet }\right) ,\widehat{L%
}_{jk}^{2}=\frac{1}{2g_{2}}\left( \mathbf{\partial }_{k}g_{j2}+\mathbf{%
\partial }_{j}g_{k2}-g_{jk}^{\prime }\right) .
\end{eqnarray*}%
} The h--v--components $\widehat{L}_{bk}^{a}$ are computed following
formulas {\small
\begin{eqnarray*}
\widehat{L}_{bk}^{3} &=&e_{b}(N_{k}^{3})+\frac{1}{2g_{3}}[\mathbf{e}%
_{k}g_{b3}-g_{3}\ e_{b}N_{k}^{3}-g_{3b}(N_{k}^{3})^{\ast
}-g_{4b}(N_{k}^{4})^{\circ }]=e_{b}(N_{k}^{3}) \\
&&+\frac{1}{2g_{3}}[\mathbf{\partial }_{k}g_{b3}-N_{k}^{3}g_{b3}^{\ast
}-N_{k}^{4}g_{b3}^{\circ }-g_{3}\ e_{b}N_{k}^{3}-g_{3b}(N_{k}^{3})^{\ast
}-g_{4b}(N_{k}^{4})^{\ast }] \\
&=&e_{b}(w_{k}+\underline{w}_{k})+\frac{1}{2g_{3}}[\mathbf{\partial }%
_{k}g_{b3}-(w_{k}+\underline{w}_{k})g_{b3}^{\ast }-(n_{k}+\underline{n}%
_{k})g_{b3}^{\circ } -g_{3}\ e_{b}(w_{k}+\underline{w}_{k})-g_{3b}w_{k}{}^{\ast
}-g_{4b}n_{k}{}^{\ast }],
\end{eqnarray*}%
\begin{eqnarray*}
\widehat{L}_{bk}^{4} &=&e_{b}(N_{k}^{4})+\frac{1}{2g_{4}}[\mathbf{e}%
_{k}g_{b4}-g_{4}\ e_{b}N_{k}^{4}-g_{3b}(N_{k}^{3})^{\circ
}-g_{4b}(N_{k}^{4})^{\circ }]=e_{b}(N_{k}^{4}) \\
&&+\frac{1}{2g_{4}}[\mathbf{\partial }_{k}g_{b4}-N_{k}^{3}g_{b4}^{\ast
}-N_{k}^{4}g_{b4}^{\circ }-g_{4}\ e_{b}N_{k}^{4}-g_{3b}(N_{k}^{3})^{\circ
}-g_{4b}(N_{k}^{4})^{\circ }] \\
&=&e_{b}(n_{k}+\underline{n}_{k})+\frac{1}{2g_{4}}[\mathbf{\partial }%
_{k}g_{b4}-(w_{k}+\underline{w}_{k})g_{b4}^{\ast }-(n_{k}+\underline{n}%
_{k})g_{b4}^{\circ } -g_{4}\ e_{b}(n_{k}+\underline{n}_{k})-g_{3b}\underline{w}_{k}^{\circ
}-g_{4b}\underline{n}_{k}^{\circ }].
\end{eqnarray*}%
} In explicit form, we obtain these nontrivial values{\small
\begin{eqnarray*}
\widehat{L}_{3k}^{3} &=&w_{k}^{\ast }+\frac{1}{2g_{3}}[\mathbf{\partial }%
_{k}g_{3}-w_{k}g_{3}^{\ast }-n_{k}g_{3}^{\circ }-g_{3}\ w_{k}^{\ast
}-g_{3}w_{k}{}^{\ast }] = \frac{1}{2g_{3}}[\mathbf{\partial }_{k}g_{3}-w_{k}g_{3}^{\ast }]=\frac{%
\mathbf{\partial }_{k}h_{3}}{2h_{3}}-w_{k}\frac{h_{3}^{\ast }}{2h_{3}}, \\
\widehat{L}_{4k}^{3} &=&\frac{1}{2g_{3}}[-g_{4}n_{k}{}^{\ast }]=-\frac{h_{4}%
\underline{h}_{4}}{2h_{3}}n_{k}{}^{\ast },\ \widehat{L}_{3k}^{4}=n_{k}^{\ast
}+\frac{1}{2g_{4}}[-g_{4}n_{k}^{\ast }]=\frac{1}{2}n_{k}^{\ast },\\
\widehat{L}_{4k}^{4}&=&\frac{1}{2g_{4}}[\mathbf{\partial }%
_{k}g_{4}-w_{k}g_{4}^{\ast }-n_{k}g_{4}^{\circ }]=\frac{\mathbf{\partial }%
_{k}(h_{4}\underline{h}_{4})}{2h_{4}\underline{h}_{4}}-w_{k}\frac{%
h_{4}^{\ast }}{2h_{4}}-n_{k}\frac{\underline{h}_{4}^{\circ }}{2\underline{h}%
_{4}}.
\end{eqnarray*}
} For the set of $h$--$v$ $C$--coefficients, we get $\widehat{C}_{jc}^{i}=%
\frac{1}{2}g^{ik}\frac{\partial g_{jk}}{\partial y^{c}}=0$. The $v$%
--components of $C$--coefficients are computed using the following formulas
\begin{equation*}
\widehat{C}_{bc}^{3}=\frac{1}{2g_{3}}\left(
e_{c}g_{b3}+e_{c}g_{c3}-g_{bc}^{\ast }\right) ,\ \widehat{C}_{bc}^{4}=\frac{1%
}{2g_{4}}\left( e_{c}g_{b4}+e_{b}g_{c4}-g_{bc}^{\circ }\right) ,
\end{equation*}%
i. e., $\widehat{C}_{33}^{3}=\frac{g_{3}^{\ast }}{2g_{3}}=\frac{h_{3}^{\ast }%
}{2h_{3}},~\widehat{C}_{34}^{3}=\frac{g_{3}^{\circ }}{2g_{3}}=0,\ \widehat{C}%
_{44}^{3}=-\frac{g_{4}^{\ast }}{2g_{3}}=-\frac{h_{4}^{\ast }\underline{h}_{4}%
}{h_{3}},\widehat{C}_{33}^{4}=-\frac{g_{3}^{\circ }}{2g_{4}}=0,~\widehat{C}%
_{34}^{4}=\frac{g_{4}^{\ast }}{2g_{4}}=\frac{h_{4}^{\ast }}{2h_{4}},\widehat{%
C}_{44}^{4}=\frac{g_{4}^{\circ }}{2g_{4}}=\frac{\underline{h}_{4}^{\circ }}{2%
\underline{h}_{4}}$.

Putting together the above formulas, we find all
nontrivial coefficients, {\small
\begin{eqnarray}
\widehat{L}_{11}^{1} &=&\frac{g_{1}^{\bullet }}{2g_{1}},\ \widehat{L}%
_{12}^{1}=\frac{g_{1}^{\prime }}{2g_{1}},\widehat{L}_{22}^{1}=-\frac{%
g_{2}^{\bullet }}{2g_{1}},\ \widehat{L}_{11}^{2}=\frac{-g_{1}^{\prime }}{%
2g_{2}},\ \widehat{L}_{12}^{2}=\frac{g_{2}^{\bullet }}{2g_{2}},\ \widehat{L}%
_{22}^{2}=\frac{g_{2}^{\prime }}{2g_{2}},  \label{nontrdc} \\
\widehat{L}_{4k}^{4} &=&\frac{\mathbf{\partial }_{k}(h_{4}\underline{h}_{4})%
}{2h_{4}\underline{h}_{4}}-\frac{w_{k}h_{4}^{\ast }}{2h_{4}}-(n_{k}+%
\underline{n}_{k})\frac{\underline{h}_{4}^{\circ }}{2\underline{h}_{4}},%
\widehat{L}_{3k}^{3}=\frac{\mathbf{\partial }_{k}h_{3}}{2h_{3}}-\frac{%
w_{k}h_{3}^{\ast }}{2h_{3}},\widehat{L}_{4k}^{3}=\frac{h_{4}\underline{h}_{4}%
}{-2h_{3}}n_{k}^{\ast },  \notag \\
\widehat{L}_{3k}^{4} &=&\frac{1}{2}n_{k}^{\ast },\widehat{C}_{33}^{3}=\frac{%
h_{3}^{\ast }}{2h_{3}},\widehat{C}_{44}^{3}=-\frac{h_{4}^{\ast }\underline{h}%
_{4}}{h_{3}},\ \widehat{C}_{33}^{4}=-\frac{h_{3}\underline{h}_{3}^{\circ }}{%
h_{4}\underline{h}_{4}},~\widehat{C}_{34}^{4}=\frac{h_{4}^{\ast }}{2h_{4}},%
\widehat{C}_{44}^{4}=\frac{\underline{h}_{4}^{\circ }}{2\underline{h}_{4}}.
\notag
\end{eqnarray}%
} We shall need also the values
\begin{equation}
\ \widehat{C}_{3}=\widehat{C}_{33}^{3}+\widehat{C}_{34}^{4}=\frac{%
h_{3}^{\ast }}{2h_{3}}+\frac{h_{4}^{\ast }}{2h_{4}},\widehat{C}_{4}=\widehat{%
C}_{43}^{3}+\widehat{C}_{44}^{4}=\frac{\underline{h}_{4}^{\circ }}{2%
\underline{h}_{4}}.  \label{aux3}
\end{equation}

\subsection{Coefficients for torsion of $\widehat{\mathbf{D}}$}

Using data (\ref{data1a}) for $\underline{w}_{i}=\underline{n}_{i}=0,$ the
coefficients $\Omega _{ij}^{a}=\mathbf{e}_{j}\left( N_{i}^{a}\right) -%
\mathbf{e}_{i}(N_{j}^{a})$, are computed
\begin{eqnarray*}
\Omega _{ij}^{a} &=&\mathbf{\partial }_{j}\left( N_{i}^{a}\right) -\partial
_{i}(N_{j}^{a})-N_{i}^{b}\partial _{b}N_{j}^{a}+N_{j}^{b}\partial
_{b}N_{i}^{a} \\
&=&\mathbf{\partial }_{j}\left( N_{i}^{a}\right) -\partial
_{i}(N_{j}^{a})-N_{i}^{3}(N_{j}^{a})^{\ast }-N_{i}^{4}(N_{j}^{a})^{\circ
}+N_{j}^{3}(N_{i}^{a})^{\ast }+N_{j}^{4}(N_{i}^{a})^{\circ } \\
&=&\mathbf{\partial }_{j}\left( N_{i}^{a}\right) -\partial
_{i}(N_{j}^{a})-w_{i}(N_{j}^{a})^{\ast }+w_{j}(N_{i}^{a})^{\ast },
\end{eqnarray*}%
and the nontrivial values for coefficients are  {\small
\begin{eqnarray}
\Omega _{12}^{3} &=&-\Omega _{21}^{3}=\mathbf{\partial }_{2}w_{1}-\partial
_{1}w_{2}-w_{1}w_{2}^{\ast }+w_{2}w_{1}^{\ast }=w_{1}^{\prime
}-w_{2}^{\bullet }-w_{1}w_{2}{}^{\ast }+w_{2}w_{1}^{\ast }{};  \notag \\
\Omega _{12}^{4} &=&-\Omega _{21}^{4}=\mathbf{\partial }_{2}n_{1}-\partial
_{1}n_{2}-w_{1}n_{2}^{\ast }+w_{2}n_{1}^{\ast }=n_{1}^{\prime
}-n_{2}^{\bullet }-w_{1}n_{2}^{\ast }{}+w_{2}n_{1}^{\ast }{}.  \label{omeg}
\end{eqnarray}%
} The nontrivial coefficients of d--torsion (\ref{dtors}) are $\widehat{T}%
_{\ ji}^{a}=-\Omega _{\ ji}^{a}$ (\ref{omeg}) and $\widehat{T}_{aj}^{c}=%
\widehat{L}_{aj}^{c}-e_{a}(N_{j}^{c}).$ We find for other types of
coefficients that
\begin{equation}
\widehat{T}_{\ jk}^{i}=\widehat{L}_{jk}^{i}-\widehat{L}_{kj}^{i}=0,~\widehat{%
T}_{\ ja}^{i}=\widehat{C}_{jb}^{i}=0,~\widehat{T}_{\ bc}^{a}=\ \widehat{C}%
_{bc}^{a}-\ \widehat{C}_{cb}^{a}=0.  \notag
\end{equation}%
We have such nontrivial N--adapted coefficients of d--torsion {\small
\begin{eqnarray}
\widehat{T}_{3k}^{3} &=&\widehat{L}_{3k}^{3}-e_{3}(N_{k}^{3})=\frac{\mathbf{%
\partial }_{k}h_{3}}{2h_{3}}-w_{k}\frac{h_{3}^{\ast }}{2h_{3}}-w_{k}^{\ast
}{},  \notag \\
\widehat{T}_{4k}^{3} &=&\widehat{L}_{4k}^{3}-e_{4}(N_{k}^{3})=-\frac{h_{4}%
\underline{h}_{4}}{2h_{3}}n_{k}^{\ast },\ \widehat{T}_{3k}^{4}=~\widehat{L}%
_{3k}^{4}-e_{3}(N_{k}^{4})=\frac{1}{2}n_{k}^{\ast }-n_{k}^{\ast }=-\frac{1}{2%
}n_{k}^{\ast },  \notag \\
\widehat{T}_{4k}^{4} &=&\widehat{L}_{4k}^{4}-e_{4}(N_{k}^{4})=\frac{\mathbf{%
\partial }_{k}(h_{4}\underline{h}_{4})}{2h_{4}\underline{h}_{4}}-w_{k}\frac{%
h_{4}^{\ast }}{2h_{4}}-n_{k}\frac{\underline{h}_{4}^{\circ }}{2\underline{h}%
_{4}},  \notag \\
-\widehat{T}_{12}^{3} &=&w_{1}^{\prime }-w_{2}^{\bullet }-w_{1}w_{2}^{\ast
}{}+w_{2}w_{1}^{\ast },\ -\widehat{T}_{12}^{4}=n_{1}^{\prime
}-n_{2}^{\bullet }-w_{1}n_{2}^{\ast }{}+w_{2}n_{1}^{\ast }{}.
\label{nontrtors}
\end{eqnarray}%
} If all coefficients (\ref{nontrtors}) are zero, then $\Gamma _{\ \alpha \beta
}^{\gamma }=\widehat{\mathbf{\Gamma }}_{\ \alpha \beta }^{\gamma }.$

\subsection{Calculation of the Ricci tensor}

Let us compute the values $\widehat{R}_{ij}=\widehat{R}_{\ ijk}^{k}$ from (%
\ref{riccid}) using (\ref{dcurv}),
{\small
$$\widehat{R}_{\ hjk}^{i} =\mathbf{e}_{k}\widehat{L}_{.hj}^{i}-\mathbf{e}_{j}%
\widehat{L}_{hk}^{i}+\widehat{L}_{hj}^{m}\widehat{L}_{mk}^{i}-\widehat{L}%
_{hk}^{m}\widehat{L}_{mj}^{i}-\widehat{C}_{ha}^{i}\Omega _{jk}^{a} = \mathbf{\partial }_{k}\widehat{L}_{.hj}^{i}-\partial _{j}\widehat{L}%
_{hk}^{i}+\widehat{L}_{hj}^{m}\widehat{L}_{mk}^{i}-\widehat{L}_{hk}^{m}%
\widehat{L}_{mj}^{i},$$
}
where  $\widehat{C}^{i}_{\ ha}=0$ and
$\mathbf{e}_{k}\widehat{L}_{hj}^{i} = \partial _{k}\widehat{L}%
_{hj}^{i}+N_{k}^{a}\partial _{a}\widehat{L}_{hj}^{i} = \partial _{k}\widehat{%
L}_{hj}^{i}+w_{k}\left( \widehat{L}_{hj}^{i}\right) ^{\ast }+n_{k}\left(
\widehat{L}_{hj}^{i}\right) ^{\circ }=\partial _{k}\widehat{L}_{hj}^{i}$.
 Taking derivatives of (\ref{nontrdc}), we obtain {\small
\begin{eqnarray*}
\partial _{1}\widehat{L}_{\ 11}^{1} &=&(\frac{g_{1}^{\bullet }}{2g_{1}}%
)^{\bullet }=\frac{g_{1}^{\bullet \bullet }}{2g_{1}}-\frac{\left(
g_{1}^{\bullet }\right) ^{2}}{2\left( g_{1}\right) ^{2}},\ \partial _{1}%
\widehat{L}_{\ 12}^{1}=(\frac{g_{1}^{\prime }}{2g_{1}})^{\bullet }=\frac{%
g_{1}^{\prime \bullet }}{2g_{1}}-\frac{g_{1}^{\bullet }g_{1}^{\prime }}{%
2\left( g_{1}\right) ^{2}},\  \\
\partial _{1}\widehat{L}_{\ 22}^{1} &=&(-\frac{g_{2}^{\bullet }}{2g_{1}}%
)^{\bullet }=-\frac{g_{2}^{\bullet \bullet }}{2g_{1}}+\frac{g_{1}^{\bullet
}g_{2}^{\bullet }}{2\left( g_{1}\right) ^{2}},\ \partial _{1}\widehat{L}_{\
11}^{2}=(-\frac{g_{1}^{\prime }}{2g_{2}})^{\bullet }=-\frac{g_{1}^{\prime
\bullet }}{2g_{2}}+\frac{g_{1}^{\bullet }g_{2}^{\prime }}{2\left(
g_{2}\right) ^{2}}, \\
\partial _{1}\widehat{L}_{\ 12}^{2} &=&(\frac{g_{2}^{\bullet }}{2g_{2}}%
)^{\bullet }=\frac{g_{2}^{\bullet \bullet }}{2g_{2}}-\frac{\left(
g_{2}^{\bullet }\right) ^{2}}{2\left( g_{2}\right) ^{2}},\ \partial _{1}%
\widehat{L}_{\ 22}^{2}=(\frac{g_{2}^{\prime }}{2g_{2}})^{\bullet }=\frac{%
g_{2}^{\prime \bullet }}{2g_{2}}-\frac{g_{2}^{\bullet }g_{2}^{\prime }}{%
2\left( g_{2}\right) ^{2}},
\end{eqnarray*}
\begin{eqnarray*}
\partial _{2}\widehat{L}_{\ 11}^{1} &=&(\frac{g_{1}^{\bullet }}{2g_{1}}%
)^{\prime }=\frac{g_{1}^{\bullet \prime }}{2g_{1}}-\frac{g_{1}^{\bullet
}g_{1}^{\prime }}{2\left( g_{1}\right) ^{2}},~\partial _{2}\widehat{L}_{\
12}^{1}=(\frac{g_{1}^{\prime }}{2g_{1}})^{\prime }=\frac{g_{1}^{\prime
\prime }}{2g_{1}}-\frac{\left( g_{1}^{\prime }\right) ^{2}}{2\left(
g_{1}\right) ^{2}}, \\
\partial _{2}\widehat{L}_{\ 22}^{1} &=&(-\frac{g_{2}^{\bullet }}{2g_{1}}%
)^{\prime }=-\frac{g_{2}^{\bullet ^{\prime }}}{2g_{1}}+\frac{g_{2}^{\bullet
}g_{1}^{^{\prime }}}{2\left( g_{1}\right) ^{2}},\ \partial _{2}\widehat{L}%
_{\ 11}^{2}=(-\frac{g_{1}^{\prime }}{2g_{2}})^{\prime }=-\frac{g_{1}^{\prime
\prime }}{2g_{2}}+\frac{g_{1}^{\bullet }g_{1}^{\prime }}{2\left(
g_{2}\right) ^{2}}, \\
\partial _{2}\widehat{L}_{\ 12}^{2} &=&(\frac{g_{2}^{\bullet }}{2g_{2}}%
)^{\prime }=\frac{g_{2}^{\bullet \prime }}{2g_{2}}-\frac{g_{2}^{\bullet
}g_{2}^{\prime }}{2\left( g_{2}\right) ^{2}},\partial _{2}\widehat{L}_{\
22}^{2}=(\frac{g_{2}^{\prime }}{2g_{2}})^{\prime }=\frac{g_{2}^{\prime
\prime }}{2g_{2}}-\frac{\left( g_{2}^{\prime }\right) ^{2}}{2\left(
g_{2}\right) ^{2}}.
\end{eqnarray*}
} For these values, there are only two nontrivial components,
\begin{eqnarray*}
\widehat{R}_{\ 212}^{1} &=&\frac{g_{2}^{\bullet \bullet }}{2g_{1}}-\frac{%
g_{1}^{\bullet }g_{2}^{\bullet }}{4\left( g_{1}\right) ^{2}}-\frac{\left(
g_{2}^{\bullet }\right) ^{2}}{4g_{1}g_{2}}+\frac{g_{1}^{\prime \prime }}{%
2g_{1}}-\frac{g_{1}^{\prime }g_{2}^{\prime }}{4g_{1}g_{2}}-\frac{\left(
g_{1}^{\prime }\right) ^{2}}{4\left( g_{1}\right) ^{2}}, \\
\widehat{R}_{\ 112}^{2} &=&-\frac{g_{2}^{\bullet \bullet }}{2g_{2}}+\frac{%
g_{1}^{\bullet }g_{2}^{\bullet }}{4g_{1}g_{2}}+\frac{\left( g_{2}^{\bullet
}\right) ^{2}}{4(g_{2})^{2}}-\frac{g_{1}^{\prime \prime }}{2g_{2}}+\frac{%
g_{1}^{\prime }g_{2}^{\prime }}{4(g_{2})^{2}}+\frac{\left( g_{1}^{\prime
}\right) ^{2}}{4g_{1}g_{2}}.
\end{eqnarray*}%
Considering $\widehat{R}_{11}=-\widehat{R}_{\ 112}^{2}$ and $\widehat{R}%
_{22}=\widehat{R}_{\ 212}^{1},$ when $g^{i}=1/g_{i},$ we find
\begin{equation*}
\widehat{R}_{1}^{1}=\widehat{R}_{2}^{2}=-\frac{1}{2g_{1}g_{2}} [
g_{2}^{\bullet \bullet }-\frac{g_{1}^{\bullet }g_{2}^{\bullet }}{2g_{1}}-%
\frac{\left( g_{2}^{\bullet }\right) ^{2}}{2g_{2}}+g_{1}^{\prime \prime }-%
\frac{g_{1}^{\prime }g_{2}^{\prime }}{2g_{2}}-\frac{(g_{1}^{\prime}) ^{2}}{%
2g_{1}}],
\end{equation*}%
which can be found in equations (\ref{eq2}).

The next step is to derive the equations (\ref{eq3}). We consider the third
formula in (\ref{dcurv}),
{\small
 $$\widehat{R}_{\ bka}^{c} = \frac{\partial \widehat{L}_{bk}^{c}}{\partial
y^{a}}-(\frac{\partial \widehat{C}_{ba}^{c}}{\partial x^{k}}+\widehat{L}%
_{dk}^{c\,}\widehat{C}_{ba}^{d}-\widehat{L}_{bk}^{d}\widehat{C}_{da}^{c}-%
\widehat{L}_{ak}^{d}\widehat{C}_{bd}^{c}) +\widehat{C}_{bd}^{c}\widehat{T}%
_{ka}^{d} = \frac{\partial \widehat{L}_{bk}^{c}}{\partial y^{a}}-\widehat{C}%
_{~ba|k}^{c}+\widehat{C}_{~bd}^{c}\widehat{T}_{~ka}^{d}.$$
}
 Contracting indices, we get $\widehat{R}_{bk}=\widehat{R}_{\ bka}^{a}=\frac{%
\partial L_{bk}^{a}}{\partial y^{a}}-\widehat{C}_{ba|k}^{a}+\widehat{C}%
_{bd}^{a}\widehat{T}_{ka}^{d}$. For $\widehat{C}_{b}:=\widehat{C}_{ba}^{c}$,
we write
$$\widehat{C}_{b|k} =\mathbf{e}_{k}\widehat{C}_{b}-\widehat{L}_{\ bk}^{d\,}%
\widehat{C}_{d}=\partial _{k}\widehat{C}_{b}-N_{k}^{e}\partial _{e}\widehat{C%
}_{b}-\widehat{L}_{\ bk}^{d\,}\widehat{C}_{d} =\partial _{k}\widehat{C}_{b}-w_{k}\widehat{C}_{b}^{\ast }-n_{k}\widehat{C}%
_{b}^{\circ }-\widehat{L}_{\ bk}^{d\,}\widehat{C}_{d}.$$
 We split conventionally $\widehat{R}_{bk}=\ _{[1]}R_{bk}+\ _{[2]}R_{bk}+\
_{[3]}R_{bk},$ where%
\begin{eqnarray*}
\ _{[1]}R_{bk} &=&\left( \widehat{L}_{bk}^{3}\right) ^{\ast }+\left(
\widehat{L}_{bk}^{4}\right) ^{\circ },\ _{[2]}R_{bk}=-\partial _{k}\widehat{C%
}_{b}+w_{k}\widehat{C}_{b}^{\ast }+n_{k}\widehat{C}_{b}^{\circ }+\widehat{L}%
_{\ bk}^{d\,}\widehat{C}_{d}, \\
\ _{[3]}R_{bk} &=&\widehat{C}_{bd}^{a}\widehat{T}_{ka}^{d}=\widehat{C}%
_{b3}^{3}\widehat{T}_{k3}^{3}+\widehat{C}_{b4}^{3}\widehat{T}_{k3}^{4}+%
\widehat{C}_{b3}^{4}\widehat{T}_{k4}^{3}+\widehat{C}_{b4}^{4}\widehat{T}%
_{k4}^{4}.
\end{eqnarray*}%
Using formulas (\ref{nontrdc}), (\ref{nontrtors}) and (\ref{aux3}), we
compute {\small
\begin{eqnarray*}
\ _{[1]}R_{3k} &=&\left( \widehat{L}_{3k}^{3}\right) ^{\ast }+\left(
\widehat{L}_{3k}^{4}\right) ^{\circ }=\left( \frac{\mathbf{\partial }%
_{k}h_{3}}{2h_{3}}-w_{k}\frac{h_{3}^{\ast }}{2h_{3}}\right) ^{\ast } = -w_{k}^{\ast }\frac{h_{3}^{\ast }}{2h_{3}}-w_{k}\left( \frac{h_{3}^{\ast }%
}{2h_{3}}\right) ^{\ast }+\frac{1}{2}\left( \frac{\mathbf{\partial }_{k}h_{3}%
}{h_{3}}\right) ^{\ast }, \\
\ _{[2]}R_{3k} &=&-\partial _{k}\widehat{C}_{3}+w_{k}\widehat{C}_{3}^{\ast
}+n_{k}\widehat{C}_{3}^{\circ }+\widehat{L}_{\ 3k}^{3\,}\widehat{C}_{3}+%
\widehat{L}_{\ 3k}^{4\,}\widehat{C}_{4}=-\partial _{k}(\frac{h_{3}^{\ast }}{%
2h_{3}}+\frac{h_{4}^{\ast }}{2h_{4}})+ \\
&&w_{k}(\frac{h_{3}^{\ast }}{2h_{3}}+\frac{h_{4}^{\ast }}{2h_{4}})^{\ast }+(%
\frac{\mathbf{\partial }_{k}h_{3}}{2h_{3}}-w_{k}\frac{h_{3}^{\ast }}{2h_{3}}%
)(\frac{h_{3}^{\ast }}{2h_{3}}+\frac{h_{4}^{\ast }}{2h_{4}})+\frac{1}{2}%
n_{k}^{\ast }\frac{\underline{h}_{4}^{\circ }}{2\underline{h}_{4}} \\
&=&w_{k}[\frac{h_{3}^{\ast \ast }}{2h_{3}}-\frac{3}{4}\frac{(h_{3}^{\ast
})^{2}}{(h_{3})^{2}}+\frac{h_{4}^{\ast \ast }}{2h_{4}}-\frac{1}{2}\frac{%
(h_{4}^{\ast })^{2}}{(h_{4})^{2}}-\frac{1}{4}\frac{h_{3}^{\ast }}{h_{3}}%
\frac{h_{4}^{\ast }}{h_{4}}]+n_{k}^{\ast }\frac{\underline{h}_{4}^{\circ }}{4%
\underline{h}_{4}} \\
&&+\left( \frac{\mathbf{\partial }_{k}h_{3}}{2h_{3}}+\frac{\mathbf{\partial }%
_{k}\underline{h}_{3}}{2\underline{h}_{3}}\right) (\frac{h_{3}^{\ast }}{%
2h_{3}}+\frac{h_{4}^{\ast }}{2h_{4}})-\frac{1}{2}\partial _{k}(\frac{%
h_{3}^{\ast }}{h_{3}}+\frac{h_{4}^{\ast }}{h_{4}}), \\
\ _{[3]}R_{3k} &=&\widehat{C}_{33}^{3}\widehat{T}_{k3}^{3}+\widehat{C}%
_{34}^{3}\widehat{T}_{k3}^{4}+\widehat{C}_{33}^{4}\widehat{T}_{k4}^{3}+%
\widehat{C}_{34}^{4}\widehat{T}_{k4}^{4} \\
&=&-\frac{h_{3}^{\ast }}{2h_{3}}\left[ \frac{\mathbf{\partial }_{k}h_{3}}{%
2h_{3}}-w_{k}\frac{h_{3}^{\ast }}{2h_{3}}-w_{k}^{\ast }{}\right] -\frac{%
h_{4}^{\ast }}{2h_{4}}\left[ \frac{\mathbf{\partial }_{k}(h_{4}\underline{h}%
_{4})}{2h_{4}\underline{h}_{4}}-w_{k}\frac{h_{4}^{\ast }}{2h_{4}}-n_{k}\frac{%
\underline{h}_{4}^{\circ }}{2\underline{h}_{4}}\right] \\
&=&w_{k}^{\ast }\frac{h_{3}^{\ast }}{2h_{3}}+w_{k}\left( \frac{(h_{3}^{\ast
})^{2}}{4(h_{3})^{2}}+\frac{(h_{4}^{\ast })^{2}}{4(h_{4})^{2}}\right) n_{k}%
\frac{h_{4}^{\ast }}{2h_{4}}\frac{\underline{h}_{4}^{\circ }}{2\underline{h}%
_{4}} -\frac{h_{3}^{\ast }}{2h_{3}}\frac{\mathbf{\partial }_{k}h_{3}}{2h_{3}}-%
\frac{h_{4}^{\ast }}{2h_{4}}\left( \frac{\mathbf{\partial }_{k}h_{4}}{2h_{4}}%
+\frac{\mathbf{\partial }_{k}\underline{h}_{4}}{2\underline{h}_{4}}\right).
\end{eqnarray*}%
} Summarizing, we get%
\begin{eqnarray}
\ \widehat{R}_{3k} &=&w_{k}\left[ \frac{h_{4}^{\ast \ast }}{2h_{4}}-\frac{1}{%
4}\frac{(h_{4}^{\ast })^{2}}{(h_{4})^{2}}-\frac{1}{4}\frac{h_{3}^{\ast }}{%
h_{3}}\frac{h_{4}^{\ast }}{h_{4}}\right] +n_{k}^{\ast }\frac{\underline{h}%
_{4}^{\circ }}{4\underline{h}_{4}}+n_{k}\frac{h_{4}^{\ast }}{2h_{4}}\frac{%
\underline{h}_{4}^{\circ }}{2\underline{h}_{4}}  \notag \\
&&+\frac{h_{4}^{\ast }}{2h_{4}}\frac{\mathbf{\partial }_{k}h_{3}}{2h_{3}}-%
\frac{1}{2}\frac{\partial _{k}h_{4}^{\ast }}{h_{4}}+\frac{1}{4}\frac{%
h_{4}^{\ast }\partial _{k}h_{4}}{(h_{4})^{2}}-\frac{h_{4}^{\ast }}{2h_{4}}%
\frac{\mathbf{\partial }_{k}\underline{h}_{4}}{2\underline{h}_{4}}.  \notag
\end{eqnarray}%
which is equivalent to (\ref{eq3}) if the conditions $n_{k}\underline{h}%
_{4}^{\circ }=\mathbf{\partial }_{k}\underline{h}_{4},$ see below formula (%
\ref{aux4}), are satisfied.

In a similar way, we compute $\ \widehat{R}_{4k}=\ _{[1]}R_{4k}+\
_{[2]}R_{4k}+\ _{[3]}R_{4k},$ where
\begin{eqnarray*}
&&\ _{[1]}R_{4k} =\left( \widehat{L}_{4k}^{3}\right) ^{\ast }+\left(
\widehat{L}_{4k}^{4}\right) ^{\circ },\ \ _{[2]}R_{4k} = -\partial _{k}%
\widehat{C}_{4}+w_{k}\widehat{C}_{4}^{\ast }+n_{k}\widehat{C}_{4}^{\circ }+%
\widehat{L}_{\ 4k}^{3\,}\widehat{C}_{3} \\
&& + \widehat{L}_{\ 4k}^{4\,}\widehat{C}_{4}, \ \ _{[3]}R_{4k} = \widehat{C}%
_{4d}^{a}\widehat{T}_{ka}^{d}=\widehat{C}_{43}^{3}\widehat{T}_{k3}^{3}+%
\widehat{C}_{44}^{3}\widehat{T}_{k3}^{4}+\widehat{C}_{43}^{4}\widehat{T}%
_{k4}^{3}+\widehat{C}_{44}^{4}\widehat{T}_{k4}^{4}.
\end{eqnarray*}
We get {\small
\begin{eqnarray*}
\ _{[1]}R_{4k}&=&-n_{k}^{\ast \ast }{}\frac{h_{4}}{2h_{3}}\underline{h}%
_{4}+n_{k}^{\ast }{}\left( -\frac{h_{4}^{\ast }}{2h_{3}}+\frac{h_{4}^{\ast
}h_{3}^{\ast }}{2(h_{3})^{2}}\right) \underline{h}_{4}+\frac{\mathbf{%
\partial }_{k}\underline{h}_{4}^{\circ }}{2\underline{h}_{4}}-\frac{%
\underline{h}_{4}^{\circ }\mathbf{\partial }_{k}\underline{h}_{4}}{2(%
\underline{h}_{4})^{2}}, \\
\ _{[2]}R_{4k} &=&w_{k}\left( -\frac{h_{4}^{\ast }}{2h_{4}}\frac{\underline{h%
}_{4}^{\circ }}{2\underline{h}_{4}}\right) -n_{k}^{\ast }{}\frac{h_{4}%
\underline{h}_{4}}{2h_{3}\underline{h}_{3}}\left( \frac{h_{3}^{\ast }}{2h_{3}%
}+\frac{h_{4}^{\ast }}{2h_{4}}\right) +n_{k}[\left( \frac{\underline{h}%
_{4}^{\circ }}{2\underline{h}_{4}}\right) ^{\circ }- \\
&&\frac{\underline{h}_{4}^{\circ }}{2\underline{h}_{4}}\frac{\underline{h}%
_{4}^{\circ }}{2\underline{h}_{4}}]+\frac{\mathbf{\partial }_{k}h_{4}}{2h_{4}%
}\frac{\underline{h}_{4}^{\circ }}{2\underline{h}_{4}}+\frac{\mathbf{%
\partial }_{k}\underline{h}_{4}}{2\underline{h}_{4}}\frac{\underline{h}%
_{4}^{\circ }}{2\underline{h}_{4}}-\frac{\partial _{k}\underline{h}%
_{4}^{\circ }}{2\underline{h}_{4}}+\frac{\underline{h}_{4}^{\circ }\partial
_{k}\underline{h}_{4}}{2(\underline{h}_{4})^{2}}, \\
\ _{[3]}R_{4k}&=&w_{k}\left( \frac{h_{4}^{\ast }}{2h_{4}}\frac{\underline{h}%
_{4}^{\circ }}{2\underline{h}_{4}}\right) +n_{k}(\frac{\underline{h}%
_{4}^{\circ }}{2\underline{h}_{4}})^{2}-\frac{\mathbf{\partial }_{k}h_{4}}{%
2h_{4}}\frac{\underline{h}_{4}^{\circ }}{2\underline{h}_{4}}-\frac{%
\underline{h}_{4}^{\circ }}{2\underline{h}_{4}}\frac{\mathbf{\partial }_{k}%
\underline{h}_{4}}{2\underline{h}_{4}}.
\end{eqnarray*}%
} We summarize the above three terms as follows {\small
\begin{eqnarray*}
&&\widehat{R}_{4k} =-n_{k}^{\ast \ast }{}\frac{h_{4}}{2h_{3}}\underline{h}%
_{4}+n_{k}^{\ast }{}\left( -\frac{h_{4}^{\ast }}{2h_{3}}+\frac{h_{4}^{\ast
}h_{3}^{\ast }}{2(h_{3})^{\ast }}-\frac{h_{4}^{\ast }h_{3}^{\ast }}{%
4(h_{3})^{\ast }}-\frac{h_{4}^{\ast }}{4h_{3}}\right) \underline{h}_{4} \\
&&+n_{k}\left( -\frac{\underline{h}_{4}^{\circ \circ }}{2\underline{h}_{4}}+%
\frac{(\underline{h}_{4}^{\circ })^{2}}{2(\underline{h}_{4})^{2}}+\left(
\frac{\underline{h}_{4}^{\circ }}{2\underline{h}_{4}}\right) ^{\circ }-\frac{%
\underline{h}_{4}^{\circ }}{2\underline{h}_{4}}\frac{\underline{h}%
_{4}^{\circ }}{2\underline{h}_{4}}+(\frac{\underline{h}_{4}^{\circ }}{2%
\underline{h}_{4}})^{2}\right) +\frac{\mathbf{\partial }_{k}\underline{h}%
_{4}^{\circ }}{2\underline{h}_{4}}-\frac{\underline{h}_{4}^{\circ }\mathbf{%
\partial }_{k}\underline{h}_{4}}{2(\underline{h}_{4})^{2}} \\
&& + \frac{\mathbf{\partial }_{k}h_{4}}{2h_{4}}\frac{\underline{h}%
_{4}^{\circ }}{2\underline{h}_{4}}+\frac{\mathbf{\partial }_{k}\underline{h}%
_{4}}{2\underline{h}_{4}}\frac{\underline{h}_{4}^{\circ }}{2\underline{h}_{4}%
}-\frac{\partial _{k}\underline{h}_{4}^{\circ }}{2\underline{h}_{4}}+\frac{%
\underline{h}_{4}^{\circ }\partial _{k}\underline{h}_{4}}{2(\underline{h}%
_{4})^{2}} -\frac{\mathbf{\partial }_{k}h_{4}}{2h_{4}}\frac{\underline{h}%
_{4}^{\circ }}{2\underline{h}_{4}}-\frac{\underline{h}_{4}^{\circ }}{2%
\underline{h}_{4}}\frac{\mathbf{\partial }_{k}\underline{h}_{4}}{2\underline{%
h}_{4}},
\end{eqnarray*}%
} and thus we have proved equations (\ref{eq4}).

For $\widehat{R}_{\ jka}^{i}=\frac{\partial \widehat{L}_{jk}^{i}}{\partial
y^{k}}-\left( \frac{\partial \widehat{C}_{ja}^{i}}{\partial x^{k}}+\widehat{L%
}_{lk}^{i}\widehat{C}_{ja}^{l}-\widehat{L}_{jk}^{l}\widehat{C}_{la}^{i}-%
\widehat{L}_{ak}^{c}\widehat{C}_{jc}^{i}\right) +\widehat{C}_{jb}^{i}%
\widehat{T}_{ka}^{b}$ from (\ref{dcurv}), we obtain zero values because $%
\widehat{C}_{jb}^{i}=0$ and $\widehat{L}_{jk}^{i}$ do not depend on $y^{k}.$
So, $\widehat{R}_{ja}=\widehat{R}_{\ jia}^{i}=0$.

Taking $\widehat{R}_{\ bcd}^{a}=\frac{\partial \widehat{C}_{.bc}^{a}}{%
\partial y^{d}}-\frac{\partial \widehat{C}_{.bd}^{a}}{\partial y^{c}}+%
\widehat{C}_{.bc}^{e}\widehat{C}_{.ed}^{a}-\widehat{C}_{.bd}^{e}\widehat{C}%
_{.ec}^{a}$ from (\ref{dcurv}) and contracting the indices in order to
obtain the Ricci coefficients, $\widehat{R}_{bc}=\frac{\partial \widehat{C}%
_{bc}^{d}}{\partial y^{d}}-\frac{\partial \widehat{C}_{bd}^{d}}{\partial
y^{c}}+\widehat{C}_{bc}^{e}\widehat{C}_{e}-\widehat{C}_{bd}^{e}\widehat{C}%
_{ec}^{d}$, we compute $\widehat{R}_{bc} =(\widehat{C}_{bc}^{3}) ^{\ast }+(%
\widehat{C}_{bc}^{4}) ^{\circ } -\partial _{c}\widehat{C}_{b}+\widehat{C}%
_{bc}^{3}\widehat{C}_{3}+\widehat{C}_{bc}^{4}\widehat{C}_{4} -\widehat{C}%
_{b3}^{3}\widehat{C}_{3c}^{3}-\widehat{C}_{b4}^{3}\widehat{C}_{3c}^{4}-%
\widehat{C}_{b3}^{4}\widehat{C}_{4c}^{3}-\widehat{C}_{b4}^{4}\widehat{C}%
_{4c}^{4}$. There are nontrivial values,{\small
\begin{eqnarray*}
\widehat{R}_{33} &=&\left( \widehat{C}_{33}^{3}\right) ^{\ast }+\left(
\widehat{C}_{33}^{4}\right) ^{\circ }-\widehat{C}_{3}^{\ast }+\widehat{C}%
_{33}^{3}\widehat{C}_{3}+\widehat{C}_{33}^{4}\widehat{C}_{4}-\widehat{C}%
_{33}^{3}\widehat{C}_{33}^{3}-2\widehat{C}_{34}^{3}\widehat{C}_{33}^{4} -%
\widehat{C}_{34}^{4}\widehat{C}_{43}^{4}  = \left( \frac{h_{3}^{\ast }}{2h_{3}}\right) ^{\ast }  \\ & & -\left( \frac{%
h_{3}^{\ast }}{2h_{3}}+\frac{h_{4}^{\ast }}{2h_{4}}\right) ^{\ast }+\frac{%
h_{3}^{\ast }}{2h_{3}}\left( \frac{h_{3}^{\ast }}{2h_{3}}+\frac{h_{4}^{\ast }%
}{2h_{4}}\right) -\left( \frac{h_{3}^{\ast }}{2h_{3}}\right) ^{2}-\left(
\frac{h_{4}^{\ast }}{2h_{4}}\right) ^{2} = -\frac{1}{2}\frac{h_{4}^{\ast \ast }}{h_{4}}+\frac{1}{4}\frac{%
(h_{4}^{\ast })^{2}}{(h_{4})^{2}}+\frac{1}{4}\frac{h_{3}^{\ast }}{h_{3}}%
\frac{h_{4}^{\ast }}{h_{4}}, \\
\widehat{R}_{44} &=&\left( \widehat{C}_{44}^{3}\right) ^{\ast }+\left(
\widehat{C}_{44}^{4}\right) ^{\circ }-\partial _{4}\widehat{C}_{4}+\widehat{C%
}_{44}^{3}\widehat{C}_{3}+\widehat{C}_{44}^{4}\widehat{C}_{4}-\widehat{C}%
_{43}^{3}\widehat{C}_{34}^{3}-2\widehat{C}_{44}^{3}\widehat{C}_{34}^{4} -%
\widehat{C}_{44}^{4}\widehat{C}_{44}^{4} \\
&=&-\frac{1}{2}\left( \frac{h_{4}^{\ast }}{h_{3}}\right) ^{\ast }\frac{%
\underline{h}_{4}}{\underline{h}_{3}}-\frac{h_{4}^{\ast }\underline{h}_{4}}{%
2h_{3}\underline{h}_{3}}\left( \frac{h_{3}^{\ast }}{2h_{3}}+\frac{%
h_{4}^{\ast }}{2h_{4}}-\frac{h_{4}^{\ast }}{h_{4}}\right) =-\frac{1}{2}\frac{h_{4}^{\ast \ast }}{h_{3}}\underline{h}_{4}+\frac{1}{4}%
\frac{h_{3}^{\ast }h_{4}^{\ast }}{(h_{3})^{2}}\underline{h}_{4}+\frac{1}{4}%
\frac{h_{4}^{\ast }}{h_{3}}\frac{h_{4}^{\ast }}{h_{4}}\underline{h}_{4}.
\end{eqnarray*}%
}

We get the nontrivial v--coefficients of the Ricci d--tensor,%
{\small
$$\widehat{R}_{~3}^{3} = \frac{1}{h_{3}\underline{h}_{3}}\widehat{R}_{33}=%
\frac{1}{2h_{3}h_{4}}[-h_{4}^{\ast \ast }+\frac{(h_{4}^{\ast })^{2}}{2h_{4}}+%
\frac{h_{3}^{\ast }h_{4}^{\ast }}{2h_{3}}]\frac{1}{\underline{h}_{3}},\
\widehat{R}_{~4}^{4} = \frac{1}{h_{4}\underline{h}_{4}}\widehat{R}_{44}=%
\frac{1}{2h_{3}h_{4}}[-h_{4}^{\ast \ast }+\frac{(h_{4}^{\ast })^{2}}{2h_{4}}+%
\frac{h_{3}^{\ast }h_{4}^{\ast }}{2h_{3}}]\frac{1}{\underline{h}_{3}},$$
}
i.e., equation (\ref{eq2}).

\subsection{Zero torsion conditions}

Let us analyze how to solve the equation
\begin{equation*}
\widehat{T}_{4k}^{4}=\widehat{L}_{4k}^{4}-e_{4}(N_{k}^{4})=\frac{\mathbf{%
\partial }_{k}(h_{4}\underline{h}_{4})}{2h_{4}\underline{h}_{4}}-w_{k}\frac{%
h_{4}^{\ast }}{2h_{4}}-n_{k}\frac{\underline{h}_{4}^{\circ }}{2\underline{h}%
_{4}}=0,
\end{equation*}%
which follows from formulas (\ref{nontrtors}) for a vanishing torsion for \ $%
\widehat{\mathbf{D}}.$ Taking any $\underline{h}_{4}$ for which
\begin{equation}
n_{k}\underline{h}_{4}^{\circ }=\mathbf{\partial }_{k}\underline{h}_{4},
\label{aux4}
\end{equation}%
 the condition $n_{k}\frac{h_{4}^{\ast }}{2h_{4}}\frac{\underline{h}%
_{4}^{\circ }}{2\underline{h}_{4}}-\frac{h_{4}^{\ast }}{2h_{4}}\frac{\mathbf{%
\partial }_{k}\underline{h}_{4}}{2\underline{h}_{4}}=0$ is satisfied. For
instance, parameterizing $\underline{h}_{4}=~^{h}\underline{h}_{4}(x^{k})%
\underline{h}(y^{4}),$ the equation (\ref{aux4}) is solved by any
 $\underline{h}(y^{4})=e^{\varkappa y^{4}}$ and $n_{k}=\varkappa \partial
_{k}[~^{h}\underline{h}_{4}(x^{k})]$,  for $\varkappa =cont$.

We conclude that for any $n_{k}$ and $\underline{h}_{4}$ related by
conditions (\ref{aux4}) the zero torsion conditions (\ref{nontrtors}) are
the same as for $\underline{h}_{4}=const.$ Using a similar proof from \cite%
{vexsol1,vexsol2}, it is possible to verify by straightforward computations
that $\widehat{T}_{\beta \gamma }^{\alpha }=0$ if the equations (\ref{lccond}%
) are solved.
\vskip 10pt

\end{document}